\documentclass[cleveref,thm-restate,UKenglish,a4paper,pdfa]{lipics-v2021}
\nolinenumbers

\EventEditors{Philip Bille, Seth Pettie, and Sabine Storandt}
\EventNoEds{3}
\EventLongTitle{34th Annual European Symposium on Algorithms (ESA 2026)}
\EventShortTitle{ESA 2026}
\EventAcronym{ESA}
\EventYear{2026}
\EventDate{August 31--September 4, 2026}
\EventLocation{L'Aquila, Italy}
\EventLogo{}
\SeriesVolume{388}
\ArticleNo{20}

\newif\ifarxiv
\arxivtrue

\usepackage{newunicodechar}
\usepackage{silence}
\newunicodechar{♢}{\tikz \node[inner sep=1.5,draw,diamond] {};}
\newunicodechar{☆}{\tikz \node[inner sep=1,draw,star,star point ratio=2] {};}
\newunicodechar{△}{\triangle}
\newunicodechar{⬜}{\kern 0.5pt\tikz \node[inner sep=1.7,draw,regular polygon,regular polygon sides=4] {};\kern 0.5pt}
\newunicodechar{◯}{\tikz[baseline=-3pt] \node[inner sep=1.7,draw,cloud,cloud puffs=4,cloud puff arc=190] {};}

\newunicodechar{⊥}{\bot}
\newunicodechar{•}{\item}
\newunicodechar{✓}{\checkmark}
\newunicodechar{✗}{\xmark}\def\xmark{\ding{55}}
\newunicodechar{…}{\dots}
\newunicodechar{≔}{\coloneqq}
\newunicodechar{⁻}{^-}
\newunicodechar{⁺}{^+}
\newunicodechar{₋}{_-}
\newunicodechar{₊}{_+}
\newunicodechar{ℓ}{\ell}
\newunicodechar{•}{\item}
\newunicodechar{…}{\dots}
\newunicodechar{≔}{\coloneqq}
\newif\ifnormopen\normopenfalse
\newunicodechar{‖}{\ifnormopen\rVert\normopenfalse\else\rVert\normopentrue\fi}
\newunicodechar{≤}{\leq}
\newunicodechar{≥}{\geq}
\newunicodechar{≰}{\nleq}
\newunicodechar{≱}{\ngeq}
\newunicodechar{⊕}{\oplus}
\newunicodechar{⊗}{\otimes}
\newunicodechar{≠}{\neq}
\newunicodechar{¬}{\neg}
\newunicodechar{≡}{\equiv}
\newunicodechar{₀}{_0}
\newunicodechar{₁}{_1}
\newunicodechar{₂}{_2}
\newunicodechar{₃}{_3}
\newunicodechar{₄}{_4}
\newunicodechar{₅}{_5}
\newunicodechar{₆}{_6}
\newunicodechar{₇}{_7}
\newunicodechar{₈}{_8}
\newunicodechar{₉}{_9}
\newunicodechar{ₚ}{_p}
\newunicodechar{ₙ}{_n}
\newunicodechar{ₐ}{_a}
\newunicodechar{ₑ}{_e}
\newunicodechar{ₕ}{_h}
\newunicodechar{ₖ}{_k}
\newunicodechar{ₗ}{_l}
\newunicodechar{ₘ}{_m}
\newunicodechar{ₛ}{_s}
\newunicodechar{ₜ}{_t}
\newunicodechar{ₓ}{_x}
\newunicodechar{⁰}{^0}
\newunicodechar{¹}{^1}
\newunicodechar{²}{^2}
\newunicodechar{³}{^3}
\newunicodechar{⁴}{^4}
\newunicodechar{⁵}{^5}
\newunicodechar{⁶}{^6}
\newunicodechar{⁷}{^7}
\newunicodechar{⁸}{^8}
\newunicodechar{⁹}{^9}
\newunicodechar{ⁿ}{^n}
\newunicodechar{∈}{\in}
\newunicodechar{∉}{\notin}
\newunicodechar{⊂}{\subset}
\newunicodechar{⊃}{\supset}
\newunicodechar{⊆}{\subseteq}
\newunicodechar{⊇}{\supseteq}
\newunicodechar{⊄}{\nsubset}
\newunicodechar{⊅}{\nsupset}
\newunicodechar{⊈}{\nsubseteq}
\newunicodechar{⊉}{\nsupseteq}
\newunicodechar{∪}{\cup}
\newunicodechar{∩}{\cap}
\newunicodechar{∀}{\forall}
\newunicodechar{∃}{\exists}
\newunicodechar{∄}{\nexists}
\newunicodechar{∨}{\vee}
\newunicodechar{∧}{\wedge}
\newunicodechar{⊼}{\bar{\wedge}}
\newunicodechar{⊽}{\bar{\vee}}
\newunicodechar{⌊}{\lfloor}
\newunicodechar{⌋}{\rfloor}
\newunicodechar{⌈}{\lceil}
\newunicodechar{⌉}{\rceil}
\newunicodechar{·}{\cdot}
\newunicodechar{∘}{\circ}
\newunicodechar{×}{\times}
\newunicodechar{↑}{\uparrow}
\newunicodechar{↓}{\downarrow}
\newunicodechar{→}{\rightarrow}
\newunicodechar{←}{\leftarrow}
\newunicodechar{⇒}{\Rightarrow}
\newunicodechar{⇐}{\Leftarrow}
\newunicodechar{↔}{\leftrightarrow}
\newunicodechar{⇔}{\Leftrightarrow}
\newunicodechar{↦}{\mapsto}
\newunicodechar{∅}{\emptyset}
\newunicodechar{∞}{\infty}
\newunicodechar{≅}{\cong}
\newunicodechar{≈}{\approx}
\newunicodechar{ℓ}{\ell}
\newunicodechar{𝟙}{\mathds{1}}
\newunicodechar{𝟘}{\mathds{0}}
\newunicodechar{↪}{\hookrightarrow}

\newunicodechar{𝔸}{\mathbb{A}}
\newunicodechar{𝔹}{\mathbb{B}}
\newunicodechar{ℂ}{\mathbb{C}}
\newunicodechar{𝔻}{\mathbb{D}}
\newunicodechar{𝔼}{\mathbb{E}}
\newunicodechar{𝔽}{\mathbb{F}}
\newunicodechar{𝔾}{\mathbb{G}}
\newunicodechar{ℍ}{\mathbb{H}}
\newunicodechar{𝕀}{\mathbb{I}}
\newunicodechar{𝕁}{\mathbb{J}}
\newunicodechar{𝕂}{\mathbb{K}}
\newunicodechar{𝕃}{\mathbb{L}}
\newunicodechar{𝕄}{\mathbb{M}}
\newunicodechar{ℕ}{\mathbb{N}}
\newunicodechar{𝕆}{\mathbb{O}}
\newunicodechar{ℙ}{\mathbb{P}}
\newunicodechar{ℚ}{\mathbb{Q}}
\newunicodechar{ℝ}{\mathbb{R}}
\newunicodechar{𝕊}{\mathbb{S}}
\newunicodechar{𝕋}{\mathbb{T}}
\newunicodechar{𝕌}{\mathbb{U}}
\newunicodechar{𝕍}{\mathbb{V}}
\newunicodechar{𝕎}{\mathbb{W}}
\newunicodechar{𝕏}{\mathbb{X}}
\newunicodechar{𝕐}{\mathbb{Y}}
\newunicodechar{ℤ}{\mathbb{Z}}

\newunicodechar{α}{\alpha}
\newunicodechar{β}{\beta}
\newunicodechar{γ}{\gamma}
\newunicodechar{Γ}{\Gamma}
\newunicodechar{δ}{\delta}
\newunicodechar{Δ}{\Delta}
\newunicodechar{ε}{\varepsilon}
\newunicodechar{ζ}{\zeta}
\newunicodechar{η}{\eta}
\newunicodechar{θ}{\theta}
\newunicodechar{Θ}{\Theta}
\newunicodechar{ι}{\iota}
\newunicodechar{κ}{\kappa}
\newunicodechar{λ}{\lambda}
\newunicodechar{Λ}{\Lambda}
\newunicodechar{μ}{\mu}
\newunicodechar{ν}{\nu}
\newunicodechar{ξ}{\xi}
\newunicodechar{Ξ}{\Xi}
\newunicodechar{π}{\pi}
\newunicodechar{Π}{\Pi}
\newunicodechar{ρ}{\rho}
\newunicodechar{σ}{\sigma}
\newunicodechar{Σ}{\Sigma}
\newunicodechar{τ}{\tau}
\newunicodechar{υ}{\upsilon}
\newunicodechar{ϒ}{\Upsilon}
\newunicodechar{φ}{\varphi}
\newunicodechar{ϕ}{\phi}
\newunicodechar{Φ}{\Phi}
\newunicodechar{χ}{\chi}
\newunicodechar{ψ}{\psi}
\newunicodechar{Ψ}{\Psi}
\newunicodechar{ω}{\omega}
\newunicodechar{Ω}{\Omega}

\newunicodechar{𝒜}{\mathcal{A}}
\newunicodechar{ℬ}{\mathcal{B}}
\newunicodechar{𝒞}{\mathcal{C}}
\newunicodechar{𝒟}{\mathcal{D}}
\newunicodechar{ℰ}{\mathcal{E}}
\newunicodechar{ℱ}{\mathcal{F}}
\newunicodechar{𝒢}{\mathcal{G}}
\newunicodechar{ℋ}{\mathcal{H}}
\newunicodechar{ℐ}{\mathcal{I}}
\newunicodechar{𝒥}{\mathcal{J}}
\newunicodechar{𝒦}{\mathcal{K}}
\newunicodechar{ℒ}{\mathcal{L}}
\newunicodechar{ℳ}{\mathcal{M}}
\newunicodechar{𝒩}{\mathcal{N}}
\newunicodechar{𝒪}{\mathcal{O}}
\newunicodechar{𝒫}{\mathcal{P}}
\newunicodechar{𝒬}{\mathcal{Q}}
\newunicodechar{ℛ}{\mathcal{R}}
\newunicodechar{𝒮}{\mathcal{S}}
\newunicodechar{𝒯}{\mathcal{T}}
\newunicodechar{𝒰}{\mathcal{U}}
\newunicodechar{𝒱}{\mathcal{V}}
\newunicodechar{𝒲}{\mathcal{W}}
\newunicodechar{𝒳}{\mathcal{X}}
\newunicodechar{𝒴}{\mathcal{Y}}
\newunicodechar{𝒵}{\mathcal{Z}}
\newunicodechar{𝒶}{\mathcal{a}}
\newunicodechar{𝒷}{\mathcal{b}}
\newunicodechar{𝒸}{\mathcal{c}}
\newunicodechar{𝒹}{\mathcal{d}}
\newunicodechar{ℯ}{\mathcal{e}}
\newunicodechar{𝒻}{\mathcal{f}}
\newunicodechar{ℊ}{\mathcal{g}}
\newunicodechar{𝒽}{\mathcal{h}}
\newunicodechar{𝒾}{\mathcal{i}}
\newunicodechar{𝒿}{\mathcal{j}}
\newunicodechar{𝓀}{\mathcal{k}}
\newunicodechar{𝓁}{\mathcal{l}}
\newunicodechar{𝓂}{\mathcal{m}}
\newunicodechar{𝓃}{\mathcal{n}}
\newunicodechar{ℴ}{\mathcal{o}}
\newunicodechar{𝓅}{\mathcal{p}}
\newunicodechar{𝓆}{\mathcal{q}}
\newunicodechar{𝓇}{\mathcal{r}}
\newunicodechar{𝓈}{\mathcal{s}}
\newunicodechar{𝓉}{\mathcal{t}}
\newunicodechar{𝓊}{\mathcal{u}}
\newunicodechar{𝓋}{\mathcal{v}}
\newunicodechar{𝓌}{\mathcal{w}}
\newunicodechar{𝓍}{\mathcal{x}}
\newunicodechar{𝓎}{\mathcal{y}}
\newunicodechar{𝓏}{\mathcal{z}}
\usepackage{listings}
\usepackage{enumerate}
\usepackage{booktabs}
\usepackage{amsmath,amssymb,amsthm,mathtools}
\usepackage{bm}
\usepackage{pgfplots}
\pgfplotsset{compat=1.18} 

\newcommand{\Bin}{\mathrm{Bin}}
\newcommand{\Var}{\mathrm{Var}}

\newcommand{\bucket}{\mathrm{bucket}}
\newcommand{\slot}{\mathrm{slot}}
\newcommand{\h}{h}
\newcommand{\s}{s}
\newcommand{\seed}{\mathrm{seed}}

\usepackage[commentsnumbered,ruled,vlined]{algorithm2e} 

    \SetCommentSty{commentFont}
    \SetKwProg{algo}{Algorithm}{:}{}
    \usepackage{etoolbox}
    \patchcmd{\SetProgSty}{ArgSty}{ProgSty}{}{}
    \SetProgSty{}
    \SetKw{And}{and}
    \SetKw{Or}{or}
    \DontPrintSemicolon

\newcommand{\parag}[1]{\vspace{-1ex}\subparagraph*{#1}}

\title{Non-Minimal \texorpdfstring{$k$}{k}-Perfect Hashing: Tight Lower Bounds and an Application to Fast Static Hash Tables}
\titlerunning{Non-Minimal $\bm{k}$-Perfect Hashing}

\author{Ragnar Groot Koerkamp}{Karlsruhe Institute of Technology, Germany}{ragnar.grootkoerkamp@gmail.com}{https://orcid.org/0000-0002-2091-1237}{}
\author{Stefan Hermann}{Karlsruhe Institute of Technology, Germany}{hermann@kit.edu}{https://orcid.org/0000-0001-9183-2926}{}
\author{Peter Sanders}{Karlsruhe Institute of Technology, Germany}{sanders@kit.edu}{https://orcid.org/0000-0003-3330-9349}{}
\author{Stefan Walzer}{Karlsruhe Institute of Technology, Germany}{stefan.walzer@kit.edu}{https://orcid.org/0000-0002-6477-0106}{}
\authorrunning{R. Groot Koerkamp, S. Hermann, P. Sanders, and S. Walzer}
\Copyright{Ragnar Groot Koerkamp, Stefan Hermann, Peter Sanders, and Stefan Walzer}
\keywords{Compressed Data Structures, \texorpdfstring{$k$}{k}-Perfect Hashing, Hash Table, Space Lower Bound}
\ccsdesc[500]{Theory of computation~Bloom filters and hashing}
\ccsdesc[500]{Theory of computation~Data structures design and analysis}%
\ccsdesc[500]{Information systems~Point lookups}%

\relatedversiondetails[cite={extended-version}]{Extended Version}{https://arxiv.org/abs/2607.07257}
\supplement{}
\supplementdetails[subcategory={}, cite=nonminimal-kphf-repo, swhdelimiter={\\\indent}, swhid={swh:1:dir:d8b8fa726bc8cc741a074c79b7d9bbf0ef0b69e7}]{\newline Software}{https://github.com/RagnarGrootKoerkamp/static-hash-sets}

\begin{document}

\maketitle
\begin{abstract}
A \emph{minimal perfect hash function} (minimal PHF) is a data structure mapping a static set of $n$ keys to $n$ bins without collisions.
Two natural generalizations are minimal $k$-PHFs where $n$ keys are mapped to $n/k$ bins of capacity $k$ each, and (non-minimal) PHFs with load factor $α < 1$ where the number of bins is increased by a factor of $1/α$, resulting in spare capacity.
    
While there has been a recent surge of interest in perfect hashing generally,
non-minimal $k$-PHFs have not been systematically studied despite a natural use case of speeding up static hash tables: The idea is that a small cache-resident $k$-PHF maps each key $x$ to a cache-line-sized bin of capacity $k$ where $x$ resides. Ideally, this yields a branchless lookup operation with a single cache miss working at high load factors for positive and negative queries alike.
    
Our main theoretical contribution is to determine tight space lower bounds
for $k$-PHFs for all pairs of $α \in (0,1]$ and $k \geq 1$. It turns out that
combining $α < 1$ and $k \geq 2$ drastically reduces the space of $k$-PHFs,
e.g. for $(k,α) = (16,0.8)$ the space lower bound is $0.027$ bits per key while for $(k,α) = (16,1.0)$ and $(k,α) = (1,0.8)$ the lower bounds are higher by factors of $\approx 8$ and $\approx 32$, respectively.
On the practical side, we develop a $k$-PHF based on PtrHash and tune it for use in static hash tables. 
Empirically, our implementation produces $k$-PHFs of size roughly $50\%$ above the lower bound.
A static hash set based on this $k$-PHF is consistently at least as fast as other hash
sets for negative and mixed queries. On two of the three tested
architectures it achieves
up to $1.5\times$ speedup for large $n\geq 30M$ where a $1$-PHF does not fit in
cache.
\end{abstract}

\section{Introduction}
\label{sec:org03bad15}
\begin{table}[t]
    \caption{Overview of static techniques and lower bounds. For the lower bounds we assume a large universe and omit lower order terms. For details about the techniques, we refer to the survey \cite{lehmann2025modern}.
    }
    \label{tab:overview}
    \def\arraystretch{1.2}
    \begin{tabular}{p{20mm}p{11.1cm}}
        \toprule
        $k=1$, $\alpha = 1$ &
            Bucket Placement \cite{fox1992faster, belazzougui2009hash, pibiri2021pthash, hermann2024phobic, grootkoerkamp2025ptrhash, beling2026phast}, Recursive Splitting \cite{esposito2020recsplit, lehmann2025consensus, bez2023high}, Fingerprinting \cite{beling2023fingerprinting, chapman2011meraculous, limasset2017fast, lehmann2024fast}, Hypergraph / Retrieval \cite{botelho2007external, czech1992optimal, majewski1996family, botelho2004new,genuzio2016fast, weaver2020constructing, lehmann2023sichash, lehmann2023shockhash, hermann2025morphishash}
        \vspace{1mm}\newline\textbf{Lower Bound:} $n \cdot \log₂(e)$ \cite{mehlhorn1982program}\\\hline
        $k=1 $, $\alpha < 1$ &
            Bucket Placement \cite{fox1992faster, belazzougui2009hash, pibiri2021pthash, hermann2024phobic, grootkoerkamp2025ptrhash, beling2026phast},  Hypergraph / Retrieval \cite{chazelle2004bloomier, lehmann2023sichash}
        \vspace{1mm}\newline\textbf{Lower Bound:} $n \cdot  \left( \log₂(e) + \left(\frac{1}{\alpha} - 1\right)\log₂(1 - \alpha) \right)$ \cite{belazzougui2009hash} \\\hline
        $k \geq 2 $, $\alpha = 1$ &
            Bucket Placement \& Recursive Splitting \& Thresholding \cite{hermann2025engineering}
        \vspace{1mm}\newline\textbf{Lower Bound:}     $n \cdot \left(\log₂(e) - \log₂(k^k/k!) / k\right)$ \cite{mairson1983program,belazzougui2009hash,kurpicz2023pachash} \\ \hline
        $k \geq 2 $, $\alpha < 1$
         &
            Bucket Placement \cite{belazzougui2009hash} and [\cref{sec:k-ptr}]
        \vspace{1mm}\newline\textbf{Lower Bound (new):}  $ n \cdot \left(\log₂\left(\frac{\lambda e}{\alpha k}\right) + \frac{\log₂ \zeta}{\alpha k} \right)$ [\cref{thm:precise-formula}] \newline where $\lambda$ s.t. for $𝔼_{Y \sim \mathrm{Pois}(λ)}[Y|Y ≤ k] = α k$ and $ζ:=e^{-λ}/\Pr[Y ≤ k]$.\\
    \bottomrule
    \end{tabular}
\end{table}

Given a universe of $u$ keys and an input set of $n$ keys, a $k$-perfect hash function ($k$-PHF) maps the input keys to $b$ \emph{bins} such that there are at most $k$ keys in each bin. The challenge is to represent the $k$-PHF as a data structure without storing the input set itself.
We refer to $\alpha := \frac{n}{bk}$ as the \emph{load} of a $k$-PHF and call a $k$-PHF \emph{minimal} when $\alpha=1$.
\cref{tab:overview}, as well as the survey \cite{lehmann2025modern}, give an overview of construction techniques and lower bounds.
In this paper,
we initiate the study of the most difficult case of non-minimal $k$-PHF with $k ≥ 2$.
We derive tight lower bounds on the space of a non-minimal $k$-PHF for all pairs of $α$ and $k$ (\cref{sec:lb}),
extend PtrHash \cite{grootkoerkamp2025ptrhash} into $k$-PtrHash, a fast and practical $k$-PHF with about $50\%$ space overhead (\cref{sec:k-ptr}), and
apply this $k$-PHF to obtain a fast static hash set for large inputs (\cref{s:application}).

\parag{Theory: Tight lower bounds.}
One way to construct a PHF is via a brute force algorithm:
try out seeds $s=0,1,\ldots$ of a hash function that randomly maps the keys to
the bins, until one is found that maps at most $k$ keys to each bin.
Once found, the data structure stores~$s$.
Storing $s$ takes roughly $\log_2 \frac 1q$ bits, where $q$ is the success probability of a single seed.
For the cases where $k=1$ \emph{or} $\alpha=1$, it was known that brute force is space optimal up to lower order terms \cite{belazzougui2009hash}.
The first theoretical contribution of this paper is to extend this result to all
combinations of $\alpha$ and $k$:

\begin{theorem}[informal version of \cref{thm:brute-force-is-optimal}]
    \label{thm:brute-force-is-optimal-informal}
    Assuming the universe of keys is sufficiently large, the brute force algorithm is space optimal for all $0<\alpha\leq 1$ and $k\geq 1$.
\end{theorem}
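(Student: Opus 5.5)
The plan is to prove a matching lower bound via the classical counting argument of Mehlhorn, as used for $k=1$ and for $\alpha=1$ in \cite{belazzougui2009hash}. Fix $n$, $b=n/(\alpha k)$ and a universe of size $u$. A single function $h\colon[u]\to[b]$ is a valid $k$-PHF for a set $S$ of size $n$ exactly when every bin receives at most $k$ keys of $S$. Any data structure with $m$ bits represents at most $2^m$ functions, and together these must cover all $\binom{u}{n}$ sets. This gives
\[
m \ge \log_2\binom{u}{n} - \log_2 \max_h N(h),
\]
where $N(h)$ is the number of $n$-subsets that $h$ maps with at most $k$ keys per bin. Write $u_j=|h^{-1}(j)|$. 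Then $N(h)$ is the coefficient of $z^n$ in $\prod_{j}\sum_{i\le k}\binom{u_j}{i}z^i$.

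The first step is to show that this product is maximised, up to lower-order factors, when the preimages are balanced, $u_j=u/b$. I would argue by log-concavity, or by a swapping or convexity argument: moving keys from a larger preimage to a smaller one does not decrease the coefficient. As $u\to\infty$ we have $\binom{u/b}{i}\approx (u/b)^i/i!$. After normalising by $\binom{u}{n}\approx u^n/n!$, this gives
\[
\frac{N(h)}{\binom{u}{n}} \lesssim \frac{n!}{b^n}\,[z^n]\Bigl(\sum_{i\le k} \frac{z^i}{i!}\Bigr)^{b}.
\]
The right-hand side is exactly the success probability $q$ of one random seed: throwing $n$ balls into $b$ bins uniformly, every bin ends up with at most $k$ balls. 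So the bound reads $m\ge\log_2(1/q)-o(n)$. Storing the first successful seed costs about $\log_2(1/q)+O(\log n)$ bits in expectation, so brute force matches the lower bound. This is the content of the theorem.

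The second step evaluates $\log_2(1/q)$ to obtain the explicit formula in the table. I would Poissonise. Take $Y_1,\dots,Y_b$ i.i.d.\ $\mathrm{Pois}(\lambda)$, so that
\[
q=\Pr\Bigl[\forall j: Y_j\le k \,\Bigm|\, \sum_j Y_j=n\Bigr]
 = \Pr[Y\le k]^b\cdot\frac{\Pr\bigl[\sum_j Y_j=n \mid \text{all } Y_j\le k\bigr]}{\Pr\bigl[\sum_j Y_j=n\bigr]}.
\]
I choose $\lambda$ so that $\mathbb{E}[Y\mid Y\le k]=\alpha k$. With this choice both probabilities in the ratio are local-CLT probabilities of polynomial size. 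This yields
\[
\log_2\frac1q = -b\log_2\Pr[Y\le k] + b\lambda\log_2 e - n\log_2\lambda + \log_2\frac{n^n}{n!} + O(\log n).
\]
Substituting $b=n/(\alpha k)$ and $\zeta=e^{-\lambda}/\Pr[Y\le k]$ should simplify this to $n\bigl(\log_2\frac{\lambda e}{\alpha k}+\frac{\log_2\zeta}{\alpha k}\bigr)$, which is routine algebra.

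I expect the main obstacle to be the first step: justifying rigorously that unbalanced preimage sizes cannot help. The convexity has to be uniform in $n$, and the approximation $\binom{u_j}{i}\approx u_j^i/i!$ has to be controlled when some $u_j$ are small. I would first try the standard trick of bounding $\binom{u_j}{i}\le u_j^i/i!$ directly. Then $\sum_{i\le k} u_j^i z^i/i!$ is handled through the generating function evaluated at the saddle point $z$, and AM–GM-type concavity in $u_j$ would finish it. I would also need to check that $\log$ of $\sum_{i\le k}(xz)^i/i!$ is concave in $x$ for fixed $z$; this holds because truncated exponentials are log-concave in the appropriate sense.
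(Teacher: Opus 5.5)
Your argument is sound for the informal statement, read as optimality up to lower-order terms. But the balancing step you actually make rigorous differs from the paper's, and it proves a weaker bound than \cref{thm:brute-force-is-optimal}.

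\textbf{The paper's route.} The balancing step there is an exact combinatorial lemma (\cref{lem:balanced-is-optimal}): moving one ball from a colour class of size $R$ to one of size $B\le R-2$ never decreases the number of admissible $n$-subsets. This is proved by reducing to two colours and comparing the $\binom{R-1}{k}\binom{B}{n-k-1}$ good subsets with the $\binom{R-1}{n-k-1}\binom{B}{k}$ bad ones. It is exactly your ``swapping'' option, and it gives $\mathrm{comp}(h)\le q_{\mathrm{bal}}\binom{u}{n}$ for every $u$ and every $h$. A separate chain-rule estimate then gives $q_{\mathrm{simp}}=\Theta(q_{\mathrm{bal}})$ for $u\ge n^2/\alpha$, so the bound is tight to $O(1)$ bits.

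\textbf{Your analytic fallback.} You bound every $h$ directly. Write $T_k(x)=\sum_{i\le k}x^i/i!$. Coefficientwise domination $\binom{u_j}{i}\le u_j^i/i!$, the bound $[z^n]F\le z^{-n}F(z)$, and Jensen give
\[
\frac{\mathrm{comp}(h)}{\binom{u}{n}}\le\Bigl(1-\frac{n}{u}\Bigr)^{-n}\frac{n!}{b^n}\,\inf_{w>0}w^{-n}T_k(w)^b .
\]
The concavity you were unsure of does hold on $[0,\infty)$: with $T_{-1}:=0$,
\[
T_{k-1}^2-T_kT_{k-2}=\frac{x^{k-1}}{k!}\Bigl(k+\sum_{i=1}^{k-1}(k-i)\frac{x^i}{i!}\Bigr)\ge 0 .
\]
This route avoids both the exchange lemma and the detour through balanced functions. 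The cost is that the infimum, attained at the saddle point $\lambda$, exceeds $[w^n]T_k(w)^b=b^nq_{\mathrm{simp}}/n!$ by the factor $1/\Pr[N_Z=n]=\Theta(\sqrt{n})$ when $\alpha<1$. So you only get $\log_2\frac{1}{q_{\mathrm{simp}}}-O(\log n)$. You also need the local limit theorem just to reconnect to $q_{\mathrm{simp}}$, whereas the paper needs it only for evaluating $q$ in \cref{l:pe}, not for optimality.

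\textbf{Recovering the $O(1)$ bound.} Your coefficientwise trick improves the paper in one place. Apply it to a balanced $h$ and combine with $\binom{u}{n}\ge(u-n)^n/n!$. This gives $q_{\mathrm{bal}}\le(1-n/u)^{-n}q_{\mathrm{simp}}\le 4q_{\mathrm{simp}}$ for $u\ge n^2$. That is the only direction optimality needs, and it is shorter than the appendix's two-sided chain-rule estimate. So the swapping lemma plus your coefficientwise bound recovers the full $O(1)$ version of \cref{thm:brute-force-is-optimal}.

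\textbf{Expected space.} Your covering bound is for the worst-case space of deterministic constructions, while brute force uses random space. To claim optimality you must also bound expected space over a uniformly random input: at most $2^{\ell+1}$ encodings of length $\le\ell$ cover at most a $2^{\ell+1}\max_h\mathrm{comp}(h)/\binom{u}{n}$ fraction of inputs. Then apply Yao's principle, as in the proof of \cref{thm:balanced-brute-force-is-optimal}.

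\textbf{Your second step.} It is not needed for this statement, and it is wrong as written. First, $\sum_jY_j\sim\mathrm{Pois}(b\lambda)$ with $b\lambda>n$, so $\Pr[\sum_jY_j=n]$ is exponentially small, not polynomial. Second, your expansion has the signs of $b\lambda\log_2e$ and $n\log_2\lambda$ reversed and is missing the term $-n\log_2(\alpha k)$. The correct expansion is
\[
\log_2\frac{1}{q}=b\log_2\zeta+n\log_2\frac{\lambda}{\alpha k}+\log_2\frac{n^n}{n!}+O(\log n).
\]
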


Hence, tight lower bounds can be attained by determining the probability $q$ that a random function $f : S → [b]$ from a set $S$ of $n$ keys to $b$ bins is $k$-perfect. Equivalently $q$ is the ratio of the number $P$ of $k$-perfect functions and the number $b^{-n}$ of all functions.

For $\alpha\in (0,1]$ and $k=1$ we have $P = \frac{b!}{(b-n)!}$ by standard counting arguments. A simple calculation using Stirling's approximation then yields the first two lower bounds shown in \cref{tab:overview}.
The case of $k ≥ 2$ and $α = 1$ is also easy: We have $P = \binom{n}{k\ k\ \cdots \ k} = \frac{n!}{(k!)^b}$ from which the third bound in \cref{tab:overview} follows.

The case of $k ≥ 2$ and $α < 1$ is more difficult. The issue is that non-minimal $k$-perfect functions are more varied with an unknown number of bins of the various loads in $\{0,…,k\}$. While we can still write a formula like
\[
    P = \sum_{\substack{x₁,…,x_b ∈ \{0,…,k\}\\x₁+…+x_b = n}} \binom{n}{x₁\,x₂\,…\,x_b},
\]
such a formula is not useful, for instance it does not reveal the asymptotic behaviour of $q$.
Previous work has also stated that the case of $k\geq 2$ and $\alpha \in (0,1)$ appears non-trivial \cite{belazzougui2009hash}.
Using a new approach, our main theoretical contribution is to determine the success probability $q$ for any combination of $k$ and $\alpha$. This gives:

\begin{theorem}[informal version of \cref{thm:precise-formula}]
  For any $α ∈ (0,1)$ and $k ≥ 1$, the tight space lower bound for non-minimal $k$-PHF is
\[
    n \cdot \left(\log₂\left(\frac{\lambda e}{\alpha k}\right) + \frac{\log₂ \zeta}{\alpha k} \right)
\]
where $\lambda = λ(α,k)$ and $ζ = ζ(α,k)$ are as defined in \cref{tab:overview}.
\end{theorem}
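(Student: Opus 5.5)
The plan is to combine \cref{thm:brute-force-is-optimal} with a precise asymptotic evaluation of the success probability $q = P/b^n$. By \cref{thm:brute-force-is-optimal}, the optimal space equals $\log₂(1/q)$ up to lower order terms. So it suffices to show
\[
\log₂\frac1q = n\left(\log₂\left(\frac{\lambda e}{\alpha k}\right) + \frac{\log₂ \zeta}{\alpha k}\right) + O(\log n),
\]
where $b = n/(\alpha k)$.

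\textbf{Step 1: generating function.} I would rewrite the sum over load vectors as a coefficient extraction. Put $e_k(x) := \sum_{j=0}^{k} x^j/j!$. Since $\binom{n}{x₁\,…\,x_b} = n!/\prod_i x_i!$, we get
\[
P = n!\,[x^n]\, e_k(x)^b .
\]

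\textbf{Step 2: exponential tilting.} For any $\mu>0$ the coefficient can be written probabilistically. Let $Z_1,\dots,Z_b$ be i.i.d.\ with law $\Pr[Z=j] = \mu^j/(j!\,e_k(\mu))$ for $j\in\{0,…,k\}$; this is $\mathrm{Pois}(\mu)$ conditioned on being at most $k$. Then
\[
[x^n]\, e_k(x)^b = e_k(\mu)^b\, \mu^{-n}\, \Pr\Bigl[\textstyle\sum_i Z_i = n\Bigr].
\]

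\textbf{Step 3: choice of $\mu$.} I choose $\mu=\lambda$ so that $𝔼[Z]=n/b=\alpha k$. Existence and uniqueness of $\lambda$ follow because $\lambda\mapsto 𝔼_{Y\sim\mathrm{Pois}(\lambda)}[Y\mid Y\le k]$ is continuous and strictly increasing. It is increasing because its derivative is $\Var(Z)/\lambda>0$, and it runs from $0$ to $k$. Hence $\alpha\in(0,1)$ yields a unique $\lambda>0$.

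Next I identify the prefactor: $e_k(\lambda) = e^{\lambda}\Pr[Y\le k] = 1/\zeta$.

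\textbf{Step 4: local limit theorem (main obstacle).} I need $\Pr[\sum_i Z_i = n] = \Theta(1/\sqrt b)$. The variables $Z_i$ are bounded and integer valued, and their support contains $0$ and $1$, so the lattice is aperiodic. The mean of $\sum_i Z_i$ is exactly $n$ and the variance is $b\,\Var(Z)=\Theta(b)$. The local central limit theorem for i.i.d.\ lattice variables then gives the claimed order, so this factor contributes only $O(\log n)$ bits. This is the one genuinely analytic step, and I expect it to be the main obstacle. The technical points are handling $n=\alpha k b$ being an integer, and ensuring uniformity if $\alpha$ is allowed to depend on $n$. For fixed $\alpha,k$ the standard theorem applies directly.

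\textbf{Step 5: assembling.} Combining the steps gives
\[
q = \frac{n!}{b^n}\,\zeta^{-b}\,\lambda^{-n}\,\Theta(b^{-1/2}).
\]
Taking logarithms yields
\[
\log(1/q) = n\log(b\lambda) - \log n! + b\log\zeta + O(\log n).
\]
Stirling's formula gives $\log n! = n\log(n/e)+O(\log n)$, so
\[
\log(1/q) = n\log\frac{b\lambda e}{n} + b\log \zeta + O(\log n).
\]
Finally, substituting $b/n = 1/(\alpha k)$ gives exactly $n\bigl(\log₂(\lambda e/(\alpha k)) + \log₂\zeta/(\alpha k)\bigr)$ up to $O(\log n)$.

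Tightness needs two directions. The upper bound comes from brute force storing a seed of about $\log₂(1/q)+O(1)$ bits in expectation. The matching lower bound is given by \cref{thm:brute-force-is-optimal}.
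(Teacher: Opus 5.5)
Your proposal is correct and follows the paper's proof (\cref{thm:precise-formula} via \cref{l:pe}) essentially step for step. Your coefficient extraction $[x^n]e_k(x)^b$ with exponential tilting at $\mu=\lambda$ is the same identity the paper obtains by conditioning the independent truncated Poisson variables $Z_i$ on $N_Z=n$, and the same local limit theorem and Stirling computation follow; you additionally justify the existence and uniqueness of $\lambda$, which the paper takes for granted.

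Two small remarks. Invoking \cref{thm:brute-force-is-optimal} requires the assumption $u \ge n^2/\alpha$. Also, since $b=\Theta(n)$, the factor $\sqrt{2\pi n}$ from Stirling cancels the $\Theta(1/\sqrt{b})$ from the local limit theorem, so your $O(\log n)$ error improves to the $O_{\alpha,k}(1)$ bits that the paper states.
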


The theorem implies that the optimal brute-force algorithm
results in a situation where the number of keys in each bin follows a \emph{truncated Poisson distribution} with parameter $λ$ (taking only values $\leq k$) and expectation $αk$. (The distribution is slightly skewed by the global condition that the bin loads sum to $n$.)

We also derive a second result on the space requirement of $k$-PHFs that neglects constant factors but has the advantage of yielding closed form expressions. The result involves the \emph{slack} $s = ⌊(1-α)k+1⌋$, which is intuitively the average number of free slots per bin. The result reveals two regimes, depending on whether the slack is higher or lower than the standard deviation $\sqrt{k}$ of bin loads in random assignments.

\begin{theorem}[Informal version of \cref{t:asymp}]
    Let $S(α,k)$ be the tight space lower bound for a $k$-PHF in bits per bin. There are the following cases:\\
    \begin{tabular}{l@{\ }l}
        If $s\ll \sqrt k$ & then $S(α,k) = \Theta(\log(\sqrt k /s))$.\\
        If $s=\Theta(\sqrt k)$ & then $S(α,k) = \Theta(1)$.\\
        If $s\gg \sqrt k$ & then $S(α,k) = \exp(-\Theta(s^2/k))$.\\
    \end{tabular}
\end{theorem}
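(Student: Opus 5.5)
We derive the three regimes from the precise formula of \cref{thm:precise-formula}. Since $n=\alpha k b$, the bound in bits per bin is
\[
S(\alpha,k) = \alpha k\log_2\frac{\lambda e}{\alpha k} + \log_2\zeta .
\]
A cleaner equivalent form is $S = -\log_2 q/b$. Here $q$ is, up to lower-order terms, $\Pr[\text{all } Y_i\le k]$ divided by $\Pr[\sum Y_i = n]$ for i.i.d.\ $Y_i\sim\mathrm{Pois}(\lambda)$, after the usual exponential tilting. Per bin this gives
\[
S(\alpha,k) = D\bigl(\mathrm{Pois}_{\le k}(\lambda)\,\big\|\,\mathrm{Pois}(\alpha k)\bigr)
= -\log_2\Pr_{Y\sim\mathrm{Pois}(\lambda)}[Y\le k] - \alpha k\log_2\frac{\lambda}{\alpha k} - (\lambda-\alpha k)\log_2 e .
\]
One can check this agrees with the formula above. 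The point of this form is that $S = \min D(P\,\|\,\mathrm{Pois}(\alpha k))$ over distributions $P$ on $\{0,\dots,k\}$ with mean $\alpha k$, i.e.\ an I-projection. So we may bound $S$ from above by plugging in any feasible $P$, and from below by a Chernoff/Sanov-type inequality.

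The argument then splits according to $t := s/\sqrt k$, where $s\approx(1-\alpha)k$ is the slack and $\alpha k = k-s$. Throughout, $Z\sim\mathrm{Pois}(k-s)$ is approximately normal with mean $k-s$ and standard deviation $\sqrt k$ (for $s\le k/2$; for larger $s$ the exponential regime follows from standard Poisson tail bounds directly).

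\textbf{Case $s\gg\sqrt k$.} Here $\Pr[Z>k]=\exp(-\Theta(s^2/k))$ by Poisson tail bounds. Truncating barely changes things: $\lambda=\alpha k+o(\cdot)$, and the divergence is of order $\Pr[Z>k]$ times a polynomial factor. For the upper bound take $P=\mathrm{Pois}(\alpha k)$ conditioned on $\le k$ and then re-centred. For the lower bound use $D\ge -\log_2\Pr[Z\le k]\cdot c$, or Pinsker applied to the mass that must be moved. Both give $\exp(-\Theta(s^2/k))$, with the polynomial factors absorbed into the $\Theta$ in the exponent.

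\textbf{Case $s=\Theta(\sqrt k)$.} $P$ must differ from $\mathrm{Pois}(\alpha k)$ on a constant fraction of its mass, so $D$ is bounded below by a constant. Conversely, the tilted-truncated $P$ with $\lambda=\alpha k+O(\sqrt k)$ has $D=O(1)$ by the normal approximation (local CLT for the Poisson).

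\textbf{Case $s\ll\sqrt k$.} Now $P$ has mean $k-s$ but support $\le k$, so by Markov on $k-Y$ a constant fraction of its mass sits in a window of width $O(s)$ below $k$. The reference measure gives each point at most $O(1/\sqrt k)$ probability. Hence $D\ge \Omega(\log(\sqrt k/s))$, via the entropy bound $D\ge \sum_{\text{window}}P\log(P/\text{ref})$ and the fact that a mass-$c$ distribution on $O(s)$ points has entropy at most $\log O(s)$. For the matching upper bound take $P$ uniform on $\{k-2s,\dots,k\}$; then $D=\log(\sqrt k/s)+O(1)$, since the Poisson pmf there is $\Theta(1/\sqrt k)$. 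The degenerate case $\alpha=1$, $s=1$ recovers $\Theta(\log k)$, consistent with the $\log_2(k^k/k!)/k$ bound.

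\textbf{Main obstacle.} The hardest step is justifying the identification of the formula of \cref{thm:precise-formula} with the I-projection/divergence form, and making the normal approximations uniform in $k$ and $s$. I would first try to verify the identity algebraically, using $\zeta=e^{-\lambda}/\Pr[Y\le k]$. After that, all estimates reduce to Poisson pmf bounds obtained from Stirling's formula.
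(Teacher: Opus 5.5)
Your proposal is correct, and it takes a genuinely different route from the paper.

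\textbf{How the paper argues.} The paper proves \cref{t:asymp} without using \cref{thm:precise-formula}. It works directly with the multinomial loads $X_1,\dots,X_b$:
negative association ($q_{\mathrm{simp}}\le\Pr[X_1\le k]^b$) gives the medium and light lower bounds;
a chain-rule ``never fall behind'' argument gives the heavy upper bound, and via phantom keys also the medium one;
counting good bins $X_i\in[k-2s,k]$ with a binomial coupling and a Kullback--Leibler Chernoff bound gives the heavy lower bound;
an explicit scheme that entropy-codes and repairs the overflowing bins gives the light upper bound.

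\textbf{How you argue.} You let \cref{thm:precise-formula}, and hence the local limit theorem inside \cref{l:pe}, absorb all multi-bin effects. The question then becomes a single-bin variational problem. The per-bin bound is $D(P^*\|Q)$ with $P^*=\mathrm{Pois}_{\le k}(\lambda)$ and $Q=\mathrm{Pois}(\alpha k)$, and $P^*$ is the I-projection of $Q$ onto laws on $\{0,\dots,k\}$ with mean $\alpha k$. Each of the paper's arguments has a short single-bin counterpart:
data processing on the event $\{0,\dots,k\}$ gives $D\ge-\ln\Pr[Z\le k]$, which plays the role of negative association;
a near-uniform law on $\{k-2s,\dots,k\}$ replaces the chain-rule argument;
your Markov-plus-entropy bound on that same window replaces the good-bin coupling.

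\textbf{Trade-off.} Your route is cleaner and avoids the paper's rounding and coupling hand-waving. It also pins down the leading constants in the heavy and light regimes. In exchange, the paper's proof does not depend on \cref{l:pe}, and its light-regime upper bound exhibits a concrete repair-based construction.

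\textbf{Details to fix.}
\begin{enumerate}
\item Your expanded formula has a sign slip. Since $\ln(P^*(i)/Q(i))=\ln\zeta+\alpha k+i\ln\frac{\lambda}{\alpha k}$, the divergence equals $-\log_2\Pr[Y\le k]+\alpha k\log_2\frac{\lambda}{\alpha k}-(\lambda-\alpha k)\log_2 e$. This is indeed $\alpha k$ times the formula of \cref{thm:precise-formula}.
\item The same affine-in-$i$ log-ratio makes your ``main obstacle'' a one-liner. For every $P$ on $\{0,\dots,k\}$ with mean $\alpha k$, we have $D(P\|Q)-D(P\|P^*)=\mathbb{E}_P[\ln(P^*/Q)]=D(P^*\|Q)$.
\item In the light regime, only factors polynomial in $s/\sqrt k$ may be absorbed into $\exp(-\Theta(s^2/k))$. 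A factor like $\log k$ may not, because $s^2/k$ can grow arbitrarily slowly. So do the re-centring by tilting. Tilting $Q$ conditioned on $\{0,\dots,k\}$ back to mean $\alpha k$ yields exactly $P^*$. Its cost is $O(p^2s^2/k)=o(p)$ beyond $-\ln(1-p)$, where $p=\Pr[Z>k]$, so $D(P^*\|Q)=(1+o(1))p$.
\item In the heavy lower bound, the terms outside the window $W$ are not individually nonnegative. By the log-sum inequality they still contribute at least $P(W^c)\ln P(W^c)\ge-1/e$, which suffices.
\item The case $\alpha=1$ lies outside \cref{thm:precise-formula}. There you must actually invoke the known minimal $k$-PHF bound, not just use it as a consistency check.
\end{enumerate}
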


\parag{Practice: $k$-PHF using $k$-PtrHash.}
The three relevant performance metrics of a PHF are construction throughput, space usage, and query throughput.
A variety of algorithms exists for the case $k=1$, covering a large spectrum of trade-offs between the three metrics.
While some techniques are extremely close to the lower bound \cite{lehmann2025consensus} ($\alpha=1$), others offer fast construction and queries \cite{grootkoerkamp2025ptrhash, beling2026phast}.
We refer to the recent survey on minimal $1$-PHFs for details \cite{lehmann2025modern}.
A number of methods was recently generalized to minimal $k$-PHFs for $k\geq 2$ \cite{hermann2025engineering},
while the case $(k\geq 2, \alpha<1)$ has only received minimal attention \cite{belazzougui2009hash}.
The only technique that has so far been applied to all combinations of $\alpha$ and $k$ is \emph{bucket placement}.
In this paper, we implement $k$-PtrHash (\cref{sec:k-ptr}), a natural generalization of the PtrHash bucket placement technique \cite{grootkoerkamp2025ptrhash} that is tuned for high query throughput and space efficiency.

Bucket placement works by first partitioning the key set into small buckets.
Buckets are then processed one-by-one:
For each bucket, a \emph{seed} (of a hash function) is found where all keys of that bucket are hashed (placed) to the bins such that there are at most $k$ keys in each bin.
For $k=1$, this allows a natural greedy algorithm that chooses for each bucket the first working seed, and this method was also used for $k\geq 2$ in \cite{hermann2025engineering} and \cite{belazzougui2009hash}.
For $k\geq 2$, however, this is suboptimal and makes placing future keys harder than
needed:
whereas for $k=1$ the only ``state'' is the number of full bins, for $k\geq 2$,
there is a ``quality'' to the distribution of the keys so far.
For example, consider $k=2$ and $\alpha=1$.
At the halfway point, we might have filled exactly one of the two slots in each bin, in which case there is still a lot of freedom to map the remaining keys.
If, on the other hand, exactly half of the bins are filled while the other half is empty, mapping the remaining keys is much more restricted.
Thus, one should prefer seeds that result in a more even ``spread'' of the keys over the bins, and we introduce a \emph{scoring function} to express this preference.

Our implementation $k$-PtrHash reaches as low as $50\%$ space overhead over the lower bound, while reaching a throughput of over 300 million queries per second.

\parag{Application: a static hash table based on $k$-PHF.}
A natural application of $k$-PHFs are static hash sets%
\footnote{In fact, the practical part of this paper was motivated by Deacon
\cite{deacon}, which is a tool that stores a static subset of $300$ million of
the $k$-mers in a human genome and then queries this set to determine whether pieces of DNA are human or not.}
(and tables):
Given a set of keys, one can first construct a $k$-PHF and then place the keys into their bins.
A natural choice for $k$ is the number of keys that fit into a cache line.
Given a query key, we evaluate the $k$-PHF, retrieve the single corresponding cache line and then answer the query using SIMD.
A non-minimal $8$-PHF with $\alpha = 0.9$ requires $7\times$ times less space than a $1$-PHF  with $\alpha = 0.9$.
The reduced space usage allows the $k$-PHF to fit in the cache for larger
inputs.
For large inputs, the $k$-PHF-set is consistently the fastest for negative and
mixed queries and achieves up to $1.5\times$ higher query throughput on two of
the three tested architectures (\cref{s:application}).

\section{Tight Lower Bounds for non-minimal \texorpdfstring{$\bm k$}{k}-PHFs}
\label{sec:lb}
\begin{figure}[t]
    \centering
    \input{plots/space_plot}
    \caption{Space lower bound for different load factors and $k$.}
    \label{fig:space_plot}
\end{figure}

In this section we determine tight lower bounds on the space requirement of non-minimal \texorpdfstring{$k$}{k}-PHFs. In \cref{sec:brute-force-optimal} we show that a simple brute force algorithm is space-optimal, which is neither surprising nor particularly difficult. What \emph{is} difficult is computing the expected space usage of that brute force algorithm, which we do in \cref{sec:brute-force-space-usage}. Our resulting formula is exact (up to lower order terms) but unwieldy. We therefore also derive explicit formulas that neglect constant factors in \cref{sec:space-usage-explicit-formulas}. These use completely different arguments.

\begin{table}[t]
\centering
\caption{Space lower bounds for different $k$ and $\alpha$ in bits per key,
  as computed by the script in \cref{app:sagemath-script}.}
\label{tab:space_values}
\begin{tabular}{l@{\hspace{1cm}}rrrrrrr}
  \toprule
  $\alpha$ & $k=1$ & $k=2$ & $k=4$ & $k=8$ & $k=16$ & $k=32$ & $k=64$ \\
  \midrule
  $0.20$ & 0.154983 & 0.031120 & 0.002591 & 0.000041 & $\approx 0$ & $\approx 0$ & $\approx 0$ \\
  $0.40$ & 0.337247 & 0.112213 & 0.024445 & 0.002680 & 0.000082 & $\approx 0$ & $\approx 0$ \\
  $0.60$ & 0.561410 & 0.244721 & 0.084103 & 0.020956 & 0.003182 & 0.000199 & 0.000002 \\
  $0.80$ & 0.862213 & 0.456247 & 0.210448 & 0.082945 & 0.026996 & 0.006774 & 0.001132 \\
  $0.90$ & 1.073592 & 0.621795 & 0.326312 & 0.154184 & 0.064989 & 0.024040 & 0.007553 \\
  $0.95$ & 1.215225 & 0.739728 & 0.416518 & 0.216630 & 0.103742 & 0.045522 & 0.018156 \\
  $0.99$ & 1.375585 & 0.880564 & 0.533189 & 0.306808 & 0.168119 & 0.087660 & 0.043398 \\
  $1.00$ & 1.442695 & 0.942695 & 0.588936 & 0.355096 & 0.208329 & 0.119672 & 0.067619 \\
  \bottomrule
\end{tabular}
\end{table}

\subsection{Brute Force is Space-Optimal}
\label{sec:brute-force-optimal}

\def\comp{\mathrm{comp}}
Once constructed, a $k$-PHF behaves like a function $f : [u] → [b]$ where $f(x)$ is the bin returned for the key $x$. The $k$-PHF must be \emph{compatible} with its input set $S ⊆ [u]$, meaning $|f^{-1}(i) ∩ S| ≤ k$ for all $i ∈ [b]$. Let $\comp(f)$ be the number of input sets of size $n$ that are compatible with $f$. Assume for simplicity that $b$ divides $u$ and call $f$ \emph{balanced} if $|f^{-1}(1)| = … = |f^{-1}(b)|$.

\pagebreak
Consider the following intuitive statement.
\begin{restatable}{lemma}{balancedOptimal}
    \label{lem:balanced-is-optimal}
    $\comp(f)$ is maximised if $f$ is balanced.
\end{restatable}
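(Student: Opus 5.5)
We write $\comp(f)$ in terms of the preimage sizes $m_i := |f^{-1}(i)|$, which satisfy $m_1+\dots+m_b = u$. A compatible input set picks $x_i ≤ k$ keys from each preimage $f^{-1}(i)$ with $\sum_i x_i = n$. Hence
\[
  \comp(f) \;=\; \sum_{\substack{x_1,\dots,x_b \in \{0,\dots,k\}\\ x_1+\dots+x_b = n}} \;\prod_{i=1}^{b} \binom{m_i}{x_i},
\]
which is the coefficient of $z^n$ in $\prod_{i=1}^b p_{m_i}(z)$, where $p_m(z) := \sum_{x=0}^{k} \binom{m}{x} z^x$ is the truncated binomial polynomial. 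The claim is therefore purely about the multiset $(m_1,\dots,m_b)$. The balanced choice $m_i = u/b$ should maximise this coefficient among all nonnegative integer vectors with sum $u$.

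The plan is a smoothing (transfer) argument. Suppose $m_1 ≥ m_2 + 2$. We show that replacing $(m_1,m_2)$ by $(m_1-1,m_2+1)$ does not decrease $\comp(f)$. Repeating this terminates at the balanced vector, since $\sum_i m_i^2$ strictly decreases and the only vector with all $|m_i - m_j| ≤ 1$ and $b \mid u$ is the balanced one.

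To carry out one smoothing step, fix the other bins. Let $c_t \ge 0$ be the coefficient of $z^{n-t}$ in $\prod_{i\ge 3} p_{m_i}(z)$. Then
\[
  \comp(f) = \sum_t c_t\, a_t(m_1,m_2), \qquad a_t(m_1,m_2) := \sum_{x+y=t,\;0\le x,y\le k} \binom{m_1}{x}\binom{m_2}{y}.
\]
Because every $c_t$ is nonnegative, it suffices to prove the two-bin inequality $a_t(m_1-1,m_2+1) ≥ a_t(m_1,m_2)$ for every $t$.

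For the two-bin inequality, apply Pascal's rule $\binom{m_1}{x} = \binom{m_1-1}{x} + \binom{m_1-1}{x-1}$ and $\binom{m_2+1}{y} = \binom{m_2}{y} + \binom{m_2}{y-1}$. Writing $M = m_1 - 1$ and $N = m_2$, so that $M \ge N+1$, the terms $\binom{M}{x}\binom{N}{y}$ cancel, and the difference becomes
\[
  a_t(M,N+1) - a_t(M+1,N) = \sum_{x+y=t} \Bigl[\binom{M}{x}\binom{N}{y-1} - \binom{M}{x-1}\binom{N}{y}\Bigr],
\]
where the sum runs over $x,y \in \{0,\dots,k\}$. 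Substituting $x' = x-1$ in the second sum and $y' = y-1$ in the first, both sums range over pairs $(x',y')$ with $x'+y' = t-1$. The first sum requires $x' \le k$ and $y' \le k-1$; the second requires $x' \le k-1$ and $y' \le k$. The difference reduces to
\[
  \binom{M}{k}\binom{N}{t-1-k} \;-\; \binom{M}{t-1-k}\binom{N}{k}.
\]
With $j := t-1-k \le k$, this is $\binom{M}{k}\binom{N}{j} - \binom{M}{j}\binom{N}{k}$. It is nonnegative because $\binom{M}{k}/\binom{M}{j} \ge \binom{N}{k}/\binom{N}{j}$ for $M \ge N$ and $j \le k$. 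That inequality follows because the ratio is a product of factors $(M-i)/(i+1)$ over $i=j,\dots,k-1$, each increasing in $M$. The zero cases, such as $N < k$, are checked directly.

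The main obstacle is this boundary bookkeeping: getting the truncation ranges exactly right after the Pascal expansion, including the edge cases $t \le k$ and $t > 2k$. A cleaner alternative I would try first is to view $p_m(z)$ coefficient-wise as a log-concave sequence and invoke a known majorization or Schur-concavity result. The direct telescoping above seems safer, however, and is the route I would commit to.
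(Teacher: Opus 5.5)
Your proposal is correct and follows essentially the paper's route. The paper also moves one element from a larger class ($R \ge B+2$) to a smaller one and reduces to two colours by conditioning on the other colours. It then shows that the net change $\binom{R-1}{k}\binom{B}{n-k-1}-\binom{R-1}{n-k-1}\binom{B}{k}$ is nonnegative, which is exactly your $\binom{M}{k}\binom{N}{j}-\binom{M}{j}\binom{N}{k}$: you obtain it via Pascal's rule, while the paper counts ``good'' and ``bad'' subsets containing the recoloured ball. You additionally spell out the termination of the smoothing (via $\sum_i m_i^2$ and $b \mid u$) and the zero cases, which the paper dismisses as standard or trivial.
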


\def\qsimp{q_{\mathrm{simp}}}
\def\qbal{q_{\mathrm{bal}}}

By standard arguments it suffices to prove the following simpler lemma.
\begin{lemma}
    Consider a set of $u$ distinguishable balls of various colours, including $R$ red balls and $B$ blue balls with $R ≥ B+2$. Consider the number of subsets of size $n$ that involve at most $k$ balls of each colour. This number is non-decreasing if we recolour a red ball blue.
\end{lemma}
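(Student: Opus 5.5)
We prove the lemma by conditioning on everything outside the red and blue balls, which reduces it to an inequality between products of binomial coefficients. Let $C$ be the set of balls that are neither red nor blue. Every admissible $n$-subset decomposes into $T ∩ C$, $r$ red balls and $b'$ blue balls. So the count is
\[
  N(R,B) = \sum_{m} A_m \cdot g_m(R,B), \qquad g_m(R,B) = \sum_{\substack{0 ≤ r,b' ≤ k\\ r+b' = n-m}} \binom{R}{r}\binom{B}{b'} .
\]
Here $A_m$ is the number of $m$-subsets of $C$ using at most $k$ balls of each colour in $C$. Recolouring one red ball blue changes $(R,B)$ to $(R-1,B+1)$ and leaves every $A_m$ untouched. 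Since $A_m ≥ 0$, it suffices to show $g_m(R-1,B+1) ≥ g_m(R,B)$ for each fixed $m$, i.e.\ for each fixed total $t = n-m$.

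To compare the two sums we isolate the recoloured ball $x$. Let $R' = R-1$ be the number of red balls other than $x$. Split subsets by whether they contain $x$.
\begin{itemize}
  \item Subsets without $x$ contribute identically before and after recolouring.
  \item Subsets with $x$ contribute $\sum \binom{R'}{r-1}\binom{B}{b'}$ before recolouring, with $r ≤ k$, and $\sum \binom{R'}{r}\binom{B}{b'-1}$ after, with $b' ≤ k$.
\end{itemize}
Writing $j$ for the number of \emph{other} red balls and $i$ for the number of other blue balls, both sums run over $i+j = t-1$. The before-sum is $D_{\mathrm{red}} = \sum_{j ≤ k-1,\ i ≤ k} \binom{R'}{j}\binom{B}{i}$ and the after-sum is $D_{\mathrm{blue}} = \sum_{j ≤ k,\ i ≤ k-1} \binom{R'}{j}\binom{B}{i}$. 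The pairs common to both ranges cancel. What remains is to show
\[
  \binom{R'}{k}\binom{B}{t-1-k} \;≥\; \binom{R'}{t-1-k}\binom{B}{k},
\]
where the left term comes from $j=k$ and the right term from $i=k$. Each term is present only when $0 ≤ t-1-k ≤ k-1$, and otherwise both vanish or the inequality is trivial.

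Set $a = t-1-k < k$ and $R' ≥ B+1 > B$. The inequality is equivalent to
\[
  \frac{\binom{R'}{k}}{\binom{R'}{a}} \;≥\; \frac{\binom{B}{k}}{\binom{B}{a}}
\]
whenever $\binom{B}{a} > 0$; if $\binom{B}{a} = 0$, the left side of the required inequality is zero only if the right side is too. The ratio $\binom{N}{k}/\binom{N}{a}$ equals $\prod_{l=a+1}^{k} \frac{N-l+1}{l}$. Each factor is increasing in $N$, and the product is nonnegative. Hence the ratio is nondecreasing in $N$, which gives the inequality. Edge cases where some binomials vanish, such as $B < k$, follow the same way since the left side is then at least $0 =$ the right side.

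The main obstacle is the bookkeeping in the cancellation step: matching the index ranges exactly so that only the two boundary terms survive. The hypothesis $R ≥ B+2$ is used precisely to get $R' = R-1 ≥ B+1 > B$, which makes the monotonicity of the ratio point in the correct direction.
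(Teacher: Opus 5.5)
Your proof is correct and is essentially the paper's argument: the paper also reduces to red and blue balls only, splits subsets by whether they contain the recoloured ball, and compares exactly your two surviving boundary counts, $\binom{R-1}{k}\binom{B}{n-k-1}$ (subsets that become valid) against $\binom{R-1}{n-k-1}\binom{B}{k}$ (subsets that become invalid), proving the inequality by expanding the ratio into factorials, which is the same computation as your monotonicity of $\binom{N}{k}/\binom{N}{a}$ in $N$. One small wording fix: for the monotonicity step, the correct reason is that every factor $\frac{N-l+1}{l}$ is positive once $N \ge k$, which covers the remaining case $k \le B < R-1$ after you treat $B<k$ separately; nonnegativity of the product alone would not imply that the product is nondecreasing.
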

\begin{proof}
    It suffices if we prove the lemma assuming that red and blue are the only available colours. This is because for whatever subset $S$ of balls of the other colours we consider, the number of ways to extend $S$ to a valid subset overall is non-decreasing if we do the recolouring.
    
    Hence assume $u = R+B$ and fix a special red ball that is to be recoloured. We may also assume $B ≥ k$ and $k < n ≤ 2k$ as otherwise the claim is trivial. The subsets of $n$ balls fall into the following categories.
    \begin{itemize}
        \item neutral subsets where the special ball is not drawn,
        \item neutral subsets where the colour of the special ball does not make a difference,
        \item $\binom{R-1}{k}\binom{B}{n-k-1}$ good subsets where the special ball would have been a red ball to many,
        \item $\binom{R-1}{n-k-1}\binom{B}{k}$ bad subsets where the special ball will be a blue too many.
    \end{itemize}
    We compute that the good cases outnumber the bad cases.
    \begin{align*}
        \frac{\text{good}}{\text{bad}} &= \frac{\binom{R-1}{k}\binom{B}{n-k-1}}{\binom{R-1}{n-k-1}\binom{B}{k}} = \frac{(R - 1 - n + k + 1)! (B - k)!}{(R - 1 - k)! (B - n + k +1)!}\\
        &=\frac{(B-k) · \ldots · (B-n+k+2)}{(R-1-k)· … · (R-1-n+k+2)}
        \geq 1\cdot \ldots \cdot1 = 1 \qedhere
    \end{align*}
\end{proof}

Let now $\qbal = \qbal(u,n,b,k)$ be the probability that a uniformly random input set is compatible with a fixed balanced function, or equivalently%
\footnote{To see the equivalence, first define $\qbal$ as the probability that a random input set is compatible with a random balanced function. Then note the symmetry between all input sets and the symmetry between all balanced functions. These symmetries imply that we can condition on any input set or any balanced function without affecting $\qbal$.},
the probability that a fixed input set is compatible with a uniformly random balanced function.

\begin{theorem}
    \label{thm:balanced-brute-force-is-optimal}
    The \emph{balanced brute force algorithm} (see below) constructs a $k$-PHF with expected space $\log₂ \frac 1\qbal + 𝒪(1)$, with $\qbal$ as defined above, and no (randomised) construction can undercut this bound on expected space by more than $𝒪(1)$ bits.
\end{theorem}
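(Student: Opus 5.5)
The proof splits into an upper bound, which is achieved by the balanced brute force algorithm, and a matching lower bound via a counting/encoding argument.

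\textbf{Upper bound.} The balanced brute force algorithm fixes a family of independent, uniformly random balanced functions $f_0, f_1, f_2, \dots \colon [u] \to [b]$, given for instance by shared randomness or an idealized hash family. It tries $s = 0, 1, \dots$ until $f_s$ is compatible with the input set $S$, and stores $s$.

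By the footnote's symmetry argument, each trial succeeds independently with probability exactly $\qbal$. The seed $s$ is therefore geometric with mean $1/\qbal$. We store $s$ with a prefix-free code (e.g.\ Elias-$\gamma$, or a code of length $\lceil\log_2(s+1)\rceil$ plus length information). Jensen's inequality gives $\mathbb{E}[\log_2(s+1)] \le \log_2 \mathbb{E}[s+1] = \log_2(1/\qbal)$. The overhead for self-delimitation is $\mathcal{O}(\log\log(1/\qbal))$ with Elias codes. To get a pure $\mathcal{O}(1)$ overhead, I would instead assume $n, u, b, k$ are known to the query algorithm and use a fixed-length field. In that case I would truncate at $s < c/\qbal$ and fall back to a rarely used escape, which costs $\mathcal{O}(1)$ bits in expectation because $\Pr[s \ge c/\qbal] \le e^{-c}$ while the fallback costs only polynomially many bits. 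I expect this bookkeeping about how the parameters are encoded to be the fiddly part, but not a deep one.

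\textbf{Lower bound.} Let a (randomised) construction produce, for each input set $S$, a bit string $D_S$ (a random variable) that determines a function $f_{D}$ compatible with $S$. Fix the randomness $r$ first. Each string $d$ is compatible with at most $\comp(f_d) \le \comp(f_{\mathrm{bal}}) = \qbal \binom{u}{n}$ input sets, where the inequality is \cref{lem:balanced-is-optimal}.

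Under a uniformly random input $S$, the induced distribution $p$ on strings therefore satisfies $p(d) \le \qbal \cdot |\{S : D_S = d\}| / \binom{u}{n} \cdot \ldots$. More cleanly, I would use the following inequality: for a prefix-free code, $\mathbb{E}_S[|D_S|] \ge \mathbb{E}_S[-\log_2 \Pr_S[D_{S'} = D_S]]$. The probability $\Pr_{S'}[D_{S'} = d]$ is at most $\Pr_{S'}[S' \text{ compatible with } f_d] \le \qbal$. This gives $\mathbb{E}|D_S| \ge \log_2(1/\qbal) - \mathcal{O}(1)$, where the $\mathcal{O}(1)$ absorbs the difference between a prefix-free and a non-prefix-free code (at most about one bit plus $\log$ of Kraft slack). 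Averaging over $r$ keeps the bound.

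The bound holds on average over a random input. Hence some fixed input set $S$ must also incur at least this much expected space, which gives the worst-case-over-inputs statement.

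**Main obstacle.** The main obstacle is making the entropy step rigorous for strings that are not prefix-free. I would handle it by charging $\log_2$ of the length as an $\mathcal{O}(1)$-additive or by appealing to the standard fact that the expected length of a uniquely decodable encoding is at least the entropy. In this setting the strings need not be uniquely decodable as an encoding of $S$, which is why I use the inequality $\Pr[D = d] \le \qbal$ directly, together with $\sum_d 2^{-|d|}$ being bounded.
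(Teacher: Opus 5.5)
Your overall plan follows the paper's proof. The upper bound uses a geometric seed. The lower bound uses the observation, via \cref{lem:balanced-is-optimal}, that every output serves at most a $q_{\mathrm{bal}}$ fraction of inputs, followed by averaging over the randomness, which is the easy direction of Yao's principle. Your bound $\Pr_S[D_S = d] \le q_{\mathrm{bal}}$ for every single string $d$ is a more explicit version of the paper's passage from the worst-case bound $|\mathcal{F}| \ge 1/q_{\mathrm{bal}}$ to random inputs.

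\textbf{The step that fails is the $\mathcal{O}(1)$ encoding of the seed.} Take a fixed field of $\lceil\log_2(c/q_{\mathrm{bal}})\rceil$ bits and an escape taken with probability about $e^{-c}$. The expected overhead is then about $\log_2 c + e^{-c}F$, where $F$ is the fallback cost. $F$ is unbounded: the fallback must still identify a seed $s \ge c/q_{\mathrm{bal}}$, which needs about $\log_2(1/q_{\mathrm{bal}})$ bits, and you yourself estimate it as polynomially many. So for constant $c$ the term $e^{-c}F$ is not $\mathcal{O}(1)$. Letting $c$ grow enough to suppress it makes $\log_2 c$ unbounded instead. Recursing with the same scheme after an escape does not help either, because each level pays the full field again.

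The missing idea is to pay the fixed-length part only once and push the tail into a unary part. This is Rice coding, as the paper uses. Let $m = \lfloor\log_2(1/q_{\mathrm{bal}})\rfloor$, store $s \bmod 2^m$ in $m$ bits, and store $\lfloor s/2^m\rfloor$ in unary. Since $\Pr[s \ge j2^m] \le (1-q_{\mathrm{bal}})^{j2^m} \le e^{-j/2}$, the unary part has expected length $\mathcal{O}(1)$.

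A smaller issue is in your lower bound for strings that are not prefix-free. The sum $\sum_d 2^{-|d|}$ over all binary strings diverges. Moreover, for general one-to-one codes the expected length can fall below the Shannon entropy by an unbounded amount (of order the logarithm of the entropy). So the remark about Kraft slack does not justify the $\mathcal{O}(1)$ loss. What saves you is exactly the pointwise bound you already have. Fewer than $2^L$ strings have length less than $L$, and each has probability at most $q_{\mathrm{bal}}$, so $\Pr[|D_S| < L] < 2^L q_{\mathrm{bal}}$. With $h := \lfloor\log_2(1/q_{\mathrm{bal}})\rfloor$ this gives
\[
  \mathbb{E}|D_S| \ge \sum_{L=1}^{h}\bigl(1 - 2^L q_{\mathrm{bal}}\bigr) \ge h - 2 \ge \log_2\frac{1}{q_{\mathrm{bal}}} - 3,
\]
with no prefix-freeness assumption needed.
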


\begin{proof}
    The balanced brute force algorithm tries a sequence $f₁,f₂,…$ of uniformly random balanced functions and stores the smallest index $S$ for which $f_S$ is compatible with the input set. This seed has distribution $S \sim \mathrm{Geom}(\qbal)$ and can be stored in expected space $\log₂ \frac 1\qbal + 𝒪(1)$ using Rice encoding.
    
    Now we show that no other construction can do significantly better. Consider any deterministic construction $𝒟$ first. Let $ℱ ⊆ [b]^{[u]}$ be the family of functions that $𝒟$ may produce over all possible input sets combined. Any $f ∈ ℱ$ satisfies $\comp(f) ≤ \qbal·\binom{u}{n}$ by \cref{lem:balanced-is-optimal} and the definition of $\qbal$. Moreover, $\sum_{f ∈ ℱ} \comp(f) ≥ \binom{u}{n}$ as every input set must be compatible with some $f ∈ ℱ$. This implies $|ℱ| ≥ \frac 1\qbal$. To represent every possible $f ∈ ℱ$, at least $\log₂ |ℱ| ≥ \log₂ \frac{1}{\qbal}$ bits are needed in the worst case.
    
    It is not hard to see that if the input is uniformly random (rather than worst case), the expected space requirement drops by at most $𝒪(1)$ bits.\footnote{This uses that if $Y \sim \mathrm{Unif}(\{1,…,N\})$ then $𝔼[\log₂ Y] = \log₂ N - 𝒪(1)$.} By Yao's Theorem, $\log₂ \frac 1\qbal - 𝒪(1)$ is therefore also a lower bound for the expected space requirement of any randomised algorithm.
\end{proof}

Now consider the related \emph{simple brute force} algorithm that tries uniformly random functions that are not necessarily balanced. Under a weak assumption simple brute force is also optimal.

\begin{restatable}{theorem}{simpleOptimal}
    \label{thm:brute-force-is-optimal}
    Assume $u ≥ n²/α$. Then simple brute force is space optimal in the same sense as balanced brute force (see \cref{thm:balanced-brute-force-is-optimal}).
\end{restatable}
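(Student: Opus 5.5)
Since \cref{thm:balanced-brute-force-is-optimal} already shows that $\log₂\frac1\qbal - 𝒪(1)$ bits is a lower bound for every construction, simple brute force is optimal once its expected space is $\log₂\frac{1}{\qsimp} + 𝒪(1)$ with $\log₂\frac1\qsimp \le \log₂\frac1\qbal + 𝒪(1)$. Here $\qsimp$ is the probability that a fixed $n$-set $S$ is compatible with a uniformly random $f:[u]\to[b]$. The seed of simple brute force is $\mathrm{Geom}(\qsimp)$ and, as before, Rice encoding stores it in $\log₂\frac1\qsimp+𝒪(1)$ expected bits. So the task reduces to showing $\qsimp \ge c\cdot\qbal$ for an absolute constant $c>0$ whenever $u\ge n^2/α$.

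I would compare both quantities to the same idealised experiment. For a fixed $S$, the values $f(x)$, $x\in S$, under a uniformly random $f$ are i.i.d.\ uniform on $[b]$. Hence $\qsimp$ is exactly the probability $p$ that $n$ independent uniform throws into $b$ bins give every bin load $\le k$; it does not depend on $u$.

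For $\qbal$, I would use the second description: a fixed input set $S$ and a uniformly random balanced $f$. Write $m=u/b$ for the bin size. Assigning the keys of $S$ sequentially, the $j$-th key lands in bin $i$ with probability $(m-c_i)/(u-j+1)$, where $c_i<j\le n$ is the number of earlier keys in bin $i$. Thus a particular assignment $(y_1,\dots,y_n)$ has probability $\prod_j \frac{m-c_{i_j}}{u-j+1}$, which is at most $\prod_j \frac{m}{u-j+1} = b^{-n}\prod_{j<n}\bigl(1-\tfrac{j}{u}\bigr)^{-1}$.

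Summing over the compatible assignments gives $\qbal\le p\cdot\prod_{j<n}(1-j/u)^{-1}\le p\cdot\exp\bigl(\tfrac{n^2}{u}(1+o(1))\bigr)$. Under $u\ge n^2/α$ we have $n^2/u\le α\le1$, so $\qbal\le e^{1+o(1)}\qsimp$. This gives $\log₂\frac1\qsimp\le\log₂\frac1\qbal+𝒪(1)$, and simple brute force matches the lower bound up to $𝒪(1)$ bits.

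The main obstacle is getting the per-key factor $\frac{m-c}{u-j+1}$ versus $\frac1b$ right, and checking that the cumulative product really is bounded by a constant. This is where the hypothesis $u\ge n^2/α$ is used, and it could even be weakened to $u=\Omega(n^2)$. A secondary point is the divisibility assumption $b\mid u$. If it fails, I would restrict to a sub-universe of size $b\lfloor u/b\rfloor$ and argue that this changes the bounds by only $𝒪(1)$ bits. This likely needs a remark rather than a separate proof.
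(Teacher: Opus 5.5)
Your proposal is correct and is essentially the paper's argument. Both fix $S$ and compare, factor by factor via the chain rule, the probability that a random balanced function realises a given $k$-perfect assignment with $b^{-n}$; your factor $\frac{m-c_i}{u-j+1}$ is the paper's $\frac{c}{d}$, and both use $u \ge n^2/\alpha$ to keep the product of the $n$ correction factors bounded by a constant. The difference is that the paper bounds each factor from both sides to get $q_{\mathrm{simp}} = \Theta(q_{\mathrm{bal}})$, which it reuses in the proof of \cref{thm:precise-formula}, whereas you rightly observe that only $q_{\mathrm{bal}} = \mathcal{O}(q_{\mathrm{simp}})$ is needed here, so your weaker hypothesis $u = \Omega(n^2)$ suffices; the reverse inequality $q_{\mathrm{simp}} \le q_{\mathrm{bal}}$ in fact follows from \cref{lem:balanced-is-optimal} by averaging $\mathrm{comp}(f)$ over a uniformly random $f$.
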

The proof is standard and found in \cref{app:simple-brute-force}. To see that a lower bound on $u$ is required, consider the silly edge case of $u = n$ and $α = k = 1$. The identity function is minimal perfect and requires no space to store, while simple brute force requires $\log₂\frac1{n!} ≈ 1.44n$ bits.

\subsection{Precise Formulas and the Success Probability of Simple Brute Force}
\label{sec:brute-force-space-usage}

\def\TT{\widetilde{Θ}}

In this section we approximate the probability $\qsimp$ that a uniformly random function is $k$-perfect on a given input set. This leads to the following result.
\begin{theorem}
    \label{thm:precise-formula}
    Let $k,n,u ∈ ℕ$ and $α ∈ (0,1)$ with $u ≥ n²/α$.
    Let $λ$ be such that $𝔼_{Y \sim \mathrm{Pois}(λ)}[Y|Y ≤ k] = α k$ and $ζ:=e^{-λ}/\Pr[Y ≤ k]$.
    Then the best possible space requirement for $k$-PHFs with load factor $α$ is
    \[
        \underbrace{\log₂ \frac{λe}{αk} + \frac{\log₂ζ}{αk}}_{S(α,k) :=} \pm \widetilde{𝒪}(\tfrac 1n) \text { bits per key where $\widetilde{𝒪}$ ignores terms only depending on $α$ and $k$.}
    \]
\end{theorem}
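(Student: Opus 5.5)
By \cref{thm:brute-force-is-optimal}, the optimal space is $\log₂\frac1{\qsimp}\pm 𝒪(1)$ bits, where $\qsimp = P/b^n$ and $P$ is the number of $k$-perfect functions $S\to[b]$. Dividing by $n$, it suffices to show
\[
\tfrac1n\log₂\tfrac1{\qsimp} = S(α,k) \pm \widetilde{𝒪}(\tfrac{\log n}{n}).
\]
The plan is to compute $\qsimp$ up to a polynomial factor in $n$ by \emph{Poissonisation plus exponential tilting}. The $\log n$ should be absorbed by the $\widetilde{𝒪}$ convention, or removed by a sharper local limit estimate.

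\textbf{Step 1 (Poisson representation).} Let $Y_1,\dots,Y_b$ be i.i.d.\ $\mathrm{Pois}(μ)$ with $μ=n/b=αk$. The standard fact is that the law of a uniformly random multinomial assignment of $n$ keys to $b$ bins equals the law of $(Y_i)$ conditioned on $\sum Y_i=n$. Hence
\[
\qsimp=\frac{\Pr[\forall i: Y_i\le k,\ \sum Y_i=n]}{\Pr[\sum Y_i = n]}.
\]
The denominator is $Θ(1/\sqrt n)$ by Stirling.

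\textbf{Step 2 (change of parameter).} The numerator does not depend on the Poisson parameter, except through an explicit factor. Indeed, for any $λ>0$ and any $(y_i)$ with $\sum y_i=n$,
\[
\prod_i \Pr_{μ}[Y_i=y_i]=e^{b(λ-μ)}(μ/λ)^n\prod_i\Pr_λ[Y_i=y_i].
\]
So with $Z_i\sim\mathrm{Pois}(λ)$ conditioned on $Z_i\le k$, and $p:=\Pr_λ[Z\le k]$,
\[
\text{numerator}= e^{b(λ-μ)}(μ/λ)^n\, p^b \,\Pr\Bigl[\textstyle\sum Z_i=n\Bigr].
\]
Choose $λ$ exactly as in the theorem, so that $𝔼[Z_i]=αk=n/b$. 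The $Z_i$ are bounded and non-degenerate (since $0<α<1$ and $k\ge1$), and the sum $\sum Z_i$ is lattice with span $1$. A local central limit theorem then gives $\Pr[\sum Z_i=n]=Θ_{α,k}(1/\sqrt n)$. The two $\sqrt n$ factors cancel up to a constant, leaving
\[
\qsimp = Θ_{α,k}(1)\cdot e^{b(λ-μ)}(μ/λ)^n p^b .
\]

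\textbf{Step 3 (algebra).} Take $-\tfrac1n\log₂$ and use $b=n/(αk)$, $μ=αk$, and $p=e^{-λ}/ζ$. This gives
\[
\log₂\tfrac{λ}{αk}+\log₂ e\,\tfrac{μ-λ}{αk}+\tfrac{λ\log₂ e+\log₂ζ}{αk}
=\log₂\tfrac{λe}{αk}+\tfrac{\log₂ζ}{αk},
\]
which is $S(α,k)$, with error $𝒪_{α,k}(1/n)$.

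\textbf{Main obstacle.} The delicate step is the local limit bound in Step 2. Its constant must depend only on $α$ and $k$, which requires a variance lower bound for $Z$ and the aperiodicity of its support $\{0,\dots,k\}$. This holds since both $\Pr[Z=0]$ and $\Pr[Z=1]$ are positive.

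Two smaller points also need checking. First, existence and uniqueness of $λ$: $𝔼[Z]$ increases continuously from $0$ to $k$ as $λ$ runs over $(0,∞)$. Second, $b=n/(αk)$ is assumed to be an integer; rounding would otherwise contribute only lower-order terms.
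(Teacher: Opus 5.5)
Your proposal is correct and is essentially the paper's proof. Like you, the paper compares the multinomial loads to independent Poisson variables with parameter $\lambda$, truncated at $k$ with mean $\alpha k$, obtains $\Pr[\sum_i Z_i = n] = \Theta(1/\sqrt{n})$ from a local limit theorem, and cancels this against the $\sqrt{n}$ coming from Stirling on $n!$; your Poissonise-then-tilt steps (with $\Pr[\mathrm{Pois}(n)=n]$ playing the role of Stirling) only split that single identity into two stages, and since the cancellation already leaves a $\Theta_{\alpha,k}(1)$ factor, your opening $\log n$ hedge is unnecessary (it would not be covered by the paper's $\widetilde{\mathcal{O}}$ convention anyway, which only suppresses dependence on $\alpha$ and $k$).
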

Note that $S(α,k)$ is independent of $n$ and $u$. What makes the formula a bit unwieldy is that $ζ$ and $λ$ are non-explicit functions of $α$ and $k$. Numerical methods (see \cref{app:sagemath-script}) nevertheless allow us to derive the values in \cref{fig:space_plot} and \cref{tab:space_values}.

The main ingredient for proving \cref{thm:precise-formula} is the following lemma.
\begin{lemma}
    \label{l:pe}
    In the setting of \cref{thm:precise-formula} we have
    $\displaystyle
    \qsimp=\left(\frac{\alpha k}{e\lambda}\right)^n ζ^{-b} \cdot \TT(1)
    $.
\end{lemma}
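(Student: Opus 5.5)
The approach is Poissonization. Let $X_1,\dots,X_b$ be i.i.d.\ $\mathrm{Pois}(λ)$ with $λ$ as in \cref{thm:precise-formula}. For a uniformly random function $f:S\to[b]$ the bin loads are multinomial, and the multinomial is exactly the law of $(X_1,\dots,X_b)$ conditioned on $\sum_i X_i=n$. Hence
\[
  \qsimp=\Pr\Bigl[\textstyle\max_i X_i\le k \Bigm| \sum_i X_i=n\Bigr]
  =\frac{\Pr[X_i\le k\ \forall i]\cdot\Pr\bigl[\sum_i X_i=n \bigm| X_i\le k\ \forall i\bigr]}{\Pr[\sum_i X_i=n]} .
\]
This identity holds for every $λ>0$. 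The specific choice of $λ$ is made so that the event $\sum_i X_i=n$ is typical under the truncated law.

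Each factor is handled separately.

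\textbf{Third factor.} $\sum_i X_i\sim\mathrm{Pois}(bλ)$, so $\Pr[\sum_i X_i=n]=e^{-bλ}(bλ)^n/n!$.

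\textbf{First factor.} It equals $\Pr[Y\le k]^b=(e^{-λ}/ζ)^b$ by the definition of $ζ$.

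\textbf{Middle factor.} Conditioned on all $X_i\le k$, the $X_i$ are i.i.d.\ truncated Poisson variables $Z_i$ on $\{0,\dots,k\}$. Their mean is $αk=n/b$ by the choice of $λ$, and their variance $σ^2>0$ depends only on $α,k$; it is positive because $α<1$ makes the law non-degenerate. A local central limit theorem for lattice variables then gives $\Pr[\sum_i Z_i=n]=Θ(1/\sqrt{b})$. Lattice issues do not arise, since the support contains consecutive integers $0,1$ (with $k\ge1$), so the span is $1$.

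Combining the three factors and using Stirling, $n!=\TT(\sqrt n)\,(n/e)^n$:
\[
  \qsimp=\frac{e^{-bλ}ζ^{-b}\cdot Θ(b^{-1/2})\cdot \TT(\sqrt n)\,(n/e)^n}{e^{-bλ}(bλ)^n}
  =ζ^{-b}\Bigl(\frac{n}{ebλ}\Bigr)^n\cdot\TT(1)
  =\Bigl(\frac{αk}{eλ}\Bigr)^nζ^{-b}\,\TT(1),
\]
where $n/b=αk$ and $\sqrt{n}/\sqrt{b}=\sqrt{αk}$ is a constant depending only on $α,k$. The $e^{-bλ}$ factors cancel exactly, which is what makes this choice of $λ$ natural.

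\textbf{Preliminaries.} Before the main estimate, two facts are needed. First, $λ$ exists and is unique. The map $λ\mapsto 𝔼[Y\mid Y\le k]$ is continuous and strictly increasing, since the truncated Poisson family is an exponential family in $\log λ$ and its mean has derivative equal to its variance, which is positive. It tends to $0$ as $λ\to0$ and to $k$ as $λ\to\infty$, so it hits $αk\in(0,k)$ exactly once. Second, $n/b=αk$ is treated as exact, as the statement's setting implies. If $αk b$ is only approximately $n$, the local CLT still yields $Θ(1/\sqrt b)$ as long as $|n-αkb|=O(\sqrt b)$.

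\textbf{Main obstacle.} The hard step is the local CLT lower and upper bound with constants uniform in $n$ (depending only on $α,k$). I would invoke the standard Gnedenko local limit theorem for i.i.d.\ bounded integer variables with span $1$. It gives $\Pr[\sum_i Z_i=n]=\frac{1}{\sqrt{2πbσ^2}}\exp\bigl(-\frac{(n-bμ)^2}{2bσ^2}\bigr)+o(b^{-1/2})$, which is $Θ(b^{-1/2})$ here since $n=bμ$. For small $n$, below the threshold where the asymptotic kicks in, the $\TT(1)$ absorbs everything, because for fixed $α,k$ there are only finitely many such cases, each with positive probability.
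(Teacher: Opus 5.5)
Your proof is correct and is essentially the paper's argument. The paper introduces the i.i.d.\ truncated Poisson variables $Z_i$ with $\mathbb{E}[Z_i]=\alpha k$ directly and derives $\sum_{\vec x\in R}\prod_i 1/x_i! = \Pr[N_Z=n]/(\zeta^b\lambda^n)$, which is exactly the identity your untruncated Poissonization plus Bayes factorization produces; it then applies the same local limit theorem and Stirling's approximation.

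The one inaccuracy is in your closing remark: the $e^{-b\lambda}$ factors cancel for every $\lambda>0$, so the cancellation is not what motivates the choice of $\lambda$. As you correctly said earlier, $\lambda$ is chosen so that $n$ is the mean of $\sum_i Z_i$, which makes the local limit theorem give $\Theta(1/\sqrt{b})$ instead of an exponentially small probability.
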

\begin{proof}[Proof of \cref{thm:precise-formula}]
    By \cref{thm:balanced-brute-force-is-optimal}, the best possible space requirement for $k$-PHFs is $\log₂ \frac 1{\qbal} \pm 𝒪(1)$ and by \cref{thm:brute-force-is-optimal} we have $\qsimp = Θ(\qbal)$ when $u ≥ n²/α$ as we assume. We obtain the desired result from the claim of \cref{l:pe} by taking $\log₂ \frac{1}{·}$ on both sides, using $b = \frac{n}{αk}$, and dividing by $n$.
\end{proof}


\begin{figure}[t]
\begin{minipage}{0.25\textwidth}
    \centering
    \includegraphics[width=\textwidth]{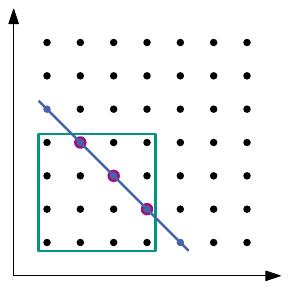}
\end{minipage}
\hfill
\begin{minipage}{0.65\textwidth}
    \caption{The two dimensional illustration uses $b=2$, $k = 4$ and $n = 5$. The multinomial $(X_1, …, X_b)$ is distributed on the blue ``diagonal'', since $X_1 + … + X_b = n$ is fixed. We wish to compute the probability that $(X₁,…,X_b)$ falls inside the green box (is “$k$-perfect”). To help with this, we consider a sequence $(Z_1, …, Z_b)$ of truncated Poisson random variables that automatically fall within the box, but not on the diagonal.}
    \label{fig:proofsketch}
\end{minipage}
\end{figure}


\begin{proof}[Proof of \cref{l:pe}]
    Consider a uniformly random function $f : S → [b]$ where $S$ is an input set of size $n$. For $i ∈ [b]$, let $X_i := f^{-1}(i)$ be the number of keys assigned to bin $i$. Clearly $\vec{X} := (X₁,…,X_b)$ follows a multinomial distribution (the bin loads after ``throwing $n$ balls into $b$ bins''). Let $R := \{(x₁,…,x_b) ∈ \{0,…,k\} \mid x₁ + … + x_b = n\}$ be the set of possible outcomes for $\vec{X}$ that correspond to $f$ being $k$-perfect on $S$. In particular we have $\qsimp = \Pr[\vec{X} ∈ R]$. For any particular $\vec{x} = (x₁,…,xₙ) ∈ R$ we have
    \[
        \Pr[\vec{X} = \vec{x}] = \binom{n}{x₁\,…\,x_b}b^{-n} = \frac{n!}{x₁! · … · x_b!} b^{-n}
    \]
    and by summing over all $x ∈ R$ we obtain
    \begin{align}
        \label{eq:pre}
        \qsimp = \Pr[\vec{X} \in R]
        &= \sum_{\vec{x} \in R} \Pr[\vec{X} = \vec{x}]
        =  \frac{n!}{b^n} \sum_{\vec{x} \in R} \prod_{i=1}^b \frac{1}{x_i!}.
    \end{align}
    We will now consider a second random sequence $\vec{Z} = (Z₁,…,Z_b)$ such that $\vec{Z}$ conditioned on $R$ has the same distribution as $\vec{X}$ conditioned on $R$. See \cref{fig:proofsketch} for an illustration.
    Concretely let $Z_1, \ldots, Z_b$ be independent with probability mass function
    \begin{equation}\label{eq:binom}
        \Pr[Z = i] =
        \begin{cases}
            0 & i > k \\
            ζ \frac{\lambda^i}{i!}  & i \leq k.
        \end{cases}
    \end{equation}
    where $λ$ is chosen such that $𝔼[Z_i] = αk$ and $ζ$ is a normalisation factor. Let
    $\vec{Z} = (Z_1, \ldots, Z_b)$ and $N_Z = \sum_{i=1}^b Z_i$.
    By construction, the events $\vec{Z} \in R$ and $N_Z = n$ are
    equivalent. For any $\vec{x} \in R$ we can compute

    \begin{align*}
        \Pr[\vec{Z} = \vec{x} &\mid N_Z = n] = \frac{\Pr[\vec{Z} = \vec{x} \wedge N_Z = n]}{\Pr[N_Z = n]} = \frac{\Pr[\vec{Z} = \vec{x}]}{\Pr[N_Z = n]} = \frac{\prod_{i=1}^b \Pr[Z_i = x_i]}{\Pr[N_Z = n]} \\
        &= \frac{\prod_{i=1}^b ζ \frac{\lambda^{x_i}}{x_i!}}{\Pr[N_Z = n]} = \frac{ζ^b\lambda^n}{\Pr[N_Z = n]} \prod_{i=1}^b \frac{1}{x_i!}.
    \end{align*}
    By summing this equation over all $\vec{x} \in R$ we get

    \begin{equation}
        \label{eq:sum-of-truncated-poisson-probs}
        1 = \frac{ζ^b\lambda^n}{\Pr[N_Z = n]}\sum_{\vec{x} \in R} \prod_{i=1}^b \frac{1}{x_i!}
    \end{equation}
    As $𝔼[N_Z] = b·𝔼[Z₁] = bαk = n$ the probability of the likely event $N_Z = n$ seems insignificant compared to the exponential terms. Indeed, using a local limit theorem for discrete random variables \cite[Theorem 1.4]{classical-local-limit-theorems} (originally in \cite{local-limit-theorem}) gives $\Pr[N_Z = n] = Θ(1/(σ\sqrt{b}))$ where $σ² = \Var(Z₁)$. We may hence use $\Pr[N_Z = n] = \TT(1/\sqrt{n})$.
    We now rearrange \cref{eq:sum-of-truncated-poisson-probs} for $\sum_{\vec{x} \in R} \prod_{i=1}^b \frac{1}{x_i!}$ and plug the result into \cref{eq:pre}.
    Using Stirling's approximation we have
    \begin{align*}
        \qsimp &= \frac{n!}{b^n} \cdot \frac{\Pr[N_Z = n]}{ζ^b\lambda^n} = \frac{n!}{b^n \lambda^n} ζ^{-b} \cdot  \TT(1/\sqrt{n}) \\
        &= \left(\frac{n}{eb\lambda}\right)^n \cdot ζ^{-b} \cdot \TT(1)= \left(\frac{\alpha k}{e\lambda}\right)^n \cdot ζ^{-b} \cdot \TT(1).\qedhere
    \end{align*}
\end{proof}

    


\subsection{Tight Space Lower Bounds up to Constant Factors}
\label{sec:space-usage-explicit-formulas}

We now derive simple \emph{explicit} formulas for the space requirement of $k$-PHFs that neglect constant factors. In the following we use $s := ⌊(1-α)k+1⌋ = |ℕ ∩ [αk,…,k]|$ to refer to the \emph{slack}, which is similar to the number $(1-α)k$ of spare slots per bin, though the definition ensures $s ≥ 1$ even for $α = 1$.
\begin{restatable}{theorem}{approxThm}
    \label{t:asymp}
    Let $k ∈ ℕ$,
    $α ∈ [\frac 12,1]$, 
    $s := ⌊(1-α)k+1⌋$ and $n ≥ n₀(k,α)$ sufficiently large. Assume $u ≥ n²/α$. The best possible space requirement for $k$-PHFs with load factor $α$ is
    \begin{align*}
        &\text{for $s = Θ(\sqrt{k})$:} &\qquad Θ(1) &\text{ bits per bin} & &\text{(total of $Θ(b) = Θ(\tfrac{n}{αk})$ bits),}\\
        &\text{for $s \ll \sqrt{k}$:} &\qquad Θ(\log(\sqrt{k}/s)) &\text{ bits per bin} & &\text{(total of $Θ(b\log(\sqrt{k}/s))$ bits),}\\
        &\text{for $s \gg \sqrt{k}$:} &\qquad \exp(-Θ(s²/k)) &\text{ bits per bin} & &\text{(total of $Θ(b\exp(-Θ(s²/k)))$ bits).}
    \end{align*}
\end{restatable}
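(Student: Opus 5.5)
By \cref{thm:balanced-brute-force-is-optimal} and \cref{thm:brute-force-is-optimal}, the best space is $\log₂\frac1{\qsimp} \pm 𝒪(1)$ bits. So it suffices to show that $\frac1b\log\frac1{\qsimp}$ has the claimed order, up to an error that vanishes as $n→∞$. For $n ≥ n₀(k,α)$, the $\widetilde{𝒪}(1)$ factor in \cref{l:pe} and the additive $𝒪(1)$ bits are then negligible against $b$ times the claimed per-bin quantity, which is a positive constant depending only on $(α,k)$.

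\textbf{Step 1: rewrite the constant from \cref{l:pe} as a relative entropy.} Let $P = \mathrm{Pois}(αk)$ and let $Z$ be the truncated Poisson from the proof of \cref{l:pe}, with pmf $ζλ^i/i!$ on $\{0,…,k\}$ and $𝔼[Z]=αk$. A direct computation, using $\ln\frac{\Pr[Z=i]}{P(i)} = \ln ζ + αk + i\ln\frac{λ}{αk}$ and $𝔼[Z] = αk$, gives
\[
  \mathrm{KL}(Z\,\|\,P) = \ln ζ + αk\ln\frac{eλ}{αk}.
\]
By \cref{l:pe} this equals $\frac1b\ln\frac1{\qsimp} + o(1)$, since $n = αkb$. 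Moreover, by the standard Gibbs/exponential-tilting argument, $Z$ minimises $\mathrm{KL}(p\,\|\,P)$ over all distributions $p$ on $\{0,…,k\}$ with mean $αk$. Call this minimum $D(α,k)$. The theorem therefore reduces to estimating $D$ (in nats, which changes only constants): every explicit feasible $p$ gives an upper bound, and every structural property shared by all feasible $p$ gives a lower bound.

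\textbf{Step 2: a universal lower bound and the regimes $s ≳ \sqrt k$.} Any feasible $p$ is supported on $\{0,…,k\}$. Hence, by data processing applied to the indicator of $\{0,…,k\}$, we get $D ≥ -\ln P(\{0,…,k\}) ≥ \Pr_{Y∼P}[Y>k]$. Since $α ≥ \frac12$ gives $s \le k/2+1$, Poisson tail estimates (Chernoff above, normal or Stirling lower bounds below) give $\Pr[Y>k] = \exp(-Θ(s²/k))$. This quantity is $Θ(1)$ when $s = Θ(\sqrt k)$.

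For the matching upper bound, take $p$ to be $P$ conditioned on $\{Y ≤ k\}$, at cost $-\ln\Pr[Y≤k]$. Its mean falls short of $αk$ by at most $𝔼[Y\mathbb 1_{Y>k}]$, and we repair this by a slight tilt, i.e.\ the actual $λ$ slightly above $αk$. I expect the repair cost to be of second order. The cleanest route is probably to bound $D$ directly through the formula in Step 1, expanding $\ln ζ$ and $\ln\frac{λ}{αk}$ in $λ-αk$, and to show $λ - αk = 𝒪(\Pr[Y>k]\cdot\sqrt k)$. Controlling this tilt uniformly in $k$, rather than for fixed $k$, is the step I expect to be the main obstacle. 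It matters most in the $s \gg \sqrt k$ regime, where the target is exponentially small and the correction must not dominate it.

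\textbf{Step 3: the regime $s \ll \sqrt k$.} For the lower bound, note that $k-W ≥ 0$ has mean at most $s$ for $W∼p$. Markov's inequality then gives $p([k-2s,k]) ≥ \frac12$. On the other hand, $P([k-2s,k]) = 𝒪(s/\sqrt k)$, because the Poisson pmf is $𝒪(1/\sqrt k)$ everywhere. Data processing with this indicator gives $D ≥ \mathrm{KL}_{\mathrm{Bern}}(\frac12\,\|\,𝒪(s/\sqrt k)) = \frac12\ln(\sqrt k/s) - 𝒪(1)$, which is $Ω(\log(\sqrt k/s))$ once $\sqrt k/s$ exceeds a large constant.

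For the upper bound, take $p$ uniform on an interval of about $2s$ integers ending at $k$, adjusted so that its mean is exactly $αk$; for instance, mix two such uniform distributions. Every point $z$ of this interval lies within $𝒪(\sqrt k)$ of $αk$, so $P(z) = Ω(1/\sqrt k)$ there. Hence
\[
  \mathrm{KL}(p\,\|\,P) = -H(p) - 𝔼_p[\ln P(W)] ≤ -\ln(2s) + \tfrac12\ln k + 𝒪(1) = 𝒪(\log(\sqrt k/s)).
\]
This also covers $α=1$, where $s=1$ and the bound gives $Θ(\log k)$. Converting nats to bits and combining the three regimes yields \cref{t:asymp}.
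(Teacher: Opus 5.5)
\paragraph{Overall.} Your reduction is a genuinely different route from the paper's, and most of it is correct, but the upper bound in the regime $s\gg\sqrt k$ is missing and the estimate you propose for it fails.

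\paragraph{Comparison with the paper.} The paper works directly with the dependent multinomial loads:
\begin{itemize}
\item negative association for the medium and light lower bounds;
\item a good-bins count with a binomial coupling for the heavy lower bound;
\item a never-fall-behind chain rule for the heavy upper bound;
\item an explicit repair data structure for the light upper bound.
\end{itemize}
You instead use \cref{l:pe} to turn everything into the single-letter I-projection $D(\alpha,k)$. Lower bounds then become one-line data-processing inequalities, and upper bounds only need a feasible $p$. This is cleaner and ties the $\Theta$-bounds directly to $S(\alpha,k)$. The price is a dependence on the local limit theorem inside \cref{l:pe}, which is harmless since $n\ge n_0(k,\alpha)$.

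Step~1, both lower bounds, and the mixed-uniform upper bound are correct. Two small remarks:
\begin{itemize}
\item The mixed-uniform bound gives $\ln(\sqrt k/s)+O(1)$ whenever $s=O(\sqrt k)$. So it, and not Step~2, is what settles the medium-regime upper bound; say so explicitly.
\item $\alpha=1$ is outside \cref{l:pe}. There, however, $q_{\mathrm{simp}}=n!/((k!)^b b^n)$ exactly, and its per-bin logarithm is $\ln k!-k\ln k+k+o(1)$. This is the KL divergence of the point mass at $k$ from $P$, so your reduction still holds.
\end{itemize}

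\paragraph{The gap.} The missing piece is the upper bound $D\le\exp(-\Omega(s^2/k))$ for $s\gg\sqrt k$. You only state an expectation. Moreover, the estimate you propose to prove, $\lambda-\alpha k=O(\Pr[Y>k]\sqrt k)$, is false as a uniform bound. Let $\pi:=\Pr[Y>k]$ and $t:=(1-\alpha)k$.
\begin{itemize}
\item The law $p_0:=P(\cdot\mid Y\le k)$ has mean deficit $\delta=\mathbb{E}[(Y-\alpha k)\mathbf{1}_{Y>k}]/(1-\pi)\ge\pi t$.
\item The mean of the truncated Poisson $Z_\lambda$ increases in $\lambda$ at rate $\mathrm{Var}(Z_\lambda)/\lambda\le 1$.
\item Hence $\lambda-\alpha k\ge\pi t\approx\pi s$, which exceeds your target by the unbounded factor $s/\sqrt k$.
\end{itemize}
The repair is still of second order, about $\delta^2/k$. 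Proving that along your route, however, needs an $\Omega(k)$ lower bound on $\mathrm{Var}(Z_\lambda)$ over the whole range of tilts, which you have not supplied.

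\paragraph{A way to close it.} You can avoid the tilt entirely by using joint convexity of KL.
\begin{itemize}
\item Let $w:=\min(\lfloor t/2\rfloor,\lfloor\sqrt k\rfloor)$ and $p_1:=P(\cdot\mid k-w\le Y\le k)$. Set $p=(1-\varepsilon)p_0+\varepsilon p_1$, with $\varepsilon$ chosen so that $\mathbb{E}_p[W]=\alpha k$.
\item Since $P(m+1)/P(m)\le\alpha$ for $m\ge k$, you get $\mathbb{E}[(Y-k)\mathbf{1}_{Y>k}]\le\pi k/t$. Hence $\delta\le\pi(t+k/t)/(1-\pi)=O(\pi t)$ for $t\ge\sqrt k$.
\item Since $\mathbb{E}_{p_1}[W]-\alpha k\ge t/2$, this gives $\varepsilon\le 2\delta/t=O(\pi)$.
\item Stirling together with $x\ln x-x+1\le(x-1)^2/2$ for $x\ge 1$ gives $P(z)\ge e^{-t^2/k}/(e\sqrt k)$ for integers $\alpha k\le z\le k$.
\item Therefore $\mathrm{KL}(p_1\,\|\,P)=-\ln P(\{k-w,\dots,k\})\le t^2/k+1$ once $t\ge 2\sqrt k$.
\end{itemize}
Combining these, $D\le(1-\varepsilon)\,\mathrm{KL}(p_0\,\|\,P)+\varepsilon\,\mathrm{KL}(p_1\,\|\,P)\le-\ln(1-\pi)+O(\pi)(t^2/k+1)$. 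By the Chernoff bound $\pi\le e^{-t^2/(2k)}$ that you already use, this is $\exp(-\Omega(s^2/k))$.
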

Recall that the space requirement of simple brute force is optimal at $\log₂ \frac{1}{\qsimp} \pm Θ(1)$ bits by \cref{thm:brute-force-is-optimal} where $\qsimp$ is the probability that a fully random assignment of $n$ keys to $b$ bins results in loads $(X₁,…,X_b)$ satisfying $\max_{i ∈ [b]} X_i ≤ k$. Note that, while they are not independent, individually the loads $X_i$ have distribution $\Bin(n,\frac 1b)$.

The intuition for the bounds in \cref{t:asymp} is as follows. In the \emph{heavily loaded case} ($s \ll \sqrt{k}$) the extra number $s$ of slots per bin is much less than the standard deviation $\sqrt{k}$ of each $X_i$. The probability that a bin is neither overflowing nor significantly underloaded is $\Pr[X_i ∈ [k-2s,k]] = Θ(s/\sqrt{k})$. This suggests $\qsimp ≈ Θ((s/\sqrt{k})^b)$ leading to a space of $\log₂ \frac 1\qsimp = Θ(b\log(\sqrt{k}/s))$ as claimed.

In the \emph{lightly loaded case} ($s \gg \sqrt{k}$) the probability that an individual bin overflows is $\exp(-Θ(s²/k))$ by Chernoff bounds and minimal intervention at the overflowing bins requires $Θ(1)$ bits to fix each of them. The lower bound will exploit the following lemma, which happens to not be wasteful in this case.
\begin{lemma}
    \label{lem:negative-association}
    $\Pr[\max X_i ≤ k] ≤ \Pr[X₁ ≤ k]^b$.
\end{lemma}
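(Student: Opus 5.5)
The plan is to use negative association of the multinomial bin loads, peeling off one bin at a time. By the chain rule,
\[
\Pr\Bigl[\max_i X_i \le k\Bigr] = \prod_{j=1}^{b} \Pr\bigl[X_j \le k \mid X_1\le k,\dots,X_{j-1}\le k\bigr],
\]
so it suffices to show that each conditional factor is at most $\Pr[X_j\le k]=\Pr[X_1\le k]$. In other words, we need that conditioning on some bins being non-overflowing cannot make another bin less likely to be non-overflowing. Intuitively, if earlier bins hold few keys, more keys remain for bin $j$.

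We make this precise through the residual distribution. Let $M := X_1+\dots+X_{j-1}$. Conditioned on $(X_1,\dots,X_{j-1})$, the load $X_j$ is distributed as $\Bin\bigl(n-M,\frac1{b-j+1}\bigr)$, which depends only on $M$. This distribution is stochastically decreasing in $M$. So $g(m):=\Pr[X_j\le k\mid M=m]$ is non-decreasing in $m$.

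It then suffices to show that conditioning on $E:=\{X_1\le k,\dots,X_{j-1}\le k\}$ makes $M$ stochastically smaller. Given that, $\mathbb{E}[g(M)\mid E]\le \mathbb{E}[g(M)]=\Pr[X_j\le k]$ follows immediately.

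The stochastic domination of $M$ given $E$ is the main obstacle. I would handle it in one of two ways.

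The first is to cite the standard result (Dubhashi--Ranjan) that multinomial/balls-in-bins occupancy variables are negatively associated. Then $\mathbf{1}[X_1\le k,\dots,X_{j-1}\le k]$ and $\mathbf{1}[X_j\le k]$ are both non-increasing functions of disjoint sets of coordinates. Negative association gives $\Pr[E\wedge X_j\le k]\le\Pr[E]\Pr[X_j\le k]$, and iterating over $j$ gives the lemma directly. This is the route I would actually take, since the lemma's label suggests exactly this.

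The second is a self-contained argument via the Poisson/conditioning representation used in the proof of \cref{l:pe}. Write $(X_1,\dots,X_{j-1})$ given $M=m$ as a multinomial of $m$ balls into $j-1$ bins. Then $\Pr[E\mid M=m]$ is non-increasing in $m$, since adding a ball can only create an overflow. So $E$ is a decreasing event in $m$, and Bayes' rule together with the monotone likelihood ratio gives that $M\mid E$ is stochastically dominated by $M$.

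Either route yields $\Pr[E\wedge X_j\le k]\le \Pr[E]\,\Pr[X_j\le k]$. Induction on $j$ then gives the claim.
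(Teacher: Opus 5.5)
Your first route is exactly the paper's argument: the paper cites negative association of the multinomial loads (Joag-Dev--Proschan, Section~3.1) and the standard fact that negative association implies negative lower orthant dependence, and your induction with the non-increasing indicators $\mathbf{1}[X_1\le k,\dots,X_{j-1}\le k]$ and $\mathbf{1}[X_j\le k]$ is precisely the proof of that fact. Your self-contained second route (residual binomial for $X_j$, plus the likelihood-ratio argument that $M\mid E$ is stochastically smaller) is also correct; but your paraphrase ``cannot make another bin less likely to be non-overflowing'' states the inequality backwards, since what you need, and what your argument actually shows, is that conditioning on $E$ cannot make bin $j$ \emph{more} likely to be non-overflowing.
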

\begin{proof}
    We exploit that $X₁,…,X_b$ are what is known as \emph{negatively associated} \cite[Section 3.1]{joag1983negative}. This implies that they are \emph{negative lower orthant dependent} \cite[Section 2, Property 3]{joag1983negative}, which is the name of the property we have stated.
\end{proof}
Due to space constraints the lengthy proof of \cref{t:asymp} was moved to \cref{app:approx-bounds}.

\pagebreak
\section{Extending PtrHash to \texorpdfstring{$\bm{k}$}{k}-perfect hashing}\label{sec:k-ptr}
We now extend PtrHash \cite{grootkoerkamp2025ptrhash} to non-minimal $k$-perfect hashing.

\subsection{The original PtrHash}
We briefly recall how PtrHash works.
Like its precursors PTHash \cite{pibiri2021pthash} (which improved FCH
\cite{fox1992faster} and CHD \cite{belazzougui2009hash}) and PHOBIC
\cite{hermann2024phobic}, PtrHash is based on \emph{bucket placement}.
We are given \(n\) keys \(\{k_1, \dots, k_n\}\), and let \(\alpha\leq 1\) be the target
load factor, so that the 1-PHF injectively maps the \(n\) keys to \(b=n/\alpha\) slots.
In PtrHash, values between $n$ and $n/\alpha$ are then \emph{remapped} into \([n]\), but this is not needed for \emph{non-minimal} PHFs.
The keys are hashed using a 64-bit hash \(\h_1\) and non-uniformly mapped to \(B=\lceil n/\lambda\rceil\) \emph{buckets} of
average size \(\lambda\) (in practice around $3$) using a \emph{bucket
assignment function} \(\gamma : [0, 1) \mapsto [0, 1)\) such as \(\gamma(x) = x^2\).
The data structure allocates an 8-bit \emph{seed} \(\s_j\) for
each bucket $j$ that controls the
slot that the keys finally hashes to using a second hash function \(\h_2\). To summarize in formulas:
\begin{align}
\bucket(k_i) &:= \lfloor B\cdot \gamma(\h_1(k_i)/2^{64})\rfloor,\notag\\
\seed(k_i) &:= \s_{\bucket(k_i)},\\
\slot(k_i) &:= \h_2(k_i, \seed(k_i)) \bmod b\notag.
\end{align}

\parag{Construction.}
The construction algorithm first distributes the keys over the buckets, and then
processes the buckets from large to small. For each bucket, it tries seeds from
\(0\) to \(2^8-1\) until one is found such that the corresponding slots are all still free.
The first working seed is greedily chosen and the corresponding slots are marked as taken.

\parag{Hash-evict.}
If none of the seeds work, the seed with the minimal number of colliding slots is
found and fixed. Colliding slots are handled by \emph{evicting} the keys that previously
mapped there: the seed for the colliding buckets is unset and these buckets are
pushed on a queue. For each bucket in the queue, a new search for a seed is
done, until the queue becomes empty.
Conceptually, this eviction strategy is similar to cuckoo-hashing
\cite{pagh2004cuckoo} as it spreads the complexity of hard buckets over multiple seeds.
This also ensures that the seed bits for easy-to-place buckets are also used efficiently.
Furthermore, queries remain as simple as a lookup of a single seed.

\enlargethispage{2em}
\subsection{Greedy \texorpdfstring{$\bm k$}{k}-PHF}

We naturally extend PtrHash to the case of $k$-perfect hashing by replacing
\(b=\lceil n/\alpha\rceil\) \emph{slots} by \(b=\lceil n/\alpha/k\rceil\)
\emph{bins} consisting of \(k\) slots each, similar to \cite{belazzougui2009hash}. Whereas before it was sufficient to track which
slots are already taken, now we count for each bin how many of its slots are
taken. For each bucket, we greedily choose the first seed for which none of the
mapped-to bins overflows, similar to the bucket placement method described in \cite{hermann2025engineering}.

\parag{Choosing parameters.}
Given \(\alpha\) and \(k\), we compute the space lower bound \(\ell\) in bits per key using \cref{thm:precise-formula}.
Then, we choose the target space usage as e.g. \(2\ell\), so
that each bucket of size \(8\) bits should contain on average \(\lambda =
8/(2 \ell)\) keys.

\parag{Non-minimal PHF.}
A difference compared to PtrHash is that in our setting, we do not require a
\emph{minimal} PHF.
In particular, for PtrHash, fixing (remapping) keys mapping to \([n,
n/\alpha)\) is relatively expensive in terms of space, while now we can afford to
simply \emph{not} fix them at the cost of a slightly lower load factor.
Thus, aside from reducing the space lower bound,
non-minimal PHFs allow for more freedom in the design of the data structure.

\parag{Bumping.}
Problematic buckets where none of the \(2^8\) seeds work are handled differently compared to PtrHash.
Specifically, the eviction strategy used in PtrHash has the drawback that it needs to
store a lot of metadata to know for each slot the conflicting key, and then also a lot
of time is spent repeatedly evicting already placed buckets.
Here, we take the much simpler approach of \emph{bumping} problematic buckets,
as also done by e.g. PHast \cite{beling2026phast}.
We reserve the seed value \(0\) for this case, and let it indicate that all keys
in the bucket
are \emph{bumped}, which means that they will be assigned to a bin using a
secondary/fallback data structure. We map the \(n'\) bumped keys recursively using the same
algorithm, but with lenient parameters of load factor \(\alpha' = 0.5\) and double
the target number of bits per key.
Note that the recursive structure allocates both additional memory for its
seeds, and also increases the number of target bins, thereby decreasing the
effective load factor \(\alpha\).

At query time, bumped keys are looked up in the fallback structure, and the
returned index is added to the number of bins \(b\) in the main structure.

\parag{Bucket function.}
With $k$-perfect hashing, the first keys are easier to place than the
later ones.
To achieve equal difficulty in each bucket, the keys are distributed non-uniformly.
Hermann et al. \cite{hermann2025engineering} introduce the optimal bucket function
for \emph{minimal} $k$-PHFs, which is complex to evaluate numerically.
Here, we heuristically take \(\gamma(x) = x^4\) as a fast-to-compute function
that works well despite its simplicity.

An additional benefit of such a skew function is that, for example, half the keys are mapped
to the first $6.25\%$ of buckets, so that a tiny fraction of the memory can
answer a large fraction of queries, reducing the number of cache misses.

\subsection{\texorpdfstring{$\bm k$}{k}-PtrHash: Scoring \texorpdfstring{$\bm k$}{k}-PHF}

A drawback of the greedy approach is that it uses the first seed that works,
even though there might be multiple seeds that work and some of those other candidates might be ``better''.
Specifically, we want to avoid completely filling up bins to $100\%$ load as long as possible:
this gives future keys more flexibility and therefore increases the chances that
future buckets find some working seed.
Thus, a seed is better if it places the keys into relatively empty bins, and
our non-greedy variant chooses the \emph{best} seed out of all \(2^8-1\)
candidates using a scoring heuristic.

\parag{Scoring function.}
Specifically, we introduce a \emph{scoring function} \(f: \{0,\dots, k-1\} \to
\mathbb R\) that assigns a score to adding a key to a bin that currently contains
\(i\) keys. The score of a seed is the sum of these scores over the target bins of
all keys in the bucket, and we choose the seed minimizing this score.
When multiple keys map to the same bin, we add \(f(i) + f(i+1) + \dots\) rather than multiple times \(f(i)\).
We heuristically choose the scoring function \(f(i):= 2^i\),
which means that
putting keys in nearly-full bins is heavily penalized compared to using
near-empty bins.
Experimentation shows that a wide range of functions performs
well, and e.g. \(1.5^i\) or \(6^i\) perform almost as good for the parameters
we consider ($4\leq k\leq 16$, $0.7\leq \alpha \leq 0.99$).
We choose \(f(i)=2^i\) for its simplicity.

\parag{Engineering the construction algorithm.}
Like PHast, we use a bitvector to indicate full bins and choose $h_2$ such that
the seeds in $[1, 128)$ and $[128, 256)$ respectively map each key to consecutive bins.
This allows us to use bitmasks to quickly filter out seeds that lead to collisions.
Further optimizations are explained in \cref{app:engineering}.

\subsection{Results}

$k$-PtrHash is available at
\href{https://github.com/RagnarGrootKoerkamp/static-hash-sets}{\texttt{github.com/RagnarGrootKoerkamp/static-hash-sets}}
and written in Rust.
\Cref{kphf-tuning} shows results of constructing $k$-PtrHash for various \(k\in\{4,8,16\}\), \(\alpha \in \{0.7,0.8,0.9,0.99\}\),
and a target space usage of \(1.0\times\) to \(2.5\times\) the lower bound (the
actual space usage is higher if bumping occurs). \Cref{kphf-tuning-table} shows
numeric results for the construction with scoring function, corresponding to the
data points marked in red that target $1.5\times$ the space lower bound.
We see that the scoring variant bumps less than the greedy variant, and can
achieve as low as \(1.5\times\) space overhead (red dots) with minimal (\(<0.1\%\))
bumping for \(\alpha \leq 0.9\).
This is similar to the compact configuration of PtrHash, which uses \(2\) bits per
key for \(k=1\) and \(\alpha=0.99\), which is \(1.46\times\) the lower bound of
\(1.37\) bits per key.
We further see that the greedy algorithm is faster
during construction as it has fewer seeds to consider.
Maybe counterintuitively, the scoring version gets
\emph{slower} as the space usage goes up and the problem becomes easier. This is
because there are more collision-free candidate seeds whose score has to be computed.
As expected, queries are faster when less space is used, but slow down once more
than \(\approx 1\%\) of the keys are bumped.

\begin{figure}[t]
\centering
\makebox[\linewidth]{
\includegraphics[width=1.1\linewidth]{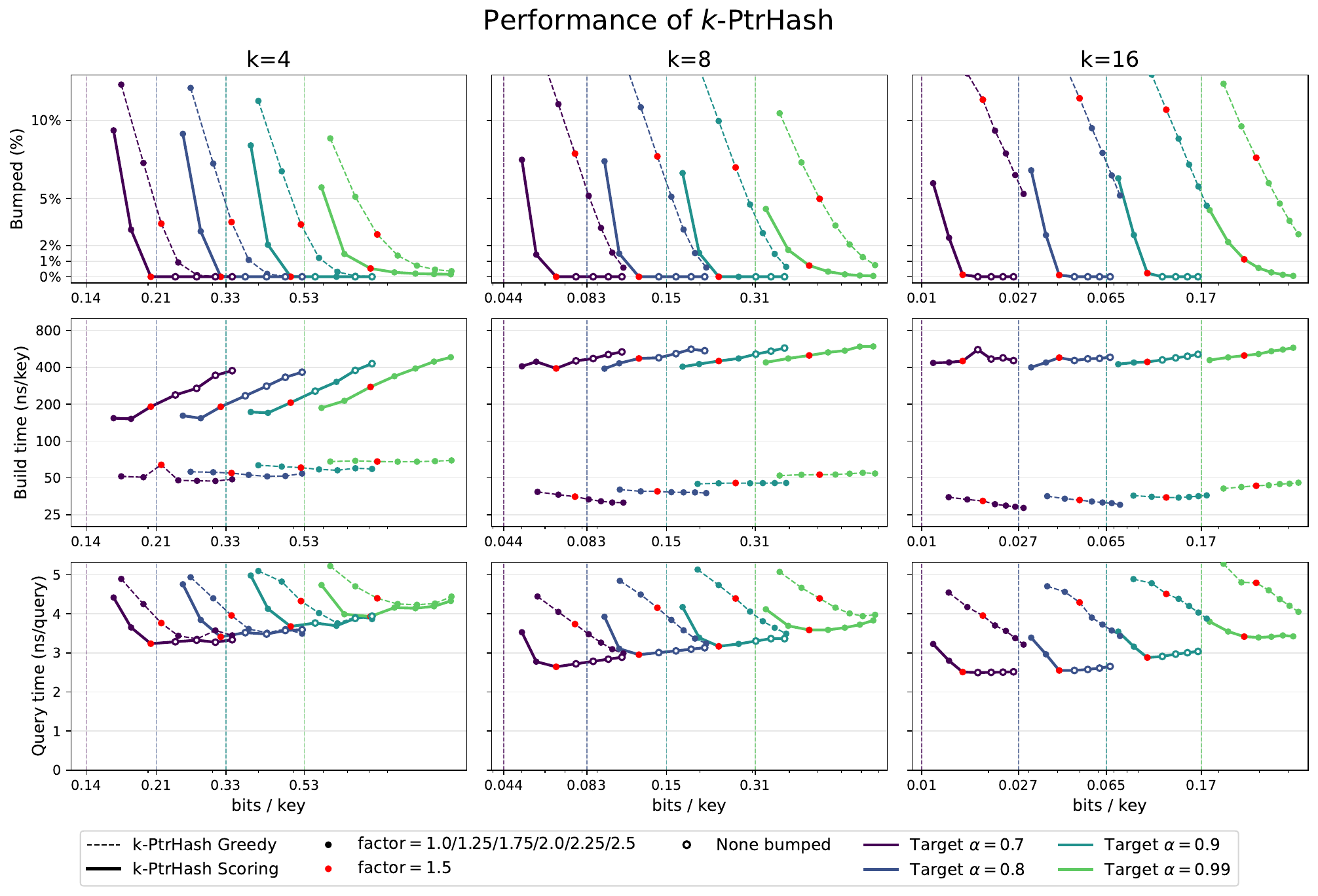}
}
\caption{\label{kphf-tuning}Results of constructing (greedy) $k$-PtrHash for \(n=10^8\) keys for \(k\in\{4,8,16\}\). Colours indicate the target load factor \(\alpha\in\{0.7,0.8,0.9,0.99\}\). Different points along each line indicate the target space usage relative to the lower bound, with the \(1.5\times\) overhead points highlighted in red. The logarithmic x-axis shows the final space usage in bits per key. The top row shows the percentage of bumped keys, with open circles indicating no bumping at all. The scoring variant bumps much less than the greedy variant. The second row shows the (logarithmic) construction time, which is \(4\times\) to \(10\times\) faster for the greedy variant. Lastly, the bottom row shows the query time, which is faster for smaller structures, but grows again when keys are bumped.}
\end{figure}

\begin{table}[t]
	\centering
	\caption{\label{kphf-tuning-table}Results of constructing the scoring
      variant of $k$-PtrHash on $n=10^8$ keys, corresponding to the red
      highlighted dots in \cref{kphf-tuning} that target $1.5\times$ the space
      lower bound for each $k$ and load factor. The \emph{final} load factor and space usage
      take into account the additional slots and space used for bumping.
      The time for construction is measured with a single thread, and the query
      time is measured as the average time per iteration of a for loop.}
	\begin{tabular}{ccrcclcc}
		\toprule
		\multicolumn{1}{c}{$k$} & \multicolumn{2}{c}{Load factor $\alpha$} 
        & \multicolumn{2}{c}{Space (bits/key)}
        & \multicolumn{1}{c}{Bumped} & \multicolumn{1}{c}{Build time} & \multicolumn{1}{c}{Query time}  \\
		& \multicolumn{1}{c}{target} & \multicolumn{1}{c}{final}
        & \multicolumn{1}{c}{target} &\multicolumn{1}{c}{final} &
        \multicolumn{1}{c}{($\%$)} & \multicolumn{1}{c}{(ns/key)} & \multicolumn{1}{c}{(ns/key)}  \\
		\toprule
4&   0.70  &    $\approx$ 0.700 &  0.204 &        0.204  &   0.001  &191  &3.2\\
4&   0.80  &    $\approx$ 0.800 &  0.316 &        0.316  &   0.001  &190  &3.4\\
4&   0.90  &    $\approx$ 0.900 &  0.489 &        0.490  &   0.005  &206  &3.7\\
4&   0.99  &            0.980 &  0.800 &        0.808  &   0.53  &277  &3.9\\
\midrule
8&   0.70  &    $\approx$ 0.700 &  0.065 &        0.065  &   0.0002  &398  &2.6\\
8&   0.80  &    $\approx$ 0.800 &  0.124 &        0.124  &   0.0001  &473  &3.0\\
8&   0.90  &    $\approx$ 0.900 &  0.231 &        0.231  &   0.002  &450  &3.2\\
8&   0.99  &            0.976 &  0.460 &        0.467  &   0.72  &500  &3.6\\
\midrule
16&  0.70  &    $\approx$ 0.700 &  0.015 &        0.015  &   0.13  &450  &2.5\\
16&  0.80  &            0.798 &  0.040 &        0.041  &   0.11  &480  &2.6\\
16&  0.90  &            0.896 &  0.097 &        0.098  &   0.23  &443  &2.9\\
16&  0.99  &            0.968 &  0.252 &        0.258  &   1.12  &499  &3.4\\
		\bottomrule
	\end{tabular}
\end{table}

\parag{Comparison to previous $k$-PHFs.}
Similar to the threshold-based \emph{minimal} $k$-PHF with consensus developed in \cite{hermann2025engineering}, we are able to
achieve roughly \(1.5\times\) the space lower bound for \(k\approx 10\).
At \(\approx 200\) ns/key (also for \(n=10^8\)), our construction also has a similar construction speed.
Although the query performance is not directly comparable due to string hashing,
our \(\approx 4\) ns/query (or 250 Mqueries/s) is much faster than even the
fastest threshold-based scheme at \(\approx 50\) Mqueries/s.
The $k$-RecSplit method gets as low as \(1.1\times\) space overhead, but is much slower to build and query.

CHD \cite{belazzougui2009hash} achieves a space usage of 1.03, 0.77, and 0.60
bits per key for $k\in\{4, 8, 16\}$ respectively when using a load factor of
$\alpha=0.99$. $k$-PtrHash uses less space at 0.80, 0.47, and 0.25 bits per key
respectively, but bumps around $1\%$ of the keys and thus has a smaller overall
load factor around $0.98$.

\parag{Varying $k$.}
While the current implementation is tuned towards \(4\leq k\leq 16\),
the bumping fallback ensures that construction will always terminate, even for larger \(k\), although not always as space-efficiently.
The \(x^4\) bucket function prevents it from working directly for \(k=1\) due to
self-collisions in the first (largest) bucket, but changing that
to \(x^2\) gives results comparable to PtrHash when \(\alpha\) is not too
close to \(1\). For example, for \(\alpha = 0.9\) and targeting \(1.5\times\) space
overhead, we get \(2.1\%\) bumped keys and a final load factor of \(0.87\) at \(1.67\)
bits per key.

\section{Application: Static \texorpdfstring{$\bm k$}{k}-PHF-sets}\label{s:application}
In this section, we will use our non-minimal $k$-PHF to develop a fast and memory
efficient static hash set. 
We first review some existing (static) hash sets, and then present our own data structure
and experiments. For simplicity, we assume 64-bit keys and no associated values.
We note that there have been many previous reviews on dynamic hash tables
\cite{hash-table-bench,hash-table-bench-2,hash-table-bench-3}.

\subsection{Previous work on static hash sets} \label{sec:prev-work}
\enlargethispage{2em}
\parag{SwissTable.}
A commonly used \emph{dynamic} hash map implementation is Google's SwissTable
\texttt{absl::flat\_hash\_map} \cite{abseilswisstables},
which is also the basis for Rust's standard \texttt{HashMap}, which is based on
the \texttt{hashbrown} crate.
It stores an array of \(n/\alpha\) slots with load factor \(7/16 \leq \alpha \leq 7/8\),
and additionally a metadata array of 7-bit \emph{fingerprints} (and 1 tombstone bit
marking deletions) for each slot.
A hash function is used to map a query key to a group of 8 slots (or more, when
using an implementation with SIMD).
Its fingerprint is then compared, and quadratic probing to the next group
happens until either a matching fingerprint or an empty slot is found.
Once a fingerprint matches, the query is checked against the corresponding key.

The fingerprints act as a small (and thus
fast) approximate filter for negative queries.
On average, \(\Omega(1/\epsilon)\) probes are needed per query (where \(\epsilon =
1-\alpha\)) before the key or an empty/deleted slot is found.
Efficient prefetching is possible if and insofar as the first probe is sufficient for answering a query
A drawback is that
the table only supports sizes that are powers of two, causing
worst-case \(2.8\times\) space overhead compared to just an array of keys.

\parag{Static cache line-based hash set.}
A simple implementation of a static hash set is the following:
Allocate \(\lceil n/\alpha /8\rceil\) 512-bit cache line sized bins that each fit
8 64-bit keys.
Then hash each key to a cache line, and use linear probing until a bin is found
that still has capacity, and insert it there.
Empty slots are represented by \(0\), and \(0\) itself is handled separately by
simply storing a boolean indicating whether $0$ is present or not.
To query a key, the key is compared (using SIMD instructions) against all values
in a cache line, with linear probing until either the key or an empty slot is found.

This has the same drawbacks of probing as SwissTable does, but avoids the
indirection to the fingerprint metadata. This is especially beneficial
in cases with little probing, i.e., for positive queries when the load factor is small.

\parag{Blocked cuckoo hashing.}
Blocked Cuckoo-hashing
\cite{pagh2004cuckoo,fountoulakis2016multiple,dietzfelbinger2010tight}
avoids unbounded probing by selecting exactly two candidate cache line sized blocks
(or windows \cite{walzer2017load,windowed-cuckoo-hashing-blog}) for each
key. When both blocks are already full, one of the current keys in them is evicted and a BFS or
DFS is done to displace them.
To query whether a key is present, one can either \emph{eagerly} load both cache lines from
memory at the same time (doubling the required memory bandwidth) or \emph{lazily} load the
second only when the key is not in the first (increasing latency in case a key
is not present in its primary location).

\parag{$1$-PHF-set.} Instead of probing,
one can also use a $1$-PHF to map keys to slots without collisions \cite{sprugnoli1977perfect, lu2006perfect}.
FPH \cite{fph-table} is a fast implementation that uses 
a non-minimal $1$-PHF based on FCH \cite{fox1992faster}.
This PHF directly maps each key to a single slot, which stores the value of the
key mapping there.
As long as the PHF fits in cache, this allows for fast single-cache-miss queries,
with no performance distinction between positive and negative queries.
The same is possible by using PtrHash or PHast, which have less space overhead
and higher query throughput than FCH.

\parag{Previous $k$-PHF approaches.}
The cache line-based approach and $1$-PHF approach can be merged by using a
$k$-PHF to map keys to cache lines, where $k$ is the number of keys per cache
line.
For example one can use the non-minimal $k$-PHF CHD \cite{belazzougui2009hash} or the \emph{minimal} $k$-PHFs developed in \cite{hermann2025engineering}.
Another line of work considers dynamic hash tables that guarantee a single cache miss and thus implicitly contain a
larger \emph{dynamic} $k$-PHF.
Notable techniques are separator hashing \cite{gonnet1988external, larson1984file, larson1988linear}, MapEmbed on CPU/FPGA
\cite{wu2021mapembed}, external Robin-Hood hashing \cite{celis1988external}, EEPH for persistent memory \cite{chen2023eeph}, or GPH
for GPUs \cite{cao2025gph}.
We remark that the extendible hashing technique \cite{fagin1979extendible, pagh2003basic} has up to two cache misses.

\subsection{The static \texorpdfstring{$\bm k$}{k}-PHF-set}
We extend the $1$-PHF-sets based on $1$-PHF to a static $k$-PHF-set
based on $k$-PtrHash.
This enables a high-load-factor hash set that only needs to read exactly one cache
line from RAM as long as the $k$-PHF fits in memory.
As in the previous methods, we treat each cache line as a bin with \(k=8\)
slots, and use SIMD instructions to check if a key is contained in a cache line.
The main benefit is that a static $k$-PHF needs much less space than a static
$1$-PHF necessarily needs, and much less than a dynamic $k$-PHF needs in practice,
allowing a fixed-size cache to store a PHF for a larger number of keys.
Starting with minimal 1-PHFs which needs $1.44$ bits/key,
going to $\alpha=0.9$ requires $1.07$ bits/key, and then increasing $k$ to $8$
lowers the space lower bound to $0.15$ bits/key, $7\times$ less than for $k=1$.
Thus, a smaller $\alpha$ makes the $k$-PHF-set more cache-friendly and faster,
even though more memory is used.


\begin{figure}[h]
\centering
\makebox[\linewidth]{
\includegraphics[width=1.0\linewidth]{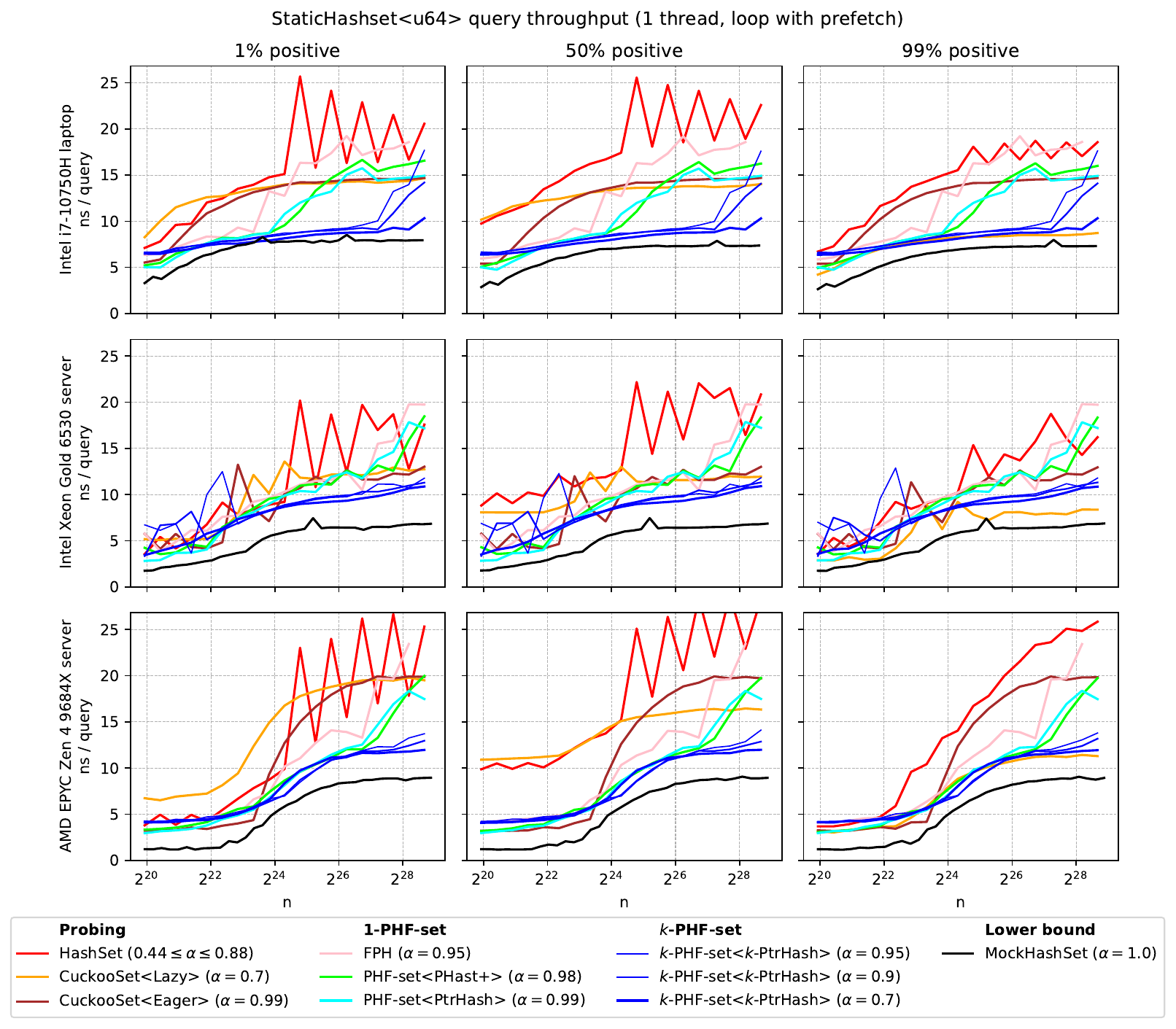}
}
\caption{\label{set-results}Query throughput of various (static) hash set
  implementations for \(n\) 64-bit keys. RAM accesses are prefetched 32
  loop iterations ahead.
  Colours indicate different data structures, and for $k$-PHF-set, the line
  width indicates the load factor from \(\alpha=0.7\) (thick) to \(\alpha=0.95\)
  (thin).
  The black line indicates the throughput of a mock implementation that checks
  if the key equals the first word in a random cache line, and thus gives an
  indication of the maximum achievable throughput of a hash set that does not
  use an internal filter.
  The left/middle/right column show throughput for \(1\%\) / \(50\%\) /
  \(99\%\) positive queries. The top/middle/bottom are for an Intel laptop, Intel
  Xeon server, and AMD EPYC server.
\ifarxiv
    \cref{app:results}
\else
    The extended version \cite{extended-version}
\fi
   contains further results on multiple threads and without prefetching.}
\end{figure}

\subsection{Results}
\parag{Competitors.}
We compare our $k$-PHF-set based on $k$-PtrHash against the methods
mentioned in \cref{sec:prev-work}.
The first is the Rust \texttt{HashSet} in \texttt{hashbrown}.
We implemented the blocked cuckoo hashing ourselves as a known baseline using the same SIMD instructions as
the version based on $k$-PtrHash. (We omit the linear-probing variant as it is
consistently slower.)
We call \texttt{FphDynSet} \cite{fph-table} through Rust bindings and also added
support for prefetching the slot of a key.
We wrap the fastest configurations of PHast+ and PtrHash in our own
$1$-PHF-set. All Rust implementations use GxHash \cite{gxhash} as the primary
hash function.

We also tested the $k$-PHF-set with the fastest bucket placement and threshold based bumping
minimal $k$-PHF techniques of \cite{hermann2025engineering}, but both were
$\approx 2$ times slower than the standard Rust HashSet. Similarly, MapEmbed
\cite{wu2021mapembed} was $\approx 5$ times slower, while we could not find any
source code for separator hashing \cite{gonnet1988external, larson1984file, larson1988linear},
external Robin-Hood hashing \cite{celis1988external},
extendible hashing \cite{fagin1979extendible} and
EEPH \cite{chen2023eeph}.
The query time of the $k$-PHF technique CHD \cite{belazzougui2009hash} is slower compared to PtrHash by an order of magnitude.
Thus, these methods were excluded.

For each static method, we show the best load factor. For the linear probing based hash set and the lazy
cuckoo set, we use $\alpha=0.7$ to reduce probing costs. For the eager cuckoo
set, we use $\alpha=0.99$ since we always read both locations anyway.
While FPH works for $\alpha \leq 0.98$, we choose $\alpha=0.9$ to avoid
excessively slow construction on large inputs.
PHast+ and PtrHash without remapping naturally have a load factor around $0.98$
and $0.99$ respectively.
For $k$-PHF-Set with $k$-PtrHash, we test load factors $\alpha\in\{0.7,0.9,0.95\}$.


\parag{Setup.}
As our data structure is designed to be I/O-efficient, we benchmark it in a
\emph{high throughput} setting: we do 3 million queries, and \emph{prefetch} the
cache line for the query 32 iterations ahead. Specifically, the $1$-PHF-set and
$k$-PHF-set compute the index of the slot/bin a key maps to and prefetch this
cache line. This index is then stored, and the corresponding data is read 32 iterations later
to test whether the key is indeed present, so that the key does not have to be
hashed again.
We forked \texttt{hashbrown} and FPH to add similar support for prefetching and
continuing the query some time later.
For cuckoo hashing, we test two modes: one prefetching only the first
(preferred) cache line, and one where we eagerly fetch the second location as well.

Additionally, we test a mock implementation of a basic hash set based on cache
lines. This maps each random 64-bit key to one of the $b$ cache lines via a single
multiplication using $\lfloor k\cdot b / 2^{64} \rfloor$ and then ``checks'' if
the key is in the hash set by testing if it equals the first word in the cache line.
This indicates the maximum achievable throughput of a hash set that does not use
an internal filter.

We run the benchmarks on an Intel Skylake i7-10750H laptop (6 cores, 32
KiB L1 per core, 256 KiB L2 per core, 12 MiB L3 cache), an Intel Xeon Gold 6530 server (64 cores,
128 threads, 32 KiB L1 per core, 2 MiB L2 per core, 160 MiB L3 cache per 32
cores), and an AMD EPYC Zen 4 9684X server (96 cores, 192 threads, 64 KiB L1 per
core, 1 MiB L2 per core, 32+64 MiB L3 per 8 cores).

\ifarxiv
    \cref{app:results}
\else
    The extended version of this paper \cite{extended-version}
\fi
contains additional results without prefetching and with multithreading.

\parag{Discussion.}
Results are shown in \cref{set-results}.
The dynamic \texttt{HashSet} (red) is relatively slow, and its speed depends heavily
on the load factor, which varies by $n$ due to the power-of-$2$ restriction on the 
underlying buffer.
(We note that in the additional results
\ifarxiv
    in \cref{app:results}
\else
    in the extended version \cite{extended-version}
\fi
there are
cases where it is the fastest for negative queries, due to the $7$-bit filter.
Indeed, this allows it to be faster than the lower bound given by our mock implementation.)
The lazy cuckoo table (orange) is fast for positive queries since (due to the small load
factor) these require nearly no probing, but in other cases it is as slow as the eager variant
(brown) that has double the memory bandwidth. The three
$1$-PHF-sets (pink, lime, cyan) all perform similar and independent of the type
of queries, with FPH being slightly
slower than using PHast+ or PtrHash. Specifically, they remain fast up to
$n\approx 2^{24}$, whereas the probing methods slow down at $n\approx 2^{21}$.

As anticipated, using $k$-PtrHash indeed allows larger $n$ up to $2^{27}$ (on
the laptop) before
the PHF does not fit in cache anymore and queries slow down, compared to
$n=2^{24}$ for the 1-PHF-set, resulting in
up to $1.6\times$ higher throughput in this range.
Decreasing $\alpha$
from $0.95$ to $0.7$ maintains high throughput up to $2^{29}$, as the $k$-PHF takes less space for
smaller $\alpha$.

The two server machines have larger caches, and our $k$-PHF-set only starts to
be faster than the 1-PHF-set at $n=2^{27}$ here. Due to the different memory
characteristics of these machines, the cuckoo table is generally competitive on
the Intel Xeon machine, but slow on the AMD EPYC, where the $k$-PHF-set has
up to $1.5\times$ higher throughput than other methods for negative queries.
Overall, our $k$-PHF-set is the only method that is competitive on all machines
for both positive and negative queries.


\section{Conclusion}
\enlargethispage{2em}
We have given a tight lower bound on the space usage of $k$-PHFs with load factor
$0<\alpha< 1$ (\cref{thm:precise-formula}).
As the formula for the exact space usage is somewhat unwieldy, we also derive explicit simpler formulas that neglect constant factors.
We extended PtrHash to $k$-perfect hashing with two modifications:
we bump problematic buckets rather than using evictions,
and we use a scoring function to preferentially select seeds that result in more balanced bins.
For $4\leq k\leq 16$ and $0.7\leq \alpha\leq 0.9$, this
scheme works down to $50\%$ space overhead.

An $8$-PHF-based hash set that maps 64-bit keys to cache line-sized
bins has as high or higher throughput than other hash sets, and depending on the
hardware has up to $1.5\times$ higher throughput in cases where a $1$-PHF on the
data does not fit in cache.

\parag{Future work.}
It appears that none of the current highly space-efficient strategies for constructing PHFs generalize well to the non-minimal $k$-PHF setting, and it remains open to find methods that are close to the space lower bound.

Concerning $k$-PtrHash, much remains unclear on e.g. the
optimal combination of bucket function and scoring function to use.
%
%
From the practical side, more work could be done to speed up the $k$-PtrHash
construction using e.g. SIMD instructions, and to extend $k$-PHF-set into a
map with updatable values.
Additionally, it might be possible to use quotienting to
store only a subset of the bits of each key, thus allowing a larger $k$.









\bibliographystyle{plainurl}
\bibliography{paper}

@article {deacon,
  author = {Constantinides, Bede and Lees, John and Crook, Derrick W},
  title = {Deacon: fast sequence filtering and contaminant depletion},
  elocation-id = {2025.06.09.658732},
  year = {2025},
  doi = {10.1101/2025.06.09.658732},
  journal = {bioRxiv}
}

@Misc{abseilswisstables,
  author = {Google},
  title = {{Abseil's Swiss Table}},
  howpublished = {\url{https://abseil.io/about/design/swisstables}},
  year = {2017}
}

@InProceedings{hermann2025engineering,
  author       = {Hermann, Stefan and Kirmayer, Sebastian and Lehmann,
                  Hans-Peter and Sanders, Peter and Walzer, Stefan},
  title        = {Engineering Minimal $k$-Perfect Hash Functions},
  year         = 2025,
  language     = {en},
  volume       = 351,
  publisher    = {Schloss Dagstuhl – Leibniz-Zentrum für Informatik},
  pages        = {99:1-99:18},
  doi          = {10.4230/LIPIcs.ESA.2025.99},
  booktitle    = {{ESA}},
  copyright    = {Creative Commons Attribution 4.0 International license}
}

@inproceedings{belazzougui2009hash,
  author = {Djamal Belazzougui and
    Fabiano C. Botelho and
    Martin Dietzfelbinger},
  title = {Hash, Displace, and Compress},
  booktitle = {{ESA}},
  series = {LNCS},
  volume = {5757},
  pages = {682--693},
  publisher = {Springer},
  year = {2009},
  doi = {10.1007/978-3-642-04128-0_61}
}

@inproceedings{pibiri2021pthash,
  author = {Giulio Ermanno Pibiri and
    Roberto Trani},
  title = {{PTHash}: Revisiting {FCH} Minimal Perfect Hashing},
  booktitle = {{SIGIR}},
  pages = {1339--1348},
  publisher = {{ACM}},
  year = {2021},
  doi = {10.1145/3404835.3462849}
}

@inproceedings{fox1992faster,
  author =       {Fox, Edward A. and Chen, Qi Fan and Heath, Lenwood S.},
  title =        {A faster algorithm for constructing minimal perfect hash functions},
  booktitle =      {{SIGIR}},
  year =         1992,
  doi =          {10.1145/133160.133209},
  publisher =    {ACM Press}
}

@inproceedings{limasset2017fast,
  author = {Antoine Limasset and
    Guillaume Rizk and
    Rayan Chikhi and
    Pierre Peterlongo},
  title = {Fast and Scalable Minimal Perfect Hashing for Massive Key Sets},
  booktitle = {{SEA}},
  series = {LIPIcs},
  volume = {75},
  pages = {25:1--25:16},
  year = {2017},
  doi = {10.4230/LIPIcs.SEA.2017.25}
}

@inproceedings{esposito2020recsplit,
  author = {Emmanuel Esposito and
    Thomas Mueller Graf and
    Sebastiano Vigna},
  title = {{RecSplit}: Minimal Perfect Hashing via Recursive Splitting},
  booktitle = {{ALENEX}},
  pages = {175--185},
  publisher = {{SIAM}},
  year = {2020},
  doi = {10.1137/1.9781611976007.14}
}

@inproceedings{genuzio2016fast,
  author = {Marco Genuzio and
    Giuseppe Ottaviano and
    Sebastiano Vigna},
  title = {Fast Scalable Construction of (Minimal Perfect Hash) Functions},
  booktitle = {{SEA}},
  series = {LNCS},
  volume = {9685},
  pages = {339--352},
  publisher = {Springer},
  year = {2016},
  doi = {10.1007/978-3-319-38851-9_23}
}

@inproceedings{kurpicz2023pachash,
  author = {Florian Kurpicz and
    Hans-Peter Lehmann and
    Peter Sanders},
  title = {{PaCHash}: Packed and Compressed Hash Tables},
  booktitle = {{ALENEX}},
  pages = {162--175},
  publisher = {{SIAM}},
  year = {2023},
  doi = {10.1137/1.9781611977561.CH14}
}

@article{chapman2011meraculous,
  author = {Chapman, Jarrod A. and others},
  doi = {10.1371/journal.pone.0023501},
  journal = {PLOS ONE},
  month = {08},
  number = {8},
  pages = {1-13},
  title = {Meraculous: De Novo Genome Assembly with Short Paired-End Reads},
  volume = {6},
  year = {2011}
}

@Article{lehmann2025modern,
  author       = {Lehmann, Hans-Peter and Mueller, Thomas and Pagh, Rasmus and
                  Pibiri, Giulio Ermanno and Sanders, Peter and Vigna,
                  Sebastiano and Walzer, Stefan},
  title        = {Modern Minimal Perfect Hashing: A Survey},
  journal      = {ACM Computing Surveys},
  year         = 2026,
  volume       = 58,
  number       = 10,
  month        = Mar,
  pages        = {1–36},
  issn         = {1557-7341},
  doi          = {10.1145/3797036},
  publisher    = {Association for Computing Machinery (ACM)}
}

@inproceedings{bez2023high,
  author = {Dominik Bez and others},
  title = {High Performance Construction of {RecSplit} Based Minimal Perfect Hash
    Functions},
  booktitle = {{ESA}},
  series = {LIPIcs},
  volume = {274},
  pages = {19:1--19:16},
  year = {2023},
  doi = {10.4230/LIPIcs.ESA.2023.19}
}

@article{beling2023fingerprinting,
  author = {Piotr Beling},
  title = {Fingerprinting-based Minimal Perfect Hashing Revisited},
  journal = {{ACM} J. Exp. Algorithmics},
  volume = {28},
  pages = {1.4:1--1.4:16},
  year = {2023},
  doi = {10.1145/3596453}
}

@article{czech1992optimal,
  author = {Zbigniew J. Czech and
    George Havas and
    Bohdan S. Majewski},
  title = {An Optimal Algorithm for Generating Minimal Perfect Hash Functions},
  journal = {Inf. Process. Lett.},
  volume = {43},
  number = {5},
  pages = {257--264},
  year = {1992},
  doi = {10.1016/0020-0190(92)90220-P}
}

@article{botelho2004new,
  title = {A new algorithm for constructing minimal perfect hash functions},
  author = {Botelho, Fabiano C and Gomes, David M and Ziviani, Nivio},
  journal = {Preprint},
  year = {2004},
}

@article{majewski1996family,
  author = {Bohdan S. Majewski and
    Nicholas C. Wormald and
    George Havas and
    Zbigniew J. Czech},
  title = {A Family of Perfect Hashing Methods},
  journal = {Comput. J.},
  volume = {39},
  number = {6},
  pages = {547--554},
  year = {1996},
  doi = {10.1093/COMJNL/39.6.547}
}

@inproceedings{mehlhorn1982program,
  author = {Kurt Mehlhorn},
  title = {On the Program Size of Perfect and Universal Hash Functions},
  booktitle = {{FOCS}},
  pages = {170--175},
  publisher = {{IEEE}},
  year = {1982},
  doi = {10.1109/SFCS.1982.80}
}

@inproceedings{weaver2020constructing,
  author = {Sean A. Weaver and
    Marijn Heule},
  title = {Constructing Minimal Perfect Hash Functions Using {SAT} Technology},
  booktitle = {{AAAI}},
  pages = {1668--1675},
  publisher = {{AAAI} Press},
  year = {2020},
  doi = {10.1609/AAAI.V34I02.5529}
}

@inproceedings{lehmann2023sichash,
  author = {Hans-Peter Lehmann and
    Peter Sanders and
    Stefan Walzer},
  title = {{SicHash} -- Small Irregular Cuckoo Tables for Perfect Hashing},
  booktitle = {{ALENEX}},
  pages = {176--189},
  publisher = {{SIAM}},
  year = {2023},
  doi = {10.1137/1.9781611977561.CH15}
}

@article{cao2025gph,
  author       = {Cao, Jiaping and Xu, Le and Yiu, Man Lung and Qin, Jianbin and Tang, Bo},
  title        = {{GPH: An Efficient and Effective Perfect Hashing Scheme for GPU Architectures}},
  journal      = {Proceedings of the ACM on Management of Data},
  year         = 2025,
  volume       = 3,
  number       = 3,
  month        = {June},
  pages        = {1–26},
  issn         = {2836-6573},
  doi          = {10.1145/3725406},
  publisher    = {Association for Computing Machinery (ACM)}
}

@inproceedings{wu2021mapembed,
  author       = {Wu, Yuhan and Liu, Zirui and Yu, Xiang and Gui, Jie and Gan,
                  Haochen and Han, Yuhao and Li, Tao and Rottenstreich, Ori and
                  Yang, Tong},
  title        = {MapEmbed: Perfect Hashing with High Load Factor and Fast
                  Update},
  year         = 2021,
  booktitle    = {Proceedings of the 27th ACM SIGKDD Conference on Knowledge
                  Discovery \& Data Mining},
  series       = {KDD ’21},
  publisher    = {ACM},
  month        = Aug,
  pages        = {1863–1872},
  doi          = {10.1145/3447548.3467240},
  collection   = {KDD ’21}
}

@InProceedings{chen2023eeph,
  author       = {Chen, Qi and Hu, Hao and Deng, Cai and Liu, Dingbang and Li,
                  Shiyi and Tang, Bo and Yao, Ting and Xia, Wen},
  title        = {{EEPH: An Efficient Extendible Perfect Hashing for Hybrid
                  PMem-DRAM}},
  year         = 2023,
  booktitle    = {2023 IEEE 39th International Conference on Data Engineering
                  (ICDE)},
  publisher    = {IEEE},
  month        = Apr,
  pages        = {1366–1378},
  doi          = {10.1109/icde55515.2023.00109},
}

@article{sprugnoli1977perfect,
  author = {Renzo Sprugnoli},
  title = {Perfect Hashing Functions: {A} Single Probe Retrieving Method for
    Static Sets},
  journal = {Commun. {ACM}},
  volume = {20},
  number = {11},
  pages = {841--850},
  year = {1977},
  doi = {10.1145/359863.359887}
}

@inproceedings{lehmann2023shockhash,
  author = {Hans-Peter Lehmann and
    Peter Sanders and
    Stefan Walzer},
  title = {{ShockHash}: Towards Optimal-Space Minimal Perfect Hashing Beyond Brute-Force},
  booktitle = {{ALENEX}},
  pages = {194--206},
  publisher = {{SIAM}},
  year = {2024},
  doi = {10.1137/1.9781611977929.15}
}

@inproceedings{chazelle2004bloomier,
  author = {Bernard Chazelle and
    Joe Kilian and
    Ronitt Rubinfeld and
    Ayellet Tal},
  title = {The Bloomier filter: an efficient data structure for static support
    lookup tables},
  booktitle = {{SODA}},
  pages = {30--39},
  publisher = {{SIAM}},
  url = {http://dl.acm.org/citation.cfm?id=982792.982797},
  year = {2004}
}

@inproceedings{botelho2007external,
  author = {Fabiano C. Botelho and
    Nivio Ziviani},
  title = {External perfect hashing for very large key sets},
  booktitle = {{CIKM}},
  pages = {653--662},
  publisher = {{ACM}},
  year = {2007},
  doi = {10.1145/1321440.1321532}
}

@inproceedings{pagh2003basic,
  author = {Rasmus Pagh},
  title = {Basic External Memory Data Structures},
  booktitle = {Algorithms for Memory Hierarchies},
  series = {LNCS},
  volume = {2625},
  pages = {14--35},
  publisher = {Springer},
  year = {2003},
  doi = {10.1007/3-540-36574-5_2}
}

@article{pagh2004cuckoo,
  author = {Rasmus Pagh and
    Flemming Friche Rodler},
  title = {Cuckoo hashing},
  journal = {J. Algorithms},
  volume = {51},
  number = {2},
  pages = {122--144},
  year = {2004},
  doi = {10.1016/j.jalgor.2003.12.002}
}

@article{larson1988linear,
  author = {Per{-}{\AA}ke Larson},
  title = {Linear Hashing with Separators - {A} Dynamic Hashing Scheme Achieving
    One-Access Retrieval},
  journal = {{ACM} Trans. Database Syst.},
  volume = {13},
  number = {3},
  pages = {366--388},
  year = {1988},
  doi = {10.1145/44498.44500}
}

@article{larson1984file,
  author = {Per{-}{\AA}ke Larson and
    Ajay Kajla},
  title = {File Organization: Implementation of a Method Guaranteeing Retrieval
    in One Access},
  journal = {Commun. {ACM}},
  volume = {27},
  number = {7},
  pages = {670--677},
  year = {1984},
  doi = {10.1145/358105.358193}
}

@article{gonnet1988external,
  author = {Gaston H. Gonnet and
    Per{-}{\AA}ke Larson},
  title = {External hashing with limited internal storage},
  journal = {J. {ACM}},
  volume = {35},
  number = {1},
  pages = {161--184},
  year = {1988},
  doi = {10.1145/42267.42274}
}

@phdthesis{celis1988external,
  title={External robin hood hashing},
  author={Celis, Pedro},
  school = {University of Waterloo},
  year={1988}
}

@article{fagin1979extendible,
  author = {Ronald Fagin and
    J{\"{u}}rg Nievergelt and
    Nicholas Pippenger and
    H. Raymond Strong},
  title = {Extendible Hashing - {A} Fast Access Method for Dynamic Files},
  journal = {{ACM} Trans. Database Syst.},
  volume = {4},
  number = {3},
  pages = {315--344},
  year = {1979},
  doi = {10.1145/320083.320092}
}

@inproceedings{dietzfelbinger2010tight,
  author = {Martin Dietzfelbinger and others},
  title = {Tight Thresholds for Cuckoo Hashing via {XORSAT}},
  booktitle = {{ICALP}},
  series = {LNCS},
  volume = {6198},
  pages = {213--225},
  publisher = {Springer},
  year = {2010},
  doi = {10.1007/978-3-642-14165-2_19}
}

@inproceedings{beling2026phast,
author = {Piotr Beling and Peter Sanders},
title = {PHast – Perfect Hashing made fast},
booktitle = {{ALENEX}},
year = {2026},
pages = {1-14},
doi = {10.1137/1.9781611978957.1},
}

@article{walzer2017load,
  author       = {Walzer, Stefan},
  title        = {Load Thresholds for Cuckoo Hashing with Overlapping Blocks},
  journal      = {ACM Transactions on Algorithms},
  year         = 2023,
  volume       = 19,
  number       = 3,
  month        = May,
  pages        = {1–22},
  issn         = {1549-6333},
  doi          = {10.1145/3589558},
  publisher    = {Association for Computing Machinery (ACM)}
}

@InProceedings{lu2006perfect,
  author       = {Lu, Yi and Prabhakar, Balaji and Bonomi, Flavio},
  title        = {Perfect Hashing for Network Applications},
  year         = 2006,
  booktitle    = {2006 IEEE International Symposium on Information Theory},
  publisher    = {IEEE},
  month        = {July},
  pages        = {2774–2778},
  doi          = {10.1109/isit.2006.261567}
}

@inproceedings{hermann2024phobic,
  author = {Stefan Hermann and others},
  title = {{PHOBIC:} Perfect Hashing With Optimized Bucket Sizes and Interleaved
    Coding},
  booktitle = {{ESA}},
  series = {LIPIcs},
  volume = {308},
  pages = {69:1--69:17},
  year = {2024},
  doi = {10.4230/LIPIcs.ESA.2024.69}
}

@article{joag1983negative,
  author = {Kumar Joag-Dev and Frank Proschan},
  title = {Negative Association of Random Variables with Applications},
  volume = {11},
  journal = {The Annals of Statistics},
  number = {1},
  publisher = {Institute of Mathematical Statistics},
  pages = {286--295},
  year = {1983},
  doi = {10.1214/aos/1176346079}
}

@article{fountoulakis2016multiple,
  author = {Nikolaos Fountoulakis and
    Megha Khosla and
    Konstantinos Panagiotou},
  title = {The Multiple-Orientability Thresholds for Random Hypergraphs},
  journal = {Comb. Probab. Comput.},
  volume = {25},
  number = {6},
  pages = {870--908},
  year = {2016},
  doi = {10.1017/S0963548315000334}
}

@inproceedings{mairson1983program,
  author = {Harry G. Mairson},
  title = {The Program Complexity of Searching a Table},
  booktitle = {{FOCS}},
  pages = {40--47},
  publisher = {{IEEE}},
  year = {1983},
  doi = {10.1109/SFCS.1983.76}
}

@phdthesis{lehmann2024fast,
  author = {Lehmann, Hans-Peter},
  year = {2024},
  title = {Fast and Space-Efficient Perfect Hashing},
  doi = {10.5445/IR/1000176432},
  publisher = {{Karlsruher Institut für Technologie (KIT)}},
  pagetotal = {167},
  school = {Karlsruher Institut für Technologie (KIT)},
  language = {english}
}

@article{lehmann2025consensus,
  author       = {Lehmann, Hans-Peter and Sanders, Peter and Walzer, Stefan and
                  Ziegler, Jonatan},
  title        = {Combined Search and Encoding for Seeds, with an Application to
                  Minimal Perfect Hashing},
  year         = 2025,
  language     = {en},
  volume       = 351,
  publisher    = {Schloss Dagstuhl – Leibniz-Zentrum für Informatik},
  pages        = {109:1-109:18},
  doi          = {10.4230/LIPIcs.ESA.2025.109},
  journal      = {LIPIcs, Volume 351, ESA 2025},
  copyright    = {Creative Commons Attribution 4.0 International license}
}

@InProceedings{grootkoerkamp2025ptrhash,
  author =  {Groot Koerkamp, Ragnar},
  title = {{PtrHash: Minimal Perfect Hashing at RAM Throughput}},
  booktitle = {{SEA}},
  pages = {21:1--21:21},
  series =  {{LIPIcs}},
  year =  {2025},
  volume =  {338},
  doi =   {10.4230/LIPIcs.SEA.2025.21},
}

@article{hermann2025morphishash,
  author       = {Hermann, Stefan},
  title        = {MorphisHash: Improving Space Efficiency of ShockHash for
                  Minimal Perfect Hashing},
  year         = 2025,
  language     = {en},
  volume       = 351,
  publisher    = {Schloss Dagstuhl – Leibniz-Zentrum für Informatik},
  pages        = {9:1-9:16},
  doi          = {10.4230/LIPIcs.ESA.2025.9},
  journal      = {LIPIcs, Volume 351, ESA 2025},
  copyright    = {Creative Commons Attribution 4.0 International license}
}

@Article{classical-local-limit-theorems,
  author       = {Szewczak, Zbigniew and Weber, Michel},
  title        = {Classical and Almost Sure Local Limit Theorems},
  year         = 2022,
  doi          = {10.48550/ARXIV.2208.02700},
  journal    = {arXiv},
  copyright    = {arXiv.org perpetual, non-exclusive license}
}

@Article{local-limit-theorem,
  author       = {B. V. Gnedenko},
  title        = {On a local limit theorem of the theory of probability},
  journal      = {Uspekhi Mat. Nauk},
  year         = 1948,
  volume       = 3,
  pages        = {187-194},
  url          = {http://mi.mathnet.ru/eng/rm8713},
}

@Misc{windowed-cuckoo-hashing-blog,
  author = {Reiner Pope},
  title = {{Cuckoo hashing improves SIMD hash tables (and other hash table tradeoffs)}},
  howpublished = {\url{https://reiner.org/cuckoo-hashing}},
  year = {2025}
}

@Misc{fph-table,
  author = {Ren Zibei},
  title = {{Flash Perfect Hash}},
  howpublished = {\url{https://github.com/renzibei/fph-table}},
  year = {2022}
}

@Misc{hash-table-bench-3,
  author = {Ren Zibei},
  title = {{Hash Table Benchmark}},
  howpublished = {\url{https://github.com/renzibei/hashtable-bench}},
  year = {2022}
}

@Misc{hash-table-bench,
  author = {Martin Leitner-Ankerl},
  title = {{Comprehensive C++ Hashmap Benchmarks 2022}},
  howpublished = {\url{https://martin.ankerl.com/2022/08/27/hashmap-bench-01/}},
  year = {2022}
}

@Misc{hash-table-bench-2,
  author = {Jackson Allan},
  title = {{An Extensive Benchmark of C and C++ Hash Tables}},
  howpublished = {\url{https://jacksonallan.github.io/c_cpp_hash_tables_benchmark/}},
  year = {2024}
}

@Misc{gxhash,
  author =       {Giniaux, Olivier},
  title =        {GxHash: A High-Throughput, Non-Cryptographic Hashing Algorithm
                  Leveraging Modern {CPU} Capabilities},
  year =         2023,
  doi =          {10.5281/ZENODO.8368254},
  publisher =    {Zenodo},
  copyright =    {Open Access}
}

@article{extended-version,
  author       = {Groot Koerkamp, Ragnar and Hermann, Stefan and Sanders, Peter
                  and Walzer, Stefan},
  title        = {Non-minimal k-perfect hashing: Tight lower bounds and an
                  application to fast static hash tables},
  year         = 2026,
  doi          = {10.48550/ARXIV.2607.07257},
  journal      = {arXiv},
  copyright    = {Creative Commons Attribution 4.0 International}
}

\AddToHook{cmd/appendix/before}{%
    \crefalias{section}{appendix}%
    \crefalias{subsection}{appendix}
}
\appendix

\section{Simple Brute Force Suffices for Large Universes}
\label{app:simple-brute-force}

We fill in the missing proof of \cref{thm:brute-force-is-optimal}, meaning for large universes a brute force algorithm need not enforce a balancing condition on the random functions it tries.
\simpleOptimal*
\begin{proof}
    It suffices to show that the probability $\qsimp$ that a uniformly random function is compatible with a given input set satisfies $\qsimp = Θ(\qbal)$, as taking logarithms will turn this into an additive $+𝒪(1)$ discrepancy.
    
    \def\pbal{p_{\mathrm{bal}}}
    \def\psimp{p_{\mathrm{simp}}}
    Let $𝒢$ be the set of all $k$-perfect assignments $g : S → [b]$. Moreover, for $g ∈ 𝒢$ let $\psimp(g)$ (respectively $\pbal(g)$) be the probabilities that a uniformly random (balanced) function matches $g$ when restricted to $S$. We have $\qsimp = \sum_{g ∈ 𝒢} \psimp(g)$ and $\qbal = \sum_{g ∈ 𝒢} \pbal(g)$ so it suffices if we show that $\psimp(g) = Θ(\pbal(g))$ for all $g ∈ 𝒢$. We immediately see $\psimp(g) = b^{-n}$, which means we have to show $\pbal(g) = b^{-n} · Θ(1)$.
    
    To get at $\pbal(g)$, we use the chain rule. The probability that a random balanced function maps the $i$th key $x_i ∈ S$ according to $g$ \emph{conditioned} on having $x₁,…,x_{i-1} ∈ S$ according to $g$ is of the form $\frac cd$ where $c ∈ [\frac un - k, \frac ub]$ is the remaining number of occurrences of $g(x_i)$ and $d ∈ [u-n,u]$ is the total number of unassigned function values. We can bound $\frac cd$ using the assumption $u ≥ n²/α$
    \[
        b^{-1}e^{-Θ(1/n)} ≤ b^{-1}(1-\tfrac{n}{αu}) = b^{-1}(1-\tfrac{kb}{u})
        = \tfrac{u/b -k}{u} ≤ \tfrac cd ≤ \tfrac{u/b}{u-n} ≤ b^{-1}(1+\tfrac{n}{u-n}) ≤ b^{-1}e^{Θ(1/n)}.
    \]
    This implies $\frac{c}{d} = b^{-1}·e^{\pm Θ(1/n)}$ and $n$ such factors as we collect them with the chain rule multiply to $b^{-n}Θ(1)$ as claimed.
\end{proof}

\pagebreak
\section{Numerical approximations of \texorpdfstring{$\bm{S(α,k)}$}{S(α,k)}}
\label{app:sagemath-script}

The following sagemath script (see it in action \href{https://sagecell.sagemath.org/?z=eJyNjc0NgkAQhe8k9DDHmYUIePJCCxagMWaDQCYgSxZMLMSrRXDxgjVsLZbguFCAc5m_973XQA67MHCT9HQjk8wvdDPJniXD7YpHNyvFSaWL0VjWLTJBZSwwcAdWd3WJTZTRicLg83y8DyvsUVaLmfrfw80CV9xdztaYERfLuzjm4CbVxJDGkKW-RD30uigFaE0tOapM8CeieEsQLUefL_v6CYPecjfiHj1K9AW3-laz}{here}) computes the value $S(α,k)$ as defined in \cref{thm:precise-formula}. It is optimised for readability and faithfulness to \cref{thm:precise-formula} and will, in this form fail if $k$ is too large or $α$ too close to $1$.

\begin{algorithm}[H]
    \ttfamily
    $k$ = 8\;
    $α$ = 0.8\;
    $ζ(λ)$ = 1/sum([$λ$**$i$/factorial($i$) for $i$ in range($k$+1)])\;
    $𝔼Z(λ)$ = sum([$i$*$ζ(λ)$*$λ$**$i$/factorial($i$) for $i$ in range($k$+1)])\;
    $λ$ = find\textunderscore root($𝔼Z(x)$ == $α$*$k$, 0, 1000000)\;
    space = log($λ$*e/($α$*$k$),2) + log($ζ(λ)$,2)/($α$*$k$)\;
    print(N(space)) \tcp{prints 0.0829450615181708}
\end{algorithm}

\section{Proof of Theorem \ref{t:asymp}}
\label{app:approx-bounds}

We prove the following theorem from \cref{sec:space-usage-explicit-formulas}.
\approxThm*
Recall the roles of $(X₁,…,X_b)$ and $\qsimp$. We make a few observations before starting the main proof.

\begin{lemma}
    \label{lem:k-is-large}
    \cref{t:asymp} is true for $k = Θ(1)$.
\end{lemma}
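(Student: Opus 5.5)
The plan is to reduce the claim to showing that, for fixed $k$ and every $α ∈ [\frac12,1]$, the space requirement is $Θ(1)$ bits per bin, i.e.\ $\log₂ \frac1\qsimp = Θ(b)$. This suffices because when $k = Θ(1)$ we also have $1 ≤ s ≤ k+1 = Θ(1)$, hence $s = Θ(\sqrt k)$. So only the first case of \cref{t:asymp} can apply, and it asserts exactly $Θ(1)$ bits per bin. (The other two cases are vacuous or coincide with it, since $\log(\sqrt k/s)$ and $\exp(-Θ(s²/k))$ are both $Θ(1)$ here, up to how the asymptotic notation is read.) By \cref{thm:brute-force-is-optimal} the optimal space is $\log₂\frac1\qsimp ± 𝒪(1)$, and $b = n/(αk) → ∞$ for $n ≥ n₀$, so the additive $𝒪(1)$ is absorbed.

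\emph{Upper bound on space (lower bound on $\qsimp$).} I would use \cref{l:pe} only qualitatively, or more directly count a single $k$-perfect configuration. Pick any load vector $\vec x ∈ R$. One exists because $n ≤ kb$: use loads $k$ and $k-1$, say, or any vector with entries in $\{0,…,k\}$ summing to $n$. Then
\[
\qsimp ≥ \frac{n!}{b^n \prod_i x_i!} ≥ \frac{n!}{b^n (k!)^b}.
\]
Stirling gives $n! ≥ (n/e)^n$ and $n/b = αk$, so
\[
\qsimp ≥ \Bigl(\frac{(αk/e)^{αk}}{k!}\Bigr)^{b}\cdot \mathrm{poly}(n)^{-1} = 2^{-𝒪(b)},
\]
since $k$ is constant and $α ≥ \frac12$. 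This gives $𝒪(1)$ bits per bin.

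\emph{Lower bound on space (upper bound on $\qsimp$).} Here I would apply \cref{lem:negative-association}: $\qsimp ≤ \Pr[X₁ ≤ k]^b$ with $X₁ \sim \Bin(n,\frac1b)$. As $n → ∞$ this converges to $\mathrm{Pois}(αk)$ with mean $αk ≥ k/2 > 0$. Hence $\Pr[X₁ > k] ≥ \Pr[X₁ = k+1]$ tends to $e^{-αk}(αk)^{k+1}/(k+1)!$, which is bounded below by a positive constant $c(k)$ uniformly over $α ∈ [\frac12,1]$. So for $n ≥ n₀$ we get $\qsimp ≤ (1-c/2)^b$, i.e.\ $Ω(1)$ bits per bin.

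The main obstacle is bookkeeping rather than mathematics. First, the $n₀$ must be chosen so that the binomial-to-Poisson approximation is uniform in $α$; this is easy by monotonicity, or by bounding $\Pr[X₁=k+1]$ directly from $\binom{n}{k+1}b^{-(k+1)}(1-\frac1b)^{n}$. Second, the polynomial Stirling correction $\mathrm{poly}(n)$ must be absorbed into $Θ(b)$, which holds since $b = Θ(n)$.
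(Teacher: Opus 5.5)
Your proof is correct. For the space lower bound you follow the paper's argument, \cref{lem:negative-association} together with $\Pr[X_1\le k]=1-\Omega(1)$. You additionally justify that constant via the Poisson limit of $\mathrm{Bin}(n,1/b)$, uniformly over $\alpha\in[\frac12,1]$, where the paper simply asserts it.

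The upper bound is where you genuinely diverge. The paper reduces to a known construction: any $O(n)$-bit minimal $1$-PHF composed with $i\mapsto\lceil i/k\rceil$ is already a $k$-PHF into $\lceil n/k\rceil\le b$ bins. You instead keep a single term of the multinomial sum \eqref{eq:pre}. This gives $q_{\mathrm{simp}}\ge n!/(b^n(k!)^b)$, i.e.\ at most $\log_2 k!+\alpha k\log_2\frac{e}{\alpha k}$ bits per bin. At $\alpha=1$ this matches the leading term of the known minimal $k$-PHF bound. You then turn it into space via simple brute force, whose seed is $\mathrm{Geom}(q_{\mathrm{simp}})$.

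What each route buys: the paper's version is a one-line black-box reduction, but it leans on the external fact that linear-space minimal $1$-PHFs exist. Yours is self-contained within the paper's framework and yields an explicit per-bin constant. It also works verbatim at $\alpha=1$, where \cref{l:pe} is not available.
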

\begin{proof}
    If $k = Θ(1)$ then we are necessarily in the case of medium load where a space requirement of $Θ(b) = Θ(n)$ is claimed. This is clearly correct:
    \begin{itemize}
        • $\log₂(e)n = Θ(n)$ bits are sufficient even in the hardest case of $1$-MPHF (a $k$-MPHF can be obtained by constructing a $1$-MPHF and composing it with $i ↦ ⌈i/k⌉$).
        • $Θ(1)$ bits per bin are necessary as $\qsimp ≤ \Pr[X₁ ≤ k]^b$ by \cref{lem:negative-association} and $\Pr[X₁ ≤ k] = 1-Ω(1)$.\qedhere
    \end{itemize}
\end{proof}
We will never explicitly appeal to \cref{lem:k-is-large}. However, keeping in mind that $k$ is large, we allow ourselves to be a bit sloppy with rounding issues and we may appeal to the intuition that the $X_i$ approximately follow a (discretised) normal distribution. This should make the validity of the following lemma apparent without proof.
\begin{lemma}
    \label{lem:basic-binomial}
    Let $N ∈ ℕ$, $p ∈ (0,\frac 12]$ and $X \sim \Bin(N,p)$. Assume $μ := Np = Θ(k)$. Then:
    \begin{itemize}
        • We have $σ² = \Var(X) = Np(1-p) = Θ(k)$.
        • For any $i ∈ ℕ$ with $|i-μ| = 𝒪(\sqrt{k})$ we have $\Pr[X = i] = Θ(1/\sqrt{k})$.
        • $\max_{i ∈ ℕ} \Pr[X = i] = Θ(\Pr[X = ⌊μ⌋]) = Θ(1/\sqrt{k})$.
    \end{itemize}
\end{lemma}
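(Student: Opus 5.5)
The plan is a direct local analysis of the binomial probability mass function. We compare ratios of consecutive probabilities, which gives the bounds up to constants without any appeal to a general local limit theorem. We may assume $k$ exceeds a suitable constant, since for $k = Θ(1)$ all quantities are $Θ(1)$ and the claims are trivial.

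The variance claim is immediate. We have $σ² = Np(1-p)$ and $1-p ∈ [\frac 12,1)$, so $σ² = Θ(μ) = Θ(k)$.

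For the second claim, write $p_i := \Pr[X=i]$ and use the ratio
\[
\frac{p_{i+1}}{p_i} = \frac{(N-i)p}{(i+1)(1-p)}.
\]
Set $i = μ + t$ and simplify. This gives $\frac{p_{i+1}}{p_i} = 1 - \frac{t + (1-p)}{(i+1)(1-p)}$, so for $|t| = 𝒪(\sqrt k)$ the ratio is $1 - Θ(t/k) \pm 𝒪(1/k)$. We then multiply these ratios from $m := ⌊μ⌋$ to $i$. Taking logarithms, the product is $\exp\bigl(-\sum_{|j| ≤ |t|} 𝒪(|j|/k + 1/k)\bigr) = \exp(-𝒪(t²/k + |t|/k)) = Θ(1)$ whenever $|t| = 𝒪(\sqrt k)$. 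Hence $p_i = Θ(p_m)$ for all $i$ in the window $|i-μ| ≤ C\sqrt k$, for each fixed constant $C$.

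To pin down $p_m$, we bound it from both sides.
\begin{itemize}
\item For the upper bound, note that the same ratio computation shows $p_i$ is increasing for $i < μ - 1$ and decreasing for $i > μ$. So $p_m$ equals the maximum up to a constant factor, and more precisely $p_{m}$ and $p_{m+1}$ are within a constant factor of each other. The multiplicative estimate also shows $p_i ≥ c\,p_m$ on a window of length $2\sqrt k$, so $p_m · \sqrt k = 𝒪(1)$. This gives $p_m = 𝒪(1/\sqrt k)$.
\item For the lower bound, Chebyshev gives $\Pr[|X-μ| ≤ 2σ] ≥ 3/4$. This window contains $𝒪(\sqrt k)$ integers, each with probability at most $\max_i p_i = 𝒪(p_m)$. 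Therefore $p_m = Ω(1/\sqrt k)$.
\end{itemize}
This establishes both the third claim and, combined with the window comparison, the second claim.

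The main obstacle is the bookkeeping in the ratio product. We must check that the $𝒪(1/k)$ error terms and the floor/offset of $μ$ contribute only $𝒪(t²/k + |t|/k) = 𝒪(1)$ in the exponent, and that $i+1 = Θ(k)$ throughout the window; the latter holds because $μ = Θ(k) ≫ \sqrt k$ once $k$ is large. As an alternative route, Stirling's formula applied to $\binom{N}{i}$ gives the same conclusion directly.
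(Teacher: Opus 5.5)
Your proof is correct, and it takes a different route from the paper, which gives no proof. The paper calls the lemma ``apparent without proof'' from the intuition that a binomial with $\mu=\Theta(k)$ is approximately a discretised normal distribution with variance $\Theta(k)$.

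Made rigorous, the paper's route would go through the de Moivre--Laplace local limit theorem. That gives the sharper statement $\Pr[X=i]\sim\frac{1}{\sigma\sqrt{2\pi}}e^{-(i-\mu)^2/(2\sigma^2)}$ for $|i-\mu|=\mathcal{O}(\sigma)$. However, it needs a version that is uniform over $p\in(0,\frac12]$ and $N$, which holds as long as $\sigma^2\to\infty$.

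Your argument instead uses only four ingredients:
\begin{itemize}
\item the exact ratio $\frac{p_{i+1}}{p_i}=1-\frac{t+(1-p)}{(i+1)(1-p)}$;
\item the resulting unimodality with the mode in $\{m-1,m,m+1\}$;
\item the mass bound $\sum_i p_i\le 1$;
\item Chebyshev.
\end{itemize}
So it is self-contained and shows exactly which hypotheses matter. From $1-p\ge\frac12$ you get $N-\mu\ge\mu$, so for large $k$ the $\mathcal{O}(\sqrt k)$-window stays inside $[0,N]$ with $i+1=\Theta(k)$. It yields only $\Theta$-bounds rather than the Gaussian profile, but that is all the proof of \cref{t:asymp} uses.

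One small correction concerns your dismissal of bounded $k$ as trivial. That is not literally right: $N=2$, $p=\frac12$, $i=3$ has $|i-\mu|=\mathcal{O}(1)$ but $\Pr[X=i]=0$. The step is also unnecessary, since the asymptotic statements only concern $k$ beyond a threshold. That is what the paper means by ``keeping in mind that $k$ is large''.
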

We will require the following Chernoff-type result:
\begin{lemma}
    \label{lem:kullback-leibler-chernoff}
    Let $N ∈ ℕ$, $p < \frac 13$ and $X \sim \Bin(N,p)$. Then
    $\Pr[X ≥ N/2] = \exp(-Θ(N\log \tfrac 1p))$.
\end{lemma}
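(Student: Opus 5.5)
We prove the two directions separately. For $N/2 ≤ j ≤ N$ (with $j$ an integer) we have $\Pr[X = j] = \binom{N}{j}p^j(1-p)^{N-j}$, and we estimate these terms directly.

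\emph{Upper bound.} We use a union bound over which $⌈N/2⌉$ trials succeed:
\[
\Pr[X ≥ N/2] ≤ \binom{N}{⌈N/2⌉} p^{⌈N/2⌉} ≤ 2^N p^{N/2} = \exp\bigl(-N(\tfrac12\ln\tfrac1p - \ln 2)\bigr).
\]
Since $p<\frac13$, the exponent is already negative: $\frac12\ln 3 - \ln 2 < 0$ is false, so the constant $1/3$ is not quite enough for this crude estimate. We therefore refine the bound using the exact Chernoff/Kullback–Leibler bound
\[
\Pr[X ≥ N/2] ≤ \exp\bigl(-N\,D(\tfrac12\,\|\,p)\bigr), \qquad D(\tfrac12\|p) = \tfrac12\ln\tfrac{1}{2p} + \tfrac12\ln\tfrac{1}{2(1-p)} = \tfrac12\ln\tfrac{1}{4p(1-p)}.
\]
For $p<\frac13$ we have $4p(1-p) < \frac89$, so $D(\tfrac12\|p) ≥ c > 0$. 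Moreover, as $p → 0$ this expression is $\tfrac12\ln\tfrac1p - \mathcal{O}(1)$. Hence $D(\tfrac12\|p) = Θ(\log\tfrac1p)$ uniformly on $(0,\tfrac13)$. This is the only delicate step: a positive constant lower bound for $D$ is needed near $p=1/3$, and the asymptotic form is needed near $p=0$. Together these give $\Pr[X ≥ N/2] ≤ \exp(-Θ(N\log\frac1p))$.

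\emph{Lower bound.} We keep only the single term $j = ⌈N/2⌉$:
\[
\Pr[X ≥ N/2] ≥ \Pr[X = ⌈N/2⌉] ≥ p^{⌈N/2⌉}(1-p)^{N} ≥ p^{N/2+1}(2/3)^N.
\]
Here we used $\binom{N}{j} ≥ 1$. This yields $\exp(-\mathcal{O}(N\log\frac1p))$, because $\log\frac32 = \mathcal{O}(\log\frac1p)$ for $p<\frac13$, and the extra factor $p$ contributes $\log\frac1p ≤ N\log\frac1p$ for $N ≥ 1$.

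Combining the two directions gives $\Pr[X ≥ N/2] = \exp(-Θ(N\log\frac1p))$. The implicit constants are uniform in $N$ and $p$, as the statement requires.
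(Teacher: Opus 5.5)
Your proof is correct. The upper bound is exactly the paper's argument: the Chernoff--Hoeffding bound $\Pr[X \ge N(p+\varepsilon)] \le \exp(-D(p+\varepsilon\,\|\,p)N)$ with $\varepsilon = \frac12 - p$, after which the paper simply asserts that this ``yields the stated result''.

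You go further in two places the paper leaves implicit:
\begin{itemize}
\item You check that $D(\frac12\|p) = \frac12\ln\frac{1}{4p(1-p)}$ is $\Theta(\log\frac1p)$ uniformly on $(0,\frac13)$. It is bounded below by a positive constant near $p=\frac13$, and equals $\frac12\ln\frac1p - \mathcal{O}(1)$ near $0$.
\item You supply the matching lower bound $\Pr[X\ge N/2] \ge p^{N/2+1}(2/3)^N$ from a single binomial term. A Chernoff bound alone cannot give this, and the paper never proves it; its use of the lemma in the heavily loaded case only needs the upper bound.
\end{itemize}

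One cleanup: the sentence about the crude union bound is garbled. The inequality $\frac12\ln 3 - \ln 2 < 0$ is \emph{true}, and that is exactly why $2^N p^{N/2}$ exceeds $1$ for $p$ near $\frac13$ and the crude bound is useless there. Just delete that false start.
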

\begin{proof}
    A standard theorem due to Chernoff and Hoeffding guarantees that
    \[\Pr[X ≥ N(p+ε)] ≤ \exp(-D(p+ε ‖ p)n)\]
    where $D(x ‖ y) = x \ln \frac xy + (1-x) \ln \frac {1-x}{1-y}$ is known as the Kullback-Leibler divergence. Plugging in $ε = \frac 12 - p$ yields the stated result.
\end{proof}
We now have all the tools required for proving \cref{t:asymp}. We consider six cases separately.
    \subparagraph{Lower bound for a case of medium load ($s = Θ(\sqrt{k})$).}
    By \cref{lem:kullback-leibler-chernoff} we have $\Pr[X₁ > k] = 𝒪(1)$ so the lower bound of $Θ(b)$ bits follows from \cref{lem:negative-association} just like in the proof of \cref{lem:k-is-large}.
    \subparagraph{Upper bound for a case of medium load ($s = Θ(\sqrt{k})$).}
    We can handle this case together with the heavily loaded case (considered next). If $s > \sqrt{k}/2$ we may add phantom keys until $s = \sqrt{k}/2$. The resulting upper bound will be $𝒪(b \log(\sqrt{k}/s))$, which simplifies to $𝒪(b)$.
    \subparagraph{Upper bound for the heavily loaded case ($s ≤ \sqrt{k}/2$).}
    We will show $\qsimp = \Pr[\max_{i ∈ [b]} X_i ≤ k] ≥ (Ω(s/\sqrt{k}))^b$, which implies $\log₂ \frac 1\qsimp = 𝒪(b \log(\sqrt{k}/s))$ as desired. In fact, we will show the stronger claim that with $T_i := X₁ + … + X_i - 𝔼[X₁ + … + X_i]$ for $i ∈ [b]$ we have
    \begin{equation}
        \label{eq:perfect-and-never-behind}
        \Pr[∀i ∈ [b]: X_i ≤ k ∧ T_i ≥ 0] = (Ω(s/\sqrt{k}))^b.
    \end{equation}
    Intuitively, we require that when we reveal $X₁,X₂,…,X_b$ one by one, then in step $i$, we fail if $X_i > k$, but we also fail if we ``fall behind'' the number of keys that are expected in the first $i$ bins. Consider the success probability of the $i$th step:
    \begin{align*}
        \Pr[X_i ≤ k ∧ T_i ≥ 0 &\mid ∀j < i: X_j ≤ k ∧ T_j ≥ 0]
        ≥ \Pr[X_i ≤ k ∧ T_i ≥ 0 \mid T_{i-1} = 0]\\
        &= \Pr[X_i ∈ [αk,k] \mid T_{i-1} = 0] = Ω(s/\sqrt{k}).
    \end{align*}
    In the first step, we remove irrelevant information from the conditioning and switch to the ``hardest'' case with $T_{i-1} = 0$ (this ignores a rounding issue) where we have “barely not fallen behind”. The second step uses $T_i = T_{i-1} + X_i - αk$. Finally we apply \cref{lem:basic-binomial} as the conditional distribution of $X_i$ is $\Bin(n-𝔼[X₁+…+X_{i-1}], \frac{1}{b-i+1}) = \Bin(n·\frac{b-i+1}{b}, \frac{1}{b-i+1})$ which has expectation $\frac{n}{b} = αk = Θ(k)$. Now \cref{eq:perfect-and-never-behind} follows from the chain rule.
    \subparagraph{Lower bound for the heavily loaded case ($s \ll \sqrt{k}$).}
    Call a bin $i ∈ [b]$ \emph{good} if $X_i ∈ [k-2s,k]$ and let $G$ be the number of good bins. We begin by showing that the event $\max_{i ∈[b]} X_i ≤ k$ implies $G ≥ b/2$.
    
    Assume, conversely, that $\max_{i ∈[b]} X_i ≤ k$ and $G < b/2$. The non-good bins must all have load less than $k-2s$, the others load at most $k$.
    The total average load is hence less than $\frac{1}{2}(k-2s) + \frac{1}{2}k ≤ k-s = αk$. This is a contradiction as the average load is equal to $αk$.
    
    We now again reveal $X₁,X₂,…,X_b$ one by one. In step $1 ≤ i < b$ the conditional distribution of $X_i$ is binomial and the chance that bin $i$ is good is maximised if the conditional expectation of $X_i$ is around $k-s$. We hence have $\Pr[\text{bin $i$ is good} \mid X₁,…,X_{i-1}] = 𝒪(s/\sqrt{k})$ by \cref{lem:basic-binomial}. A standard coupling argument helps to construct a random variable $G' \sim \Bin(b,p)$ with $p = 𝒪(s/\sqrt{k})$ such that $G ≤ G'$. Note that the bound on $p$ implies $p < \frac 13$. We conclude
    \[
        \qsimp = \Pr[\max X_i ≤ k] ≤ \Pr[G ≥ \tfrac b2] ≤ \Pr[G' ≥ \tfrac b2]
        \stackrel{\text{Lem.\ref{lem:kullback-leibler-chernoff}}}{=}
        \exp(-Ω(b \log \tfrac{1}{p})).
    \]
    Hence $\log₂ \frac 1\qsimp = Ω(b\log \frac 1p) = Ω(b\log(\sqrt{k}/s))$ as required.
    \subparagraph{Upper bound for the lightly loaded case ($s \gg \sqrt{k}$).}
    Note that every bin has $s/\sqrt{k}$ many standard deviations worth of extra space compared to its average load. The probability that a bin overflows is hence $p = \Pr[X_i > k] = \exp(-Ω(s²/k))$ by a standard Chernoff bound.
    
    We will build a $k$-PHF by using a random assignment and fixing these overflowing bins. Firstly, we store the set of overflowing bins using entropy coding. This takes $bp\log \frac 1p + (1-p) \log \frac 1{1-p} = Θ(b p \log \frac 1p) = 𝒪(b\sqrt{p}) = b \exp(-Ω(s²/k))$ bits in expectation, which is within our budget. We now describe how to redistribute keys from overflowing bins to other bins using, in expectation, $𝒪(1)$ bits per overflowing bin. Concretely we store for each overflowing bin:
    \begin{enumerate}
        • The number $D = ⌈(X_i - k)/\sqrt{k}⌉$ of standard deviations by which it overflows. Note that $𝔼[D] = 𝒪(1)$, even conditioned on $X_i > k$. Hence even unary encoding of $D$ is sufficient.
        • A sequence of up to $2D$ seed values (in unary encoding) of hash functions that partition the $X_i$ keys, each hash function splitting off between $\sqrt{k}/2$ and $\sqrt{k}$ keys. We expect each seed to take $𝒪(1)$ bits as we have a constant success probability. The up to $2D$ resulting subsets are called overflow fragments.
        • Imagine each overflow fragment randomly probes the set of all bins until finding a bin with $\sqrt{k}$ spare slots. We store, in unary encoding, the index of the successful probe. As a constant fraction of bins has $\sqrt{k}$ free slots in expectation, and we expect $o(b)$ overflow fragments (start from scratch if this fails), this also takes $𝒪(1)$ bits in expectation.
    \end{enumerate}
    While not elegant, it should be clear that the resulting data structure can serve as a $k$-PHF.
    \subparagraph{Lower bound for the lightly loaded case ($s \gg \sqrt{k}$).}
    We use that $\Pr[X_i > k] = \exp(-Θ(s²/k))$, which follows from local limit theorems\footnote{See e.g.\ \href{https://paperzz.com/doc/7475533/the-moderate-deviations-result?utm_source=chatgpt.com}{lecture notes}
    by Steven Dunbar (see ``key concepts'' $→$ 2).} assuming $n = ω(k⁶)$. We apply \cref{lem:negative-association} to get
    \[
        \qsimp = \Pr[\max X_i ≤ k] ≤ \Pr[X₁ ≤ k]^b = (1-\exp(-Θ(s²/k)))^b.
    \]
    Using that $\log \frac 1{1-ε} = Θ(ε)$ for small $ε$ we get as desired
    \[
        \log₂ \frac 1{\qsimp} ≥ b \log \frac{1}{1-\exp(-Θ(s²/k))} = b\exp(-Θ(s²/k)).
    \]


\section{Engineering the construction algorithm}\label{app:engineering}
Our implementation makes a few specific choices to ensure that construction is
fast.

\parag{Shifting bucket function.}
For each seed that is tried for a bucket, we must compute \(\slot(k_i) =
\h_2(k_i, \seed)\) for all keys in the bucket. If \(\h_2\) is a truly random
function, that means we have a memory access and likely cache miss for each key for each seed in order to
check whether the bin is already full.
In order to reduce this, we only use the high bit of the 8-bit seed for a random hash position, and
the low 7 bits indicate a linear shift relative to that position:
\begin{equation}
\slot(k_i) = \h_2(k_i, \seed) := \h_2'(k_i, \seed \mathbin{\&} 1000\,0000_2) + (\seed \mathbin{\&} 0111\,1111_2).
\end{equation}
This way, seeds \(1\) to \(127\) and $128$ to $255$ map each key to bins that are adjacent in memory.

It is also possible to simply use \emph{all} 8 seed bits to indicate a shift.
While this works mostly fine in practice, it increases the probability of self-collisions in the first
bucket (see below), and usually causes slightly (but occasionally \(2\times\)) more bumping.
Using the high bit for the position effectively enables a
windowed cuckoo-hashing-like effect \cite{walzer2017load}, mitigating
cases where many keys map into the same range of bins.

\parag{Bitmasks.}
Similar to PHast \cite{beling2026phast}, we further speed things up by keeping a bitmask
that indicates which bins still have space. When processing seeds \(0\) and \(128\), we read 128 bits
from the mask starting at the bin for each key, and then intersect these masks.
Then, for each of the next 128 seeds we first check whether the corresponding bit
is \(1\) before processing it further.

\parag{Self-collision checks.}
After the bitmask check passes, there can still be collisions when two keys in a
bucket map to a bin with only one free slot.
To efficiently handle this for consecutive seeds, we compute for seeds \(0\) and
\(128\) a sorted list of pairs \((bin, count)\) indicating how many keys map to each bin. 
For the remaining seeds we simply increment the bin index as needed.
For each seed, we then iterate over the list and simply check whether each bin
has sufficient capacity remaining.

\parag{Computing the score.}
To compute the score, we keep a list \(C=[0, \dots, 0]\) of \(k+1\) counts.
When adding \(c\) keys to a bin with size \(i\), we update \(C_i  \gets C_i+ 1\)
and \(C_{i+c} \gets C_{i+c}- 1\). After taking prefix sums, \(C_i\) then indicates the number of times a key
was added to a bin with \(i\) keys, and we can compute the final score as the
dot product
\(\sum_{i=0}^{k-1} C_i \cdot f(i)\).

\parag{Global seed.}
Typically, the first bucket receives up to \(20\%\) of the keys.
Because of this, the probability that for any fixed seed there is some bin receiving at least \(k+1\)
of these keys is not negligible.
Due to the shifting based on the seed, only 2 independent hash functions are tried, and
it can happen (empirically around $10\%$ of the time) that both lead to self-collisions.
We work around this by making \(h_1\) dependent on a random global seed, so that
we can retry construction until the first bucket does not lead to overflowing bins.

This global seed also serves a secondary purpose to randomize the hashes of keys,
so that the fallback structures for bumped keys have an independent hash function.

\subsection{Consensus}
An alternative way to handle problematic buckets that we considered (but do not use) is \emph{consensus}
\cite{lehmann2025consensus}: we can use as hash a 64-bit seed ending in the 8-bit seed
of the current bucket. This way, the seed of preceding buckets affects the
placement of the seeds in the current bucket, and we can backtrack if no
suitable placement is possible.

While this avoids the need for bumping and the associated cost at query time,
preliminary experiments showed that bumping is more effective at resolving
issues and achieves smaller space overheads. Furthermore, the metadata required
for backtracking somewhat complicates construction, and reading 64-bit seeds
across cache line boundaries would incur additional I/O at query time.

\ifarxiv
  \section{Further results on static hash sets}\label{app:results}
  \Cref{fig:laptop,fig:floyd,fig:diffie}
  show extended results of the kind shown in \cref{set-results} including 
  the time to quere the hashsets in a plain for loop (rather
  than with prefetching) as well as multi-threaded results.
  
  \begin{figure}[h]
    \centering
    \makebox[\linewidth]{
    \includegraphics[width=1.2\linewidth]{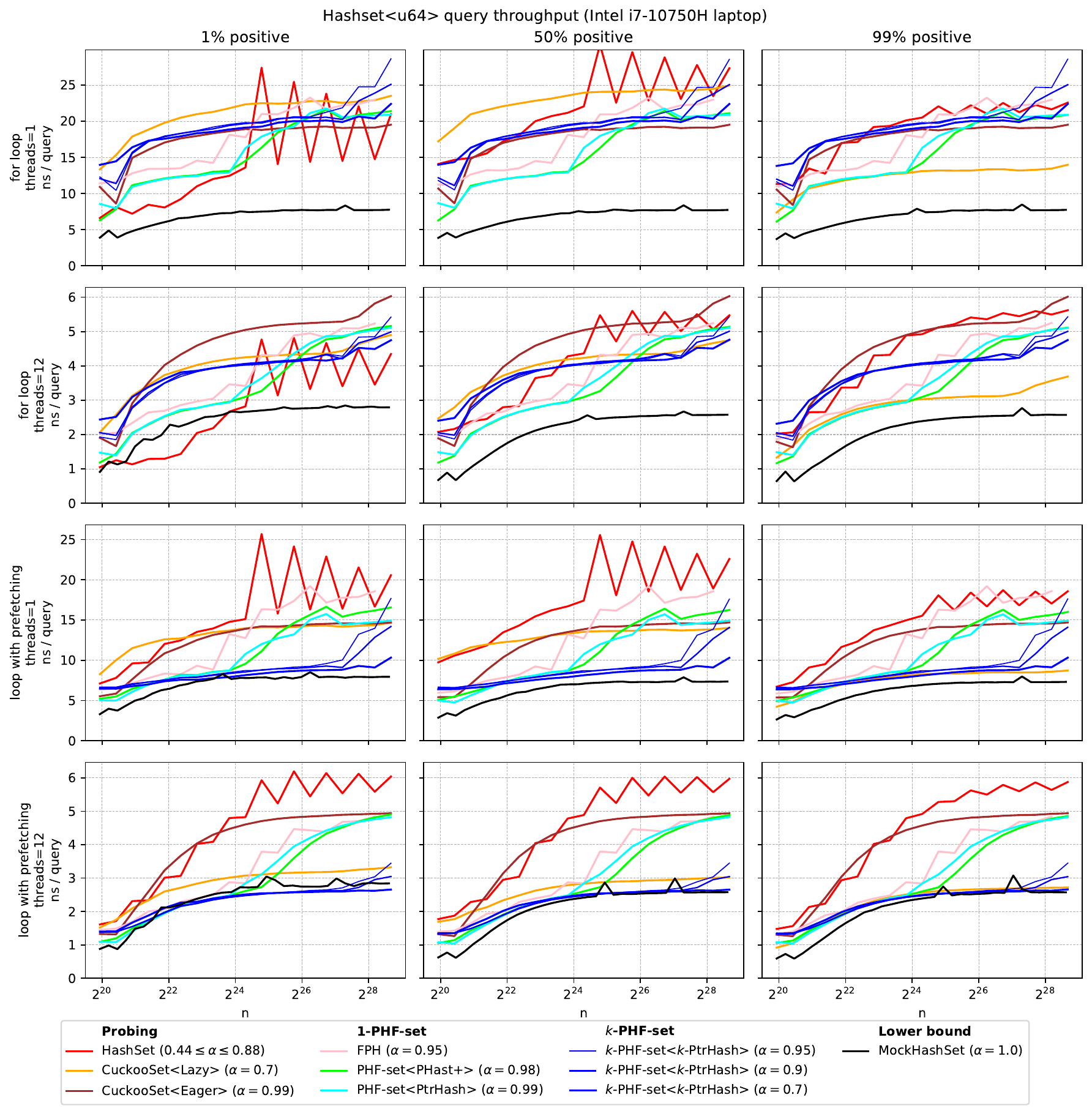}
    }
    \caption{Results on 1 and 12 threads for the throughput of a for loop and loop
      with prefetching on an Intel Skylake laptop CPU.}
    \label{fig:laptop}
  \end{figure} 
  \begin{figure}[h]
    \centering
    \makebox[\linewidth]{
    \includegraphics[width=1.2\linewidth]{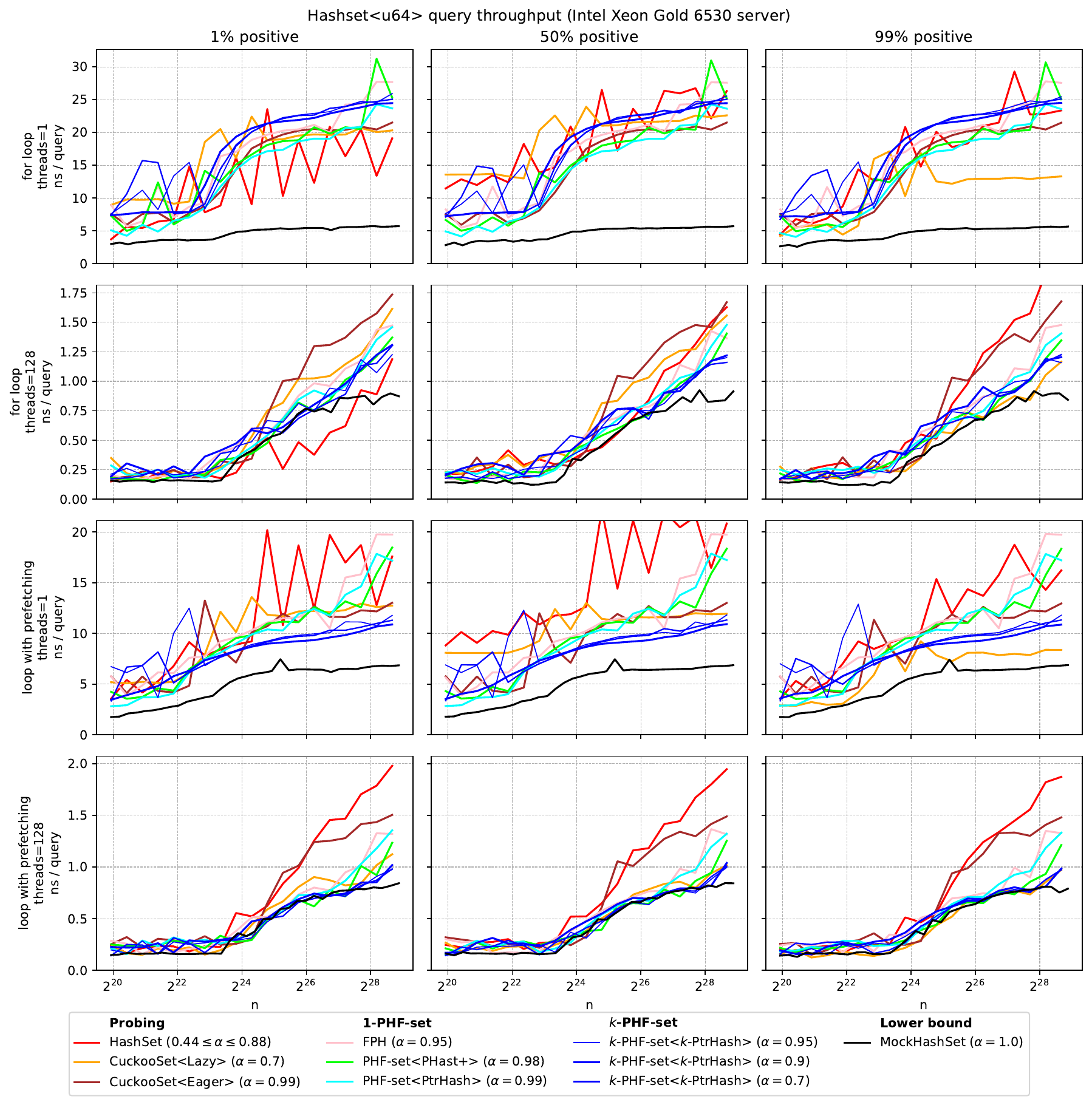}
    }
    \caption{Results on 1 and 128 threads for the throughput of a for loop and loop
      with prefetching on an Intel Xeon server CPU.}
    \label{fig:floyd}
  \end{figure} 
  \begin{figure}[h]
    \centering
    \makebox[\linewidth]{
    \includegraphics[width=1.2\linewidth]{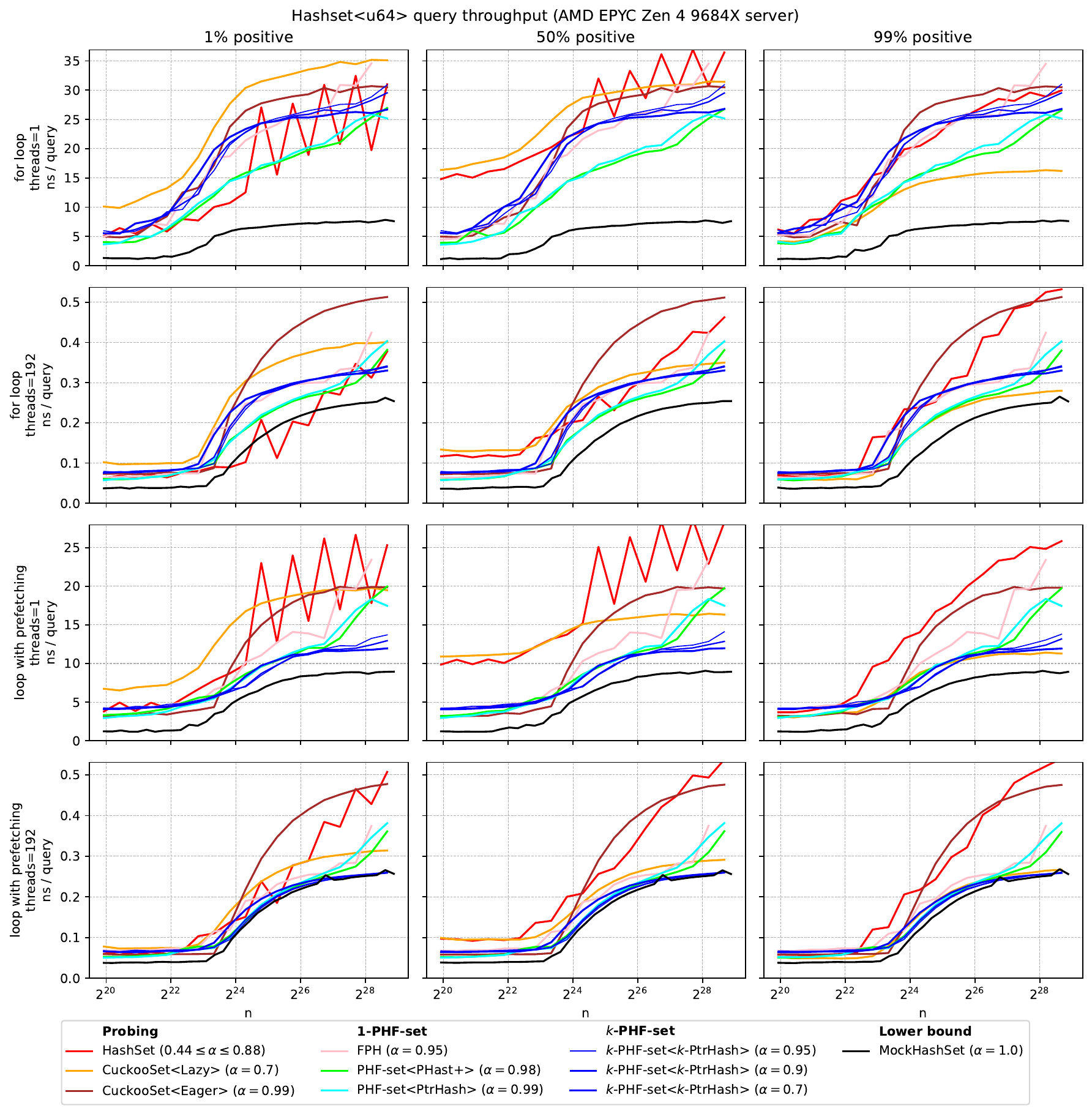}
    }
    \caption{Results on 1 and 192 threads for the throughput of a for loop and loop
    with prefetching on an AMC EPYC server CPU.}
    \label{fig:diffie}
  \end{figure}
\fi

\end{document}